\documentclass[11pt, letterpaper]{article}
\usepackage[margin=1in]{geometry}

\usepackage{amsfonts, amsmath, amssymb, amsthm}
\usepackage{bm}
\usepackage{verbatim}
\usepackage{color}
\usepackage{euscript}
\usepackage{graphicx}
\usepackage[usenames,dvipsnames]{xcolor}
\usepackage{paralist}
\usepackage{subcaption}
\usepackage{xspace}
\usepackage{wrapfig}
\usepackage{url}
\usepackage{mathtools}

\newtheorem{theorem}{Theorem}
\newtheorem{lemma}{Lemma}

\newcommand{\vare}{\varepsilon}

\newcommand{\cQ}{\mathcal{Q}}
\newcommand{\cC}{\mathcal{C}}
\newcommand{\cX}{\mathcal{X}}
\newcommand{\cY}{\mathcal{Y}}
\newcommand{\cZ}{\mathcal{Z}}

\newcommand{\sort}{O\left(\frac{N}{B}\log_{M/B}\left(\frac{N}{B}\right) \right)}
\newcommand{\bound}{\mathcal B}
\newcommand{\List}{\mathcal L}
\newcommand{\expire}{\mathit{exp}}

\newcommand{\tcr}[1]{\textcolor{red}{#1}}

\newcommand{\commented}[1]{}

\begin{document}

\title{Optimal-Cost Construction of Shallow Cuttings for 3-D Dominance Ranges in the I/O-Model} 

\date{}

\author{Yakov Nekrich\thanks{Department of Computer Science, Michigan Technological University. Supported by the National Science Foundation under NSF grant 2203278}
\and Saladi Rahul\thanks{Department of Computer Science and Automation, Indian Institute of Science. Supported in part by the Walmart Center for Tech Excellence at IISc (CSR Grant WMGT-230001) and 
Anusandhan National Research Foundation (SP/ANRF-24-0043 CRG/2023/005776).}}

%\keywords{Data Structures, I/O-efficient algorithms, Orthogonal Range Searching.} 

\maketitle

\begin{abstract}
Shallow cuttings are a fundamental tool 
in computational geometry and spatial databases
for solving offline and online range searching problems. 
For a set $P$ of $N$ points in 3-D, at SODA'14, Afshani and Tsakalidis designed an optimal $O(N\log_2N)$ time algorithm that  constructs shallow cuttings for 3-D dominance ranges in internal memory. Even though shallow cuttings are used in the I/O-model to design space and query efficient range searching data structures, an efficient construction of them is not known till now. In this paper, we design an optimal-cost algorithm to construct shallow cuttings for 3-D dominance ranges. The number of I/Os performed by the algorithm is $\sort$, where 
$B$ is the block size and $M$ is the memory size. 

As two applications of the optimal-cost construction algorithm, 
we design fast algorithms for offline 3-D dominance reporting
and offline 3-D  approximate dominance counting. 
We believe that our algorithm will  find further applications in offline 3-D range searching problems and 
in improving construction cost of data structures for 3-D
range searching problems.
\end{abstract}

%\newpage

%=================
%Introduction

\section{Introduction}

{\em Shallow cuttings} are one of the most fundamental tools used
for designing range searching data structures
in computational geometry~\cite{p08b,aal09,act14,ahz10,
nr23} and spatial databases~\cite{rt15,rt16,aaefv98}. 
The main contribution of this paper is the 
first known optimal-cost construction of shallow cuttings
for {\em 3-D dominance ranges} in the {\em I/O-model}. As 
a consequence, we obtain efficient algorithms for the {\em offline 
3-D dominance reporting} problem and the {\em offline
3-D approximate counting} problem.

\vspace{0.1 in}
\noindent\textbf{Shallow cuttings for 3-D dominance ranges.}
%\paragraph{Shallow cuttings for 3-D dominance ranges.}
Consider two 3-D points $p$ and $q$. 
A point $p$ {\em dominates} another point $q$ if and only if 
$p$ has a higher coordinate value than $q$ in each dimension.
Let $P$ be a set of $N$ points in 3-D.
A {\em $k$-level shallow cutting} of $P$~\cite{p08b} 
(See Figure~\ref{fig:3-D-sc}(a))
is a collection $\cC$ of {\em cells} of the form $(-\infty, x] \times (-\infty, y] \times (-\infty,z]$
 satisfying the following three properties:
\begin{enumerate}
\item The number of cells is  $O(n/k)$, i.e., $|\cC|=O(n/k)$.
\item Each cell in  $\cC$ contains at most $10 k$ points from $P$.
\item Any 3-D point  that  dominates at most $k$ points of $P$ 
lies inside at least one cell in $\cC$.
\end{enumerate}
\begin{figure}[h]
  \centering
  \includegraphics[scale=1.2]{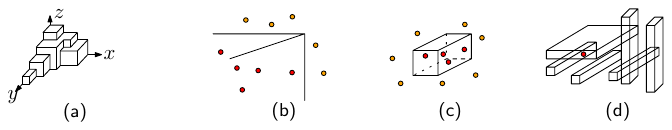}
  \caption{(a) A schematic representation of a $k$-level 
  3-D shallow cutting, (b) 3-D dominance reporting and 
  approximate counting query,
  (c) 3-D orthogonal range reporting query, (d) 3-D 
  rectangle stabbing query.}
  \label{fig:3-D-sc}
\end{figure}

\vspace{0.1 in}
\noindent\textbf{I/O-model of computation.} 
%\paragraph{I/O-model of computation.} 
We will construct shallow cuttings
for 3-D dominance ranges in the I/O-model of computation~\cite{v06}.
In the era of big-data, it is natural that the input will not 
fit completely in the internal memory, and hence, resides
in the hard-disk (or the external memory). In this model, 
the internal memory has $M$ words, and the hard-disk has been
formatted into {\em blocks} of $B$ words.
An I/O operation exchanges a block of data between the disk and the memory. 
The space of a data structure is the number of blocks occupied
in the external memory, 
and the running time of an algorithm is the number of I/Os 
performed. The CPU calculation is free. 
A natural assumption is that $M=\Omega(B)$.

\vspace{0.1 in}
\noindent\textbf{3-D orthogonal range searching.} 
%\paragraph{3-D orthogonal range searching.} 
Orthogonal range
searching and related problems are foundational problems on which the
fields of computational geometry and spatial databases
grew. Naturally, they have been 
well studied in the I/O-model of computation. 
Consider the following problems for which the state-of-the-art data structures in the I/O-model 
extensively rely on the clever use of 
shallow cuttings for 3-D dominance ranges:

\begin{enumerate}
    \item In the 
    {\em 3-D orthogonal range reporting} problem 
    (Figure~\ref{fig:3-D-sc}(c)), the input is a 
    set $P$ of $N$ points in 3-D. Preprocess $P$ into a space-efficient
    data structure, so that given an axis-aligned query box 
    $q=[x_1,x_2] \times [y_1,y_2] \times [z_1,z_2]$, the goal 
    is to quickly report $P\cap q$ (the points of $P$ lying inside
    $q$). AT FOCS'09, Afshani, Arge, and Larsen~\cite{aal09}
    presented a data structure for this problem with space 
    $O\left(\frac{N}{B}\left(\frac{\log N}{\log\log_BN}\right)^3\right)$
    and query I/Os $O(\log_B N + k/B)$, where $k$ is the 
    number of points reported.
    \item The building block 
    for the above problem is the {\em 3-D dominance 
    reporting} problem (Figure~\ref{fig:3-D-sc}(b)), where 
    the query is a 3-D dominance range $q=(-\infty,x] \times (-\infty,y] 
    \times (-\infty,z]$. At ESA'08, Afshani~\cite{p08} presented
    a data structure for this problem with space $O(N/B)$ and 
    query I/Os $O(\log_B N + k/B)$. This is optimal in terms 
    of space and query I/Os.
    \item At SoCG'09, Afshani, Hamilton, and Zeh presented a 
    general approach to design data structures to answer
    the {\em relative $(1+\vare)$-approximate counting} query
    for a general class of problems including the 3-D dominance
    setting. Here the goal is to output the 
    estimate of $|P \cap q|$. 
    An optimal solution in terms of space 
    and query I/Os was obtained.
    \item {\em Rectangle stabbing reporting} 
    (Figure~\ref{fig:3-D-sc}(d))
    is the ``reverse'' of orthogonal range reporting where the 
    input is a set of axis-aligned 3-D boxes and the query 
    is a 3-D point. Arguably, rectangle stabbing reporting 
    is as important as orthogonal range reporting and is well
    studied in the literature. The current state-of-the-art results 
    in internal memory models of computation for rectangle 
    stabbing reporting by Rahul~\cite{r15} at SODA'15 and 
    Chan, Nekrich, Rahul, and Tsakalidis~\cite{cnrt18} at 
    ICALP'18 rely on shallow cuttings for 3-D dominance 
    ranges. It is unlikely that a linear-space 
    data structure in the I/O-model can be constructed 
    without using shallow cuttings.
\end{enumerate}

\vspace{0.1 in}
\noindent\textbf{Lack of fast construction of data structures.}
%\paragraph{Lack of fast construction of data structures.} 
An important criteria in the design of the data structures 
is to optimize the I/Os required to construct the data structure.
A severe limitation of {\em all} the above solutions is the 
lack of such fast construction algorithms. In fact, quoting
from one of the above papers (Afshani, Hamilton, and Zeh~\cite{ahz10}),

\vspace{0.1 in}

{\em ``All data structures presented in this paper are static, and no efficient construction methods for these structures are known in the I/O or 
cache-oblivious model. The main obstacle is the lack of an I/O-efficient or 
cache-oblivious construction procedure for shallow cuttings.''}

\vspace{0.1 in}

We believe that our work on optimal-cost construction of 
shallow cuttings for 3-D dominance ranges 
will trigger further work to design 
fast construction of data structures for the above mentioned
problems.

\vspace{0.1 in}
\noindent\textbf{Lack of optimal solutions for 
offline 3-D orthogonal range searching.}
%\paragraph{Lack of optimal solutions for 
%offline 3-D orthogonal range searching.} 
In the {\em offline} orthogonal range searching problem, 
we are given a set $Q$ of queries upfront 
along with the point set $P$.
Many 2-D offline problems in computational geometry (including 
range searching type problems) have been optimally 
solved in the I/O-model. See the survey paper by Vitter 
(Theorem~8.1 in \cite{v06} provides a comprehensive list of 
more than ten problems in 2-D which are solved I/O-optimally).

However, to the best of our knowledge, offline 3-D orthogonal range 
searching problems have not received
the same attention in spite of their fundamental nature. 
Again, the lack of optimal-cost shallow cuttings in 3-D
is the major bottleneck. 

%\vspace{0.1 in}
%\noindent\textbf{Our results.} 
\subsection{Our results}
The main result in this paper 
is an optimal-cost construction of the $k$-level 
shallow cutting for 3-D dominance ranges in the I/O-model.

\begin{theorem}
  \label{thm:shallow1}
The $k$-level shallow cutting for 3-D dominance ranges 
on $N$ points can be 
constructed in $\sort$ I/Os. The construction cost is optimal
in the I/O-model.
\end{theorem}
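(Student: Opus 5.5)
We will build the cutting with an I/O-efficient plane sweep along the $z$-axis, following the structure of the internal-memory construction of Afshani and Tsakalidis. The lower bound is the easy half. Take $k=1$ (or any constant $k$) and $z$-coordinates equal to $x$-coordinates. The cells of the cutting, read in order of the sweep, then encode the order of the points, so sorting reduces to constructing the cutting. The standard $\Omega\left(\frac{N}{B}\log_{M/B}\frac{N}{B}\right)$ sorting lower bound in the I/O-model therefore applies. Most of the proof goes into the upper bound.

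The sweep moves a plane $z=t$ upward and maintains a planar $k$-level staircase over the points already swept. This staircase is a monotone chain of at most $O(N/k)$ corners in the $xy$-plane, and each corner carries a conflict list. When a new point is inserted, it charges every corner that dominates it in $x$ and $y$. A corner whose count exceeds $10k$ (more precisely, some constant times $k$) is ``closed'': it emits a 3-D cell with its current $z$-value, and the local part of the staircase is rebuilt so that each new corner has about $k$ points. A standard charging argument shows that each rebuild costs $\Theta(k)$ insertions, which gives $O(N/k)$ cells overall. Properties 2 and 3 follow from the invariants, as in the internal-memory proof.

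To make the sweep I/O-efficient we will use a buffer-tree / distribution-sweeping design. First we presort the points by $z$, $x$ and $y$ using $\sort$ I/Os. We then store the staircase as a buffer tree keyed by $x$, with fanout $\Theta(M/B)$. An insertion of a point $p$ must increment the counters of a contiguous range of staircase corners: those with $x\ge p_x$ and $y\ge p_y$, which form an interval because the chain is monotone. This becomes a lazy range-increment that is pushed down through buffers in batches. A leaf holds $\Theta(B)$ corners, and it detects a corner overflow only once the pending increments reach it. The difficulty is that an overflow changes the staircase, and later insertions must see the new staircase. We handle this by exploiting slack. Corners are allowed to reach $c\cdot k$ for a somewhat larger constant $c$, and a corner is closed when it reaches $k$ pending-plus-applied increments. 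This keeps the decisions valid even though they are delayed by at most one buffer flush. The slack makes the count bound $10k$ survive the delay, provided the flush granularity is smaller than $k$ points per corner. When $k<B$ the granularity may be too coarse, so for that case we switch to a separate approach: we compute the cutting for a level $k'=\Theta(B\log\ldots)$ or for constant $k$ by solving small subproblems inside memory-sized groups.

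The main obstacle is exactly this interaction between the delayed batched updates and the structural changes of the staircase, since the staircase both receives and determines the updates. Our first attempt will be a time-forward-processing style argument. We split the $z$-order into epochs of $\Theta(N/k)$... more concretely, epochs in which the staircase is frozen. Within an epoch we answer all conflict increments offline by 2-D distribution sweeping, which costs sorting I/Os per epoch. We choose epochs geometrically, so that the total number of points processed over all levels of the recursion telescopes to $\sort$. At the end of each epoch we rebuild the overflowing corners with a linear scan, and we use the slack constant to show that freezing the staircase for an epoch never lets a cell exceed $10k$ points.
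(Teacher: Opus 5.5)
Your upper bound has a genuine gap, and it is the obstacle you name yourself. Sweeping over all $N$ points creates $\Theta(N)$ events. Each event must be located in a staircase that is rewritten according to the outcomes of earlier locations. This is the bottleneck that rules out a direct I/O implementation of the Afshani--Tsakalidis sweep, and neither of your two devices removes it.

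\textbf{Buffer tree with slack.} The slack argument fails quantitatively. A buffer is emptied only when it holds $\Theta(M)$ operations, so a single corner can absorb $\Theta(M)$ increments per buffer on its root-to-leaf path before its leaf sees any of them (take $M$ consecutive points in $z$-order all lying under that corner). ``Delayed by at most one flush'' and ``granularity smaller than $k$'' therefore require $k$ to be at least the total buffer capacity along a path, which is $\Omega(M)$. Your fallback is stated only for $k<B$, and only vaguely, so the range in between is not covered. That range includes $k=\Theta(M)$, which is exactly the level you would need to reduce small $k$ to in-memory subproblems. You also do not say how a rebuild detected at a leaf is reconciled with pending increments at ancestors that were routed using the old staircase, or with neighbouring corners stored in other leaves.

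\textbf{Epochs.} Freezing the staircase is safe only if an epoch contains $O(k)$ points, since one corner can receive the whole epoch. So there are $\Omega(N/k)$ epochs. Each must either read the $\Theta(N/k)$-corner staircase or perform $\Theta(k)$ searches in it, which costs $\Theta(\min\{N,N^2/(k^2B)\})$ I/Os in the worst case, far above sorting unless $k$ is polynomially large. ``Geometric'' epochs contradict the $O(k)$ constraint, and the recursion that is supposed to telescope is never defined.

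The idea you are missing is to cut the number of dynamic search operations from $\Theta(N)$ to $O(N/k)$, and then make each one cheap.
\begin{itemize}
\item The paper uses the Nekrich--Rahul sweep, whose events are the $O(N/k)$ cell expiries. This gives only $O(N/k)$ priority-queue operations and $O(N/k)$ dominance selection queries.
\item The selection queries are paid for with a hierarchy of cuttings, $k_i=\max(B(N/B)^{\delta^i},k)$, with $O(\log_{M/B}\frac{N}{B})$ levels.
\item A selection query issued at a new outer corner of level $i$ has its answer inside that corner's parent cell from level $i-1$. It is answered by a structure built in sorting I/Os on the parent's conflict list, with query cost $O((k_{i-1}/B)^{\gamma})=O(t_i)$ for $\gamma<\delta$, which totals $O(N/B)$ per level.
\item Parent cells are located with a partially persistent B-tree.
\item The $O(\frac{N}{k}\log_B\frac{N}{k})$ search terms fall within the sorting bound because $k=\Omega(M)$.
\item For $k<M/c$, the paper builds an $(M/c)$-level cutting and solves each cell in internal memory.
\end{itemize}

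Separately, your lower-bound reduction is incomplete. A shallow cutting is an unordered collection of cells, and nothing forces an algorithm to output them ``in order of the sweep.'' You do not show how to recover the sorted order from an arbitrary valid cutting in fewer than sorting I/Os. The paper itself only asserts optimality, so this point is secondary.
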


\begin{comment}
As a bonus, we also construct an ``implicit'' representation 
of several levels in the shallow cutting. We believe
that this result can find applications in future work.
\tcr{is implicit defined here?}

\begin{theorem}
  \label{thm:shallow2}
    Let $\rho$ denote an arbitrary positive constant and let $f_i=\rho^i$ for $i=1$,$\ldots$, $\log_{\rho}N$.  We can construct implicit representations of all $f_i$-shallow cuttings in $O\left(sort(N) +\frac{N}{B}\log\log \frac{M}{B}\right)$ I/Os.  
\end{theorem}
\end{comment}

To demonstrate the impact of our new construction, 
we use shallow cuttings for 3-D dominance ranges to 
obtain efficient algorithms for
two fundamental problems: {\em offline 3-D dominance approximate 
counting} and {\em offline 3-D dominance reporting}.

\begin{theorem}\label{thm:main-approx-cnt}
\textbf{(Application-1)} Let $P$ be a set of $N$ points in 3-D and 
$Q$ be a set of 3-D dominance ranges. In the
offline 3-D dominance approximate counting
problem, for any  $q\in Q$, the goal is to return a value 
in the range $[(1-\vare)|P \cap q|,|P\cap q|]$, where 
$\vare \in (0, 1)$ is a fixed value.
There is a deterministic algorithm to solve the problem
in $O_{\vare}\left(\frac{|P|}{B}\left(\log_{M/B}\frac{|P|}
{B}\right)^{O(1)} + \frac{|Q|}{B}\log_{M/B}\frac{|P|}{B} \right)$
I/Os. The notation $O_{\vare}(\cdot)$ hides the dependency on 
$\vare$.
\end{theorem}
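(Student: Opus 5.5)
We prove Theorem~\ref{thm:main-approx-cnt} by reducing approximate counting to locating each query in a geometrically growing family of shallow cuttings, following Afshani, Hamilton, and Zeh~\cite{ahz10}. All cuttings are built with Theorem~\ref{thm:shallow1}.

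Fix $\rho=1+\Theta(\vare)$ and let $k_i=\rho^i$ for $i=0,1,\ldots,\log_\rho N$. The plan is to find, for each query $q\in Q$, the smallest $i$ such that $q$ lies in some cell of a $k_i$-level shallow cutting. Then $|P\cap q|$ is at most a constant times $k_i$ and at least $k_{i-1}$, by the coverage property of level $i-1$. This pins down the count only up to a constant factor, not $1\pm\vare$. To get $(1-\vare)$ accuracy we use the standard refinement: take a constant-factor cutting level $k$ whose cell $C$ contains $q$ (so $|P\cap C|\le 10k$), and approximate $|P\cap q|$ within the $O(k)$ points of $P\cap C$. Inside each cell we build an $\vare k$-granularity structure, namely $O(1/\vare)$ shallow cuttings of $P\cap C$ at levels $j\vare k$. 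The count is then determined up to an additive $\vare k$, which is $O(\vare)|P\cap q|$. Rescaling $\vare$ gives the stated guarantee.

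Constructing everything directly would cost $\sum_i \mathrm{sort}(N)$ over $\Theta(\log_\rho N)$ levels, which is too much. So we build levels only for $k_i$ in a geometric family with ratio $M/B$, or better, use the recursive structure: sample-free recursion where the cells of level $k$ (each of size $O(k)$) are recursively cut. One round over all cells costs $\mathrm{sort}(N)$ by Theorem~\ref{thm:shallow1} applied cell-wise, since the total size is $O(N)$. For the counting we only need, per query, a constant-factor estimate followed by the local refinement. We obtain the constant-factor estimate by a search over levels $k=2^i$ using $O(\log_{M/B}(N/B))$ rounds. In each round a batch of queries is distributed into cells, and the cutting is refined only inside relevant cells, as in distribution sweeping. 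This is where the $(\log_{M/B}(|P|/B))^{O(1)}$ factor on $|P|/B$ arises.

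The query side is point location: deciding, for each $q$, which cell of a given cutting contains it. A cell $(-\infty,x]\times(-\infty,y]\times(-\infty,z]$ contains $q$ iff its apex dominates $q$, so this is offline 3-D dominance emptiness/witness for $O(N/k)$ apexes against $|Q|$ points. For fixed $k$ we solve it by a batched sweep along $z$ with a 2-D staircase structure (a buffer tree over the $xy$-projection of the apexes). Queries flow through only $O(1)$ levels per distribution stage, so each query pays $O(\frac1B\log_{M/B}\frac{|P|}{B})$ amortized. Queries are never replicated, since each is routed to a single chosen cell at every refinement step.

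The main obstacle is keeping the query cost linear in the single logarithm while the point side pays polylog. A naive search over $\log_\rho N$ levels would make each query pay the logarithm many times. We would first try a top-down distribution. Queries descend the hierarchy of cuttings, and at each stage they are distributed among $\Theta(M/B)$ child cells in one scan, exploiting that children of a cell can be chosen to nest within it. The number of stages is then $O(\log_{M/B}(N/B))$ and each stage costs $O(|Q|/B)$. Making this nesting hold for shallow cuttings, which are not hierarchical by default, is the delicate part; we would enforce it by building the level-$k_{i+1}$ cutting separately inside each level-$k_i$ cell.
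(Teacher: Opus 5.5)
Your proposal has a genuine gap at the step that turns a constant-factor estimate into a $(1-\vare)$-approximation. With the cuttings of Theorem~\ref{thm:shallow1}, a $k'$-level cell may contain up to $10k'$ points. So if $q$ lies in a cell of the $(j\vare k)$-level cutting but in no cell of the $((j-1)\vare k)$-level cutting, you learn only $(j-1)\vare k<|P\cap q|\le 10j\vare k$. That bracket has width $\Theta(j\vare k)$, which is $\Theta(k)$ for $j\approx 1/\vare$. In the worst case the covering guarantees of shallow cuttings, however finely spaced, never certify better than a factor of about $10$.

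The missing ingredient is an additive-error counter, which the paper provides as Theorem~\ref{thm:add-approx}. It is a grid with $\alpha=\frac{3}{\vare}(n/B)^{1/3}$ planes per axis and precomputed counts of grid cells lying entirely inside the query. It is built in $O_{\vare}(sort(n))$ I/Os and undercounts $|P\cap q|$ by at most $3n/\alpha=\vare n^{2/3}B^{1/3}$. The paper pairs it at every node with an $(n^{2/3}B^{1/3})$-level cutting, and offline-find-any (Lemma~\ref{lem:find-any}) splits the queries. A deep query has $|P\cap q|>n^{2/3}B^{1/3}$, so the additive error is already a relative $\vare$ error and the query is finished there. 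A shallow query is routed to one cell $C$, where $P\cap q=P_C\cap q$, and handled recursively, with an exact in-memory base case once $n\le M/c$.

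Your cost accounting also fails. Each round of recutting every cell duplicates points by a constant factor, since a $k$-level cutting has $O(n/k)$ cells of up to $10k$ points each. So your claim that a round costs $sort(N)$ because the total size is $O(N)$ holds only for the first round. After the $\Theta(\log_{M/B}(N/B))$ nested rounds you propose, the total conflict-list size is $N\cdot c^{\Theta(\log_{M/B}(N/B))}=N\,(N/B)^{\Theta(1/\log(M/B))}$. That is polynomial in $N/B$ when $M/B$ is small, not $N(\log_{M/B}(N/B))^{O(1)}$. (Stages spaced by a factor $M/B$ would also locate $|P\cap q|$ only within a factor $M/B$, not a constant.)

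The paper's van Emde Boas-style choice of level $n^{2/3}B^{1/3}$ fixes both issues. Since $\log(n/B)$ shrinks by a factor $2/3$ per level, the recursion depth is $O(\log\log_{M/B}(N/B))$. The accumulated duplication is therefore only polylogarithmic; this is the recurrence $T(N)=c_1(N/B)^{1/3}T(N^{2/3}B^{1/3})+O(sort(N))$ of Lemma~\ref{lem:const-veb}. A query's per-level cost $O(\frac1B\log_{M/B}\frac{n}{B})$ forms a geometric series summing to $O(\frac1B\log_{M/B}\frac{N}{B})$. Such a drastic jump in level is affordable only because the additive structure resolves every depth in $[n^{2/3}B^{1/3},n]$ at once. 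With only constant-factor tools you are pushed into many closely spaced or nested levels, which is what breaks your bound.
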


The key feature in the above theorem 
is that the construction cost bound has
the term $M/B$ in the base of the logarithm which is desirable
in the I/O-model (instead of {\em base-$2$} obtained from straightforward
adaptation of internal memory algorithms
or {\em base-$B$} obtained from straightforward adaption of 
external memory data structures for online problems).
In terms of techniques, even though the final output is a 
relative approximation, the crucial ingredients in the algorithm
are an additive error counting structure that is combined with 
a non-trivial {\em van-Emde-Boas-style recursion}.

\begin{theorem}\label{thm:main-reporting}
\textbf{(Application-2)} Let $P$ be a set of $N$ points in 3-D and 
$Q$ be a set of 3-D dominance ranges. 
In the offline 3-D dominance reporting 
problem, for any $q\in Q$, the goal is to report 
$P \cap q$. The problem can be solved in $O\left(\frac{|P| + |Q|}{B}\log_{M/B}\frac{|P|}{B} + \frac{K}{B}\right)$ I/Os, 
where $K$ is the output size.
\end{theorem}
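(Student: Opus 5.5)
The plan is to reduce offline 3-D dominance reporting to offline dominance reporting inside the cells of shallow cuttings at geometrically increasing levels, built with Theorem~\ref{thm:shallow1}.

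\textbf{Step 1: cuttings at a constant number of levels.} Take $k_0 = B\log_{M/B}(|P|/B)$, or more simply $k_0=\Theta(B)$ refined as needed. We do not build all levels $k_i = 2^i k_0$, since that would cost a factor $\log N$. Instead we use cuttings $\cC_1,\dots,\cC_t$ at levels $k_i = (M/B)^{i}\cdot B$, so $t=O(\log_{M/B}(N/B))$. Each $\cC_i$ costs $\sort$ I/Os, so the naive total is too large by a factor of $t$. To avoid that, I would build them recursively: $\cC_{i}$ (large $k$) has $O(N/k_i)$ cells, each with $O(k_i)$ points. The level-$k_{i-1}$ cutting can then be built inside each cell of $\cC_i$ on only $O(k_i)$ points. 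This is the standard observation that a shallow cutting of the subset inside a cell is valid for ranges whose corner lies in that cell.

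\textbf{Step 2: query classification.} For each query $q$ we locate the smallest level $i$ such that $q$ lies in some cell of $\cC_i$; the cells are dominance ranges, so this is an offline 3-D point location. If $q$ lies in no cell of $\cC_{i-1}$, then $|P\cap q|>k_{i-1}$. Each query is then answered by scanning the $O(k_i)$ points of its cell, which costs $O(k_i/B)=O((M/B)\cdot k_{i-1}/B)$ I/Os. That exceeds $O(K/B)$ by a factor of $M/B$, so this is too wasteful. The better approach is to group queries by cell. All queries assigned to a cell $C$ with $|C|=O(k_i)$ points are solved together as a subproblem. If $k_i\le M$, the subproblem fits in memory: load the points once and scan the queries, which is free except for input and output. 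Otherwise recurse, so the recursion is on cells with geometrically shrinking size $(M/B)$ factor per level. Each query descends $O(\log_{M/B}(N/B))$ levels at $O(1/B)$ amortized cost per level via sorting/distribution, giving the $\frac{|Q|}{B}\log_{M/B}$ term. Points are duplicated across cells only by a constant factor per level (the total cell size is $O(N)$), giving $\frac{|P|}{B}\log_{M/B}$ per level summed geometrically if we charge carefully.

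\textbf{Main obstacle.} The hard part is controlling point duplication across recursion levels. The total size of the cells of a cutting is $O(N)$ with a constant at least $10$, so naive recursion blows up as $10^{\text{depth}}$. I would therefore route a query into a child cell only when its output is small relative to the child size, and report directly otherwise. A query with $|P\cap q|\ge k_{i-1}$ can afford a scan of its level-$i$ cell only if $k_i/k_{i-1}=O(1)$. I would therefore use a constant ratio between levels, but construct those cuttings implicitly within cells of size $M$, where they are free in memory. Only the $O(\log_{M/B})$ coarse levels are built with Theorem~\ref{thm:shallow1}, so that duplication and construction cost stay within $\frac{|P|}{B}\log_{M/B}\frac{|P|}{B}$.
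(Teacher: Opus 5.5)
There is a genuine gap: nothing in your plan answers a query whose output is larger than $M$ in $O(|P\cap q|/B)$ I/Os. With coarse levels spaced by a factor $M/B$, a query that lies in a cell of $\cC_i$ but in no cell of $\cC_{i-1}$ is only guaranteed $|P\cap q|>k_{i-1}=k_iB/M$. Scanning its cell therefore overpays by $\Theta(M/B)$, as you observe yourself. Recursing into that cell does not help. Within the cell's $O(k_i)$ points the query is still deep with respect to level $k_{i-1}$, so you face the same situation again: the output is only a $B/M$ fraction of the set you must search. Your constant-ratio levels exist only inside memory-sized cells, so they cover only outputs of size $O(M)$.

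The construction side is also unresolved. $\Theta(\log_{M/B}(N/B))$ separate invocations of Theorem~\ref{thm:shallow1} only give an $O(\frac{N}{B}\log^2_{M/B}\frac{N}{B})$ bound. Building level $i-1$ inside the cells of level $i$ costs $O(\frac{N}{B}\, i)$ for level $i$, which is again quadratic in total, on top of the $10^{\mathrm{depth}}$ blow-up you identified. So your closing claim that construction stays within $\frac{|P|}{B}\log_{M/B}\frac{|P|}{B}$ is unsupported.

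The paper closes this gap with two ingredients your plan lacks. First, on the conflict list of every cell it builds the query-expensive reporting structure of Theorem~\ref{thm:reporting}. That structure answers a query in $O((n/B)^{\gamma}+t/B)$ I/Os after $O(sort(n))$ preprocessing. Second, the levels are spaced polynomially rather than geometrically: $k_i=Bt_i$ with $t_{i+1}=t_i^{\delta}$ and $\gamma\le\delta$, down to $k_\ell=\max\{M,\log_{M/B}(N/B)\}$. A query that is deep at level $k_{i+1}$ but lies in a cell of level $k_i$ then pays overhead $t_i^{\gamma}=t_{i+1}^{\gamma/\delta}\le t_{i+1}<|P\cap q|/B$. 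The enormous gap between consecutive levels therefore costs nothing.

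These few levels are exactly the hierarchy that the proof of Theorem~\ref{thm:shallow1} constructs anyway in $O(sort(N))$ total, because the per-stage costs $\frac{N}{B}\log_{M/B}\frac{k_{i-1}}{B}$ form a geometric series. Queries are then located bottom-up by offline-find-any. A query that climbs $i$ levels pays $O(\frac{i}{B}\log_{M/B}\frac{N}{B})$, and this is charged to its output via $k_{\ell-i}=\Omega(i\,k_\ell)$ and $k_\ell\ge\log_{M/B}(N/B)$. In the case $k_\ell=M$, queries that are shallow at the finest level are solved in memory.

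Your geometric scheme could likely be repaired on the query side. Scan each cell once per batch of $\Theta(M/B)$ assigned queries, charge full batches to their output, and charge the one partial batch per cell to $O(N/B)$ per level. You would still need to build the $\Theta(\log_{M/B}(N/B))$ geometric levels within $O(sort(N))$, and Theorem~\ref{thm:shallow1} used as a black box does not give that.
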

The final algorithm is a combination of several ideas:
filtering search (to charge the I/Os performed for a query 
to  its output-size), 
optimal $k$-level shallow cuttings (which were not 
known before), roughly $O(\log\log N)$ levels of 
shallow cuttings, and a cleverly chosen base-case.
The non-trivial part of the analysis is bounding
the number of I/Os to perform the so-called offline-find-any 
procedures (defined later). 
%Due to lack of space,
%the proof of Theorem~\ref{thm:main-reporting} has
%been moved to Appendix~\ref{sec:reporting}.

To prove the above theorems, we make use of data structures
with specific requirements. For example, in Theorem~\ref{thm:reporting} data structures with fast 
construction I/Os are considered and in Theorem~\ref{thm:add-approx}
a data structure with large additive error is considered.
Although there exist data structures with desired 
bounds to handle 
some of these queries~\cite{gi98,ks99,r81,t12b}, we are not aware of any prior work that obtains these results
using a common framework, which might be useful for other problems.

%%%%Key results of supporting structures %%%%%%%
\begin{theorem}\label{thm:reporting}
\textbf{(Reporting, counting, and selection)} Let $P$ be a set of $N$ points in 3-D
and $\gamma$ be a fixed constant in $(0,1)$. Then 
there exists a data structure that can be constructed
in $O(sort(N))=\sort$ I/Os and then used to answer a
\begin{itemize}
\item 3-D dominance reporting query in $O((N/B)^{\gamma} + t/B)$
I/Os, where $t$ is the number of points reported.
Given a  3-D dominance query range $q$, the goal is to report 
the points in $P\cap q$.
\item 3-D dominance counting query in $O((N/B)^{\gamma})$ I/Os. 
Given a  3-D dominance query range $q$, the goal is to report 
$|P\cap q|$.
\item 3-D dominance $x,y,z$-selection query in $O((N/B)^{\gamma})$ I/Os. In an $x$-selection query, given a query point $q=(q_x,q_y,q_z)$
and an integer $k'\in [1,N]$,
the goal is to return a point $q_1=(q'_x,q_y,q_z)$ that
dominates $k'$ points in $P$. Analogously, in a 
$y$-selection query (resp., $z$-selection query), 
the goal is to return a point $q_2=(q_x,q'_y,q_z)$
(resp., $q_3=(q_x,q_y,q'_z)$) that dominates $k'$ points in $P$.
\end{itemize}
\end{theorem}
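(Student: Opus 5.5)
We will build a static structure that combines a kd-tree-like partition of $P$ with 2-D substructures in the leaves. The query bound $O((N/B)^{\gamma})$ with arbitrary constant $\gamma$ points to a fan-out-$f$ recursive grid decomposition with $f=(N/B)^{\delta}$ for a small constant $\delta$ depending on $\gamma$. All three query types will be answered by the same traversal: reporting outputs points, counting sums precomputed counts, and selection does a binary search over slab boundaries using the counts.

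First, sort $P$ by $x$ and split it into $f$ vertical slabs, each with $N/f$ points. Inside each slab, sort by $y$ and split into $f$ subslabs. Inside each subslab, sort by $z$ and split into $f$ pieces. This gives a three-level partition of depth $O(1/\delta)=O(1)$ per coordinate round. Repeat the rounds until the cells have size $O(B)$; this takes $O(\log_f(N/B))=O(1/\delta)$ rounds, a constant. Each level costs one sort, so the total construction is $O(sort(N))$. At every node we also store prefix counts over its children, and for each child the number of points in each $z$-range block. Storing counts per child for all $f$ subboundaries would cost $f^2$ words per node. Summed over the nodes of a level this is $(N/B)^{O(\delta)}\cdot$(number of nodes), which stays $O(N/B)$ blocks for $\delta$ small enough, and it can be produced by scanning the sorted lists.

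To answer a dominance query $q=(-\infty,x]\times(-\infty,y]\times(-\infty,z]$, we use a standard kd-tree-style argument. At a node whose $x$-split children are fully inside $q$ in $x$, only $y$ and $z$ constraints remain, so we recurse only into the $O(1)$ boundary children in each coordinate. Fully contained children are handled by their stored counts (or, for reporting, by scanning those cells with output-sensitive cost: each cell fully inside $q$ contributes only reported points, with $O(1)$ I/O overhead per cell). A cell is fully inside $q$ exactly when its bounding box is dominated, and the number of cells crossed by the three boundary planes is $O(f^{2})$ per round in the kd-tree manner. Over $O(1)$ rounds this gives $(N/B)^{O(\delta)}=O((N/B)^\gamma)$ after choosing $\delta$. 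The key counting fact is that a query plane crosses $O(f^{2})$ of the $f^3$ cells of a round, which mirrors the $O(n^{2/3})$ kd-tree bound with $n$ replaced by the number of blocks.

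For $x$-selection we have fixed $q_y,q_z$ and need the smallest $q'_x$ such that the count reaches $k'$. Counting is monotone in $q'_x$, so we proceed top-down. At the root, for each of the $f$ $x$-slabs we compute the count of points in that slab dominated in $y,z$ by $(q_y,q_z)$, using the same traversal restricted to the slab and costing $(N/B)^{O(\delta)}$ each. We then take prefix sums to find the slab where the count crosses $k'$, recurse into that slab only, and finish at a leaf block by sorting its points in memory. Because we recurse into only one slab per level and there are $O(1)$ levels, the cost is $f\cdot(N/B)^{O(\delta)}=O((N/B)^\gamma)$. The $y$- and $z$-selection queries are symmetric.

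The main obstacle is making the crossing bound rigorous in the I/O setting. In particular, the reporting cost for fully-contained cells must be charged to output, since partially filled blocks must not cost more than $O(1)$ each and $O(f^3)$ such cells per level would be too many. I would resolve this by counting only the cells touched by the boundary walk: fully covered subtrees are reported by contiguous scans of their leaves, which are stored consecutively in the sorted order, so each covered subtree costs $O(1+t'/B)$, and there are at most $(N/B)^{O(\delta)}$ of them.
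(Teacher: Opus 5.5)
Your proposal has a genuine gap: the crossing bound is false, and the claimed query cost falls with it.

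\textbf{Why the crossing bound fails.} In your tree the plane $z=q_z$ meets all $f$ $x$-slabs, all $f$ $y$-subslabs of each, and one $z$-piece inside each. That is $f^2$ cells per round, and the same thing happens again inside every one of those cells. The number of rounds is a constant, but it equals $\log_{f^3}(N/B)=1/(3\delta)$. So the plane meets at least $(f^2)^{1/(3\delta)}=(N/B)^{2/3}$ leaves. Decreasing $\delta$ only adds rounds; the exponent stays at $2/3$, not $O(\delta)$. This is exactly the $n^{2/3}$ kd-tree bound you say your estimate mirrors.

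\textbf{A concrete counterexample.} Dominance queries attain this bound. Take points on a (perturbed) uniform $N^{1/3}\times N^{1/3}\times N^{1/3}$ grid and put the apex in the middle. Then $\Omega((N/B)^{2/3})$ leaves straddle the face of $q$ in the plane $z=q_z$: each contains points inside $q$ and points outside it. Your traversal must open every one of them. Hence counting costs $\Omega((N/B)^{2/3})$ for every $\delta$, which contradicts the required $O((N/B)^{\gamma})$ whenever $\gamma<2/3$. The extra per-child counts you store do not help, because your query still recurses into every boundary child.

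\textbf{Reporting and selection.} Both bounds are derived from the same crossing estimate, so they are unsupported as well. Selection has a further problem. Your routine assumes that, inside the chosen $x$-slab, you can keep refining by $x$. In your tree the next $x$-splits occur three levels lower, separately inside each $(y,z)$-subcell, and generally with different boundaries. So there are no common slabs over which to take prefix sums.

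\textbf{The missing idea: replication.} The paper does not nest cells inside cells. Its construction works as follows.
\begin{itemize}
\item Each node uses an $\alpha\times\alpha\times\alpha$ grid whose planes are quantiles of the node's whole point set, taken in each coordinate independently.
\item It stores the conflict list and size of every grid cell.
\item It then builds a complete structure recursively on each of the $3(\alpha-1)$ full $x$-, $y$- and $z$-slabs, stopping below $B\alpha^3$ points.
\end{itemize}
Each point lies in $3^h=O(1)$ nodes, so construction stays $O(sort(N))$.

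\textbf{The paper's query.} At a node, all cells strictly inside $q$ are handled in $O(\alpha^3)$ I/Os. The rest of $q$ lies in the three slabs through the cell containing the apex. So there are only three recursive calls, on disjoint pieces that are again dominance ranges. The branching factor is $3$ rather than $\Theta(f^2)$. With $\alpha=(N/B)^{\gamma/c}$ and depth $h=O(1/\gamma)$, only $3^h$ nodes are visited, giving $O((N/B)^{\gamma})$ I/Os (plus $t/B$ for reporting). Selection is then obtained from counting by binary search.
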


\begin{theorem}\label{thm:add-approx}
\textbf{(Offline additive error counting)} 
Let $P$ be a set of $N$ points 
in 3-D and $Q$ be a set of 
3-D dominance ranges.
 In the offline additive error counting problem, for each query $q\in Q$, the goal is to return a value 
$n_q$ such that $|P\cap q|-n_q \leq \vare N^{2/3}B^{1/3}$.
The data structure can be constructed using 
$O_{\vare}(sort(N))$ I/Os and the queries are answered 
by using
$O_{\vare}\left(sort(N) + sort(|Q|)\right)$ I/Os. 
\end{theorem}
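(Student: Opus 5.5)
We prove Theorem~\ref{thm:add-approx} with a grid-based construction. We partition the points into slabs, build the structure of Theorem~\ref{thm:reporting} on each slab, and answer the queries in a batched way.

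\textbf{Grid.} Set $s = \lceil (N/B)^{1/3}/\vare \rceil$. By sorting, partition $P$ into $s$ slabs along the $x$-axis, each containing $N/s$ points. A dominance range $q=(-\infty,q_x]\times(-\infty,q_y]\times(-\infty,q_z]$ fully contains every slab lying strictly to the left of the slab containing $q_x$. It only partially intersects the slab containing $q_x$. We make the convention of ignoring that partial slab. Then $n_q$ is a lower bound on $|P\cap q|$, and the error is at most $N/s \le \vare N^{2/3}B^{1/3}$. This fixes $s$.

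\textbf{Reduction to 2-D counting.} The points counted are those in the slabs to the left of the slab containing $q_x$ that satisfy $y\le q_y$ and $z\le q_z$. This is a 2-D dominance count restricted to a prefix of slabs. In the $O(sort(N))$ budget we cannot afford prefix structures per slab. Instead, we introduce a second level of error by gridding in $y$ and $z$ as well. Partition $P$ by $y$-rank into $s$ slabs and by $z$-rank into $s$ slabs. Form the $s\times s\times s$ grid of cells, whose entry is the number of points in each cell.

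\emph{Claim:} for each $q$, take $n_q$ to be the number of points in the grid cells entirely dominated by $q$'s snapped corner. Every point of $P\cap q$ that is missed lies in one of the three boundary slabs of $q$. Each boundary slab contains $N/s$ points, so the error is at most $3N/s$. We adjust $s$ by a constant factor to absorb the 3.

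\textbf{Computing cell counts and prefix sums.} The grid has $s^3 = O_\vare(N/B)$ entries, i.e.\ $O_\vare(N/B^2)\cdot$... words. More precisely, $s^3=(N/B)/\vare^3$ words, which is $O_\vare(N/B)$ words and thus fits in $O_\vare(N/B^2)$ blocks. We compute the cell of each point by labeling it with its three slab indices, using three sorts. We then sort points by cell index to obtain the counts. We turn these into 3-D prefix sums with three passes, one along each axis. Each pass is a scan after appropriate sorting or transposition of the $s^3$ array, costing $O_\vare(sort(N))$ I/Os.

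\textbf{Answering queries.} For each query we find its three slab indices. We do this by merging the sorted query coordinates with the slab boundaries, which costs $O(sort(|Q|)+sort(N))$. We then attach the prefix-sum value at index $(i_x-1,i_y-1,i_z-1)$. To do so, we sort the queries by their linearized cell index and scan them together with the prefix-sum array. This totals $O_\vare(sort(N)+sort(|Q|))$ I/Os.

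\textbf{Main obstacle.} The main delicacy is the exact error bound $\vare N^{2/3}B^{1/3}$. The slab approach actually gives error $O(N/s)$ with $s^3$ words of storage. Choosing $s=(N/B)^{1/3}$ matches the target $N^{2/3}B^{1/3}=N/(N/B)^{1/3}$ exactly. Hence the $B$-dependence in the statement is precisely what keeps the prefix-sum array at $O(N/B)$ size and its construction within sorting cost. Ties in coordinates, and the requirement that the error be one-sided ($n_q\le|P\cap q|$), are handled by snapping queries downward to slab boundaries, as done above.
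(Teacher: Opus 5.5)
Your proposal is correct and follows essentially the paper's proof: the paper builds only the root grid node of its counting structure, with $\alpha=\frac{3}{\vare}(N/B)^{1/3}$ rank-balanced slabs per axis, and returns $n(C_q)$, the number of points in grid cells fully inside the query. It bounds the error by $3N/\alpha=\vare N^{2/3}B^{1/3}$, which is exactly your $s\times s\times s$ argument. Your three prefix-sum passes over the $O_\vare(N/B)$-entry count array just make explicit the step the paper leaves as ``precomputed,'' and the one-dimensional warm-up paragraph can be dropped, since on its own it would still require exact 2-D counting.
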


Our solution for additive error 
query can be extended to handle smaller values of error
(such as $\vare\sqrt{NB}$). However, for our purpose
an additive error of $\vare N^{2/3}B^{1/3}$ suffices.

The remaining paper is structured as follows. In Section~\ref{sec:prelim}, 
we discuss some preliminaries. In Section~\ref{sec:prior-const}, we 
give an overview of two previous constructions of shallow cuttings for 
3-D dominance ranges in internal memory. Then, in Section~\ref{sec:our-algo} 
we present our optimal-cost construction in the I/O-model. This is followed
by two applications of our construction, offline 3-D dominance approximate counting 
(in Section~\ref{sec:approx}) and offline 3-D dominance reporting (in Section~\ref{sec:reporting}). In Section~\ref{sec:support}, we present the details of our supporting structures (Theorem~\ref{thm:reporting}
and Theorem~\ref{thm:add-approx}). Finally, potential directions for future work are discussed in Section~\ref{app:future}.

\section{Preliminaries}\label{sec:prelim}
\noindent\textbf{Definitions.}
%\paragraph{Definitions.} 
The {\em conflict list} of a cell $C \in \cC$ is defined as the points of 
$P$ that are inside $C$. The {\em apex point} of a cell or a 3-D dominance range
$(-\infty, x] \times (-\infty, y] \times (-\infty,z]$ is defined as 
$(x,y,z)$. The {\em depth} of a point $p=(x_p, y_p, z_p)$ w.r.t. 
a pointset $P$ is the number of points of $P$ lying in the cell 
$(-\infty, x_p] \times (-\infty, y_p] \times (-\infty,z_p]$.
%\tcb{A point $p=(x_p, y_p, z_p)$ 3-D dominates another point $q=(x_q,y_q,z_q)$
%if and only if $x_p > x_q, y_p > y_q \text{ and } z_p > z_q$.}

\begin{figure}[h]
  \centering
  \includegraphics[scale=1]{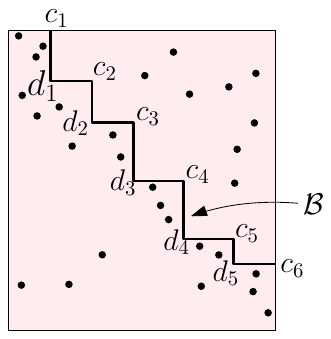}

  \caption{\label{fig:2-D-sc} An example of a $k$-level 2-D shallow cutting. 
  Here $k=3$, so each inner corner $d_i$ dominates three points and 
  each outer corner $c_i$ dominates six points. }
\end{figure}

\vspace{0.1 in }
\noindent\textbf{Shallow cuttings for 2-D dominance ranges.}
%\paragraph{Shallow cuttings for 2-D dominance ranges.} 
As a warm-up, we first introduce shallow cuttings for 
{\em 2-D dominance ranges}. Let $P$ be a set 
of $N$ points in 2-D. The collection $\cC$ 
of cells in the $k$-level shallow cutting of $P$  
will be of form $(-\infty,x] \times (-\infty, y]$.
Consider the union of the cells in $\cC$. 
We will focus on constructing the {\em outer boundary} 
of the union that is shown in Figure~\ref{fig:2-D-sc}. 
The outer boundary $\bound$ is a $1$-dimensional, 
monotone, and orthogonal curve. Specifically, 
$\bound$ starts from
$y=+\infty$ and consists of alternating
vertical and horizontal segments. Let the sequence of endpoints 
of $\bound$ be
$c_1,d_1,c_2,d_2,\ldots, d_{t-1} \text{ and } c_t$. 
The $c_i$'s are the {\em outer corners} and the $d_i$'s are
the {\em inner corners}. The number of points of $P$ 
dominated by any outer corner (resp., inner corner) will 
be at most $2k$ (resp., equal to $k$). 
The $k$-level shallow cutting for 2-D dominance ranges
can be constructed in $O(sort(N))$ I/Os.
(Note that the $2k$ bound is tighter than the $10k$ bound 
we use for 3-D shallow cuttings.)

\vspace{0.1 in}
\noindent\textbf{Offline-find-any procedure.}
%\paragraph{Offline-find-any procedure.} 
The input to the {\em offline-find-any} problem is a collection of 
cells $\cC$ in the $k$-level shallow cutting of $P$,
for any $1\leq k\leq n$, and a set $Q$ of 3-D query points.
In the offline-find-any procedure, 
for each $q\in Q$, the goal is to find {\em any} cell in $\cC$
that contains $q$ or report that none of them contain $q$.
The distribution-sweeping approach in the survey paper of 
Vitter~\cite{v06} can be used to solve this problem optimally.

\begin{lemma}\label{lem:find-any}
Given a set $\cC$ of cells in 3-D and a set $Q$ of 
query points in 3-D, the offline-find-any procedure
requires 
$O\left( \left( \frac{|\cC| + |Q|}{B}\right)\log_{M/B}\frac{|\cC|}{B}\right)$ I/Os.
\end{lemma}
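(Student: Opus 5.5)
The plan is to use distribution sweeping, as in Vitter's survey, exploiting that every cell $C\in\cC$ is a dominance range $(-\infty,x_C]\times(-\infty,y_C]\times(-\infty,z_C]$. Containment is then a pure dominance test: $q\in C$ if and only if $q_x\le x_C$, $q_y\le y_C$ and $q_z\le z_C$. We only need to find, for each $q$, one witness cell, or certify that none exists. A sufficient certificate is the cell that maximizes $z_C$ among all cells with $x_C\ge q_x$ and $y_C\ge q_y$: $q$ lies in some cell if and only if this maximum is at least $q_z$. The problem therefore reduces to an offline 2-D dominance max query. Each $q$ asks for the maximum weight, with weight $z_C$, among apex points lying in the quadrant $[q_x,\infty)\times[q_y,\infty)$, and we return the arg-max cell.

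To answer these queries I will sort all apices and query points by $x$ and run distribution sweeping on $x$. First I split the $x$-range into $\Theta(M/B)$ slabs, each containing an equal number of apices; this gives $O(|\cC|/B)$ slabs over the whole recursion, so the depth is $\log_{M/B}(|\cC|/B)$. Next I sweep in decreasing $y$, merging the $y$-sorted stream of apices and queries. During the sweep I keep, for each slab, the running maximum $z$ of apices already swept, meaning those with larger $y$, together with the cell attaining it. This uses $O(M/B)$ words of memory.

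When a query $q$ in slab $j$ is encountered, it takes the maximum over the slabs $j'>j$. These slabs fully satisfy $x_C\ge q_x$, and the sweep order guarantees $y_C\ge q_y$. The partial slab $j$ is handled recursively, and $q$ keeps the best candidate found across all levels. Queries and apices are distributed into the slabs' subproblems by scanning, so each level costs $O((|\cC|+|Q|)/B)$ I/Os.

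The main obstacle is keeping the recursion depth controlled by $|\cC|$ rather than by $|\cC|+|Q|$. Slabs are defined by apices only, so a slab at any level may contain many queries but few cells. I will stop the recursion once a slab holds at most $M$ apices. At that point the apices are loaded into memory, and the slab's queries are streamed through and answered by brute force in internal memory, costing a linear scan. Queries therefore pay $O(1/B)$ per level over $O(\log_{M/B}(|\cC|/B))$ levels. Adding the initial sorting by $x$ and by $y$, which I will charge as $O(((|\cC|+|Q|)/B)\log_{M/B}(|\cC|/B))$ by again bucketing queries only against apex-defined splitters, the total matches the claimed bound. If the initial sort of $Q$ by itself exceeds this, I will instead distribute queries top-down alongside the apices, never sorting $Q$ globally.

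Correctness follows because every cell with $x_C\ge q_x$ lies either in a slab strictly to the right of $q$'s slab at some level, or in $q$'s base-case slab. Taking the maximum of $z_C$ over all of these therefore yields the global maximum.
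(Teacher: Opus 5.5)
Your approach is the one the paper has in mind. The paper proves this lemma only by citing distribution sweeping from Vitter's survey, and your construction is a correct instantiation of it: the reduction to offline 2-D dominance-max queries on the apices, apex-defined $x$-slabs, a decreasing-$y$ sweep keeping a per-slab running maximum of $z_C$ (apices with $y_C=q_y$ must be swept before $q$), and an in-memory base case once a slab holds $O(M)$ apices. Your handling of the recursion depth, governed by $|\cC|$ alone, is also right.

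The step that does not hold as written is the cost of producing the ordered query stream. At every level, your sweep needs each query interleaved with the subproblem's apices in exact $y$-order. Sorting $Q$ costs $\Theta(\frac{|Q|}{B}\log_{M/B}\frac{|Q|}{B})$, which is not $O(\frac{|\cC|+|Q|}{B}\log_{M/B}\frac{|\cC|}{B})$ when $|Q|\gg|\cC|$. Neither of your two remedies is worked out:
\begin{itemize}
\item Distributing $Q$ top-down by $x$ never produces a $y$-order inside a subproblem, so the per-level merge is no longer a linear scan.
\item Bucketing $Q$ between consecutive apex $y$-values, which is the granularity your unmodified sweep needs, takes $\lceil\log_{M/B}(|\cC|+1)\rceil$ distribution passes. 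That adds a $\frac{|Q|}{B}\log_{M/B}B$ term, which is outside the budget without a tall-cache assumption.
\end{itemize}

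The simplest repair is batching the queries:
\begin{itemize}
\item If $|\cC|\le M/c$, load all apices into memory and stream $Q$ past them, using $O((|\cC|+|Q|)/B)$ I/Os.
\item Otherwise, split $Q$ into $\lceil|Q|/|\cC|\rceil$ batches of at most $|\cC|$ queries. Run your procedure, including a full sort of the batch, on $\cC$ together with each batch.
\end{itemize}
Each run has input size at most $2|\cC|$ and costs $O(\frac{|\cC|}{B}\log_{M/B}\frac{|\cC|}{B})$ I/Os. Summed over the batches, this gives $O(\frac{|\cC|+|Q|}{B}\log_{M/B}\frac{|\cC|}{B})$.

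Alternatively, keep a single run and make your ``apex-defined splitters'' idea coarse. Bucket $Q$ only against every $(M/c)$-th apex $y$-value, which fits within the budget. Then change the sweep to process one bucket at a time, holding that bucket's at most $M/c$ apices in memory, so queries inside a bucket need no order. A stable distribution by $x$ preserves this bucket order at every level of the recursion.
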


\section{Overview of previous construction algorithms
%of shallow cuttings  for dominance ranges
}
\label{sec:prior-const}
In this section we give an overview of a couple
of previous constructions whose ideas will be required
in our final solution.
%\noindent\textbf{Afshani and Tsakalidis's construction (AT-construction).}
\subsection{Afshani and Tsakalidis's construction (AT-construction)}
The problem becomes challenging in 3-D. 
At SODA'14, Afshani and Tsakalidis~\cite{at14} obtained an 
optimal internal memory algorithm that constructs a 
$k$-level shallow cutting in
$O(N\log_2 N)$ time. Their algorithm constructs 
the outer boundary $\bound$ of the union of the cells in 
$\cC$. In 3-D the outer boundary is a $2$-dimensional, 
monotone, and orthogonal surface (see Figure~\ref{fig:3-D-sc}(a)). 
The algorithm combines the  sweep-plane 
approach with a method of maintaining a 2-D $k$-level 
shallow cutting under deletions. 
Let $z_1,\ldots,z_N$ be the sorted sequence of $P$ 
in decreasing order of their $z$-coordinate values.
Starting from $z=z_1$, a plane parallel to $xy$-plane
is moved in the $-z$ 
direction. The {\em invariant} in the algorithm is to
maintain a 2-D shallow cutting for the 
$xy$-projections of points of $P$ below the sweep-plane,
by ensuring that  the depth of each (inner or outer) corner 
in the 2-D shallow cutting 
lies in the range $[k,10k]$.
When the sweep-plane reaches $z_i$, then $z_i$ is deleted 
and if the invariant is violated, then the 2-D shallow cutting of $xy$-projections 
of $z_{i+1}, \ldots, z_n$ is updated 
via a procedure called {\em patching}. See Figure~\ref{fig:patching}.
The precise details 
of the patching procedure are not necessary for our final 
algorithm. 
\begin{figure}
  \centering
  \includegraphics[scale=0.7]{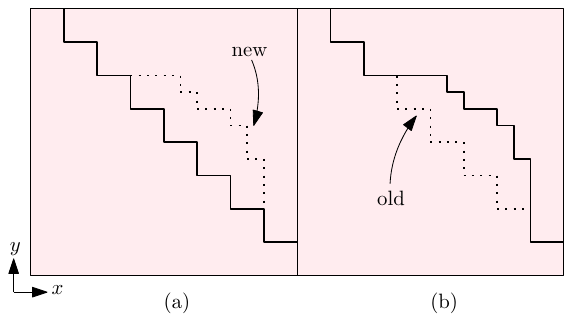}
  \caption{\label{fig:patching} (a) Outer boundary 
  (in bold) before the patching procedure. The dotted 
  boundary is the new curve created during the 
  patching procedure. (b) Outer boundary (in bold) after the 
  patching procedure. The dotted curve was
  part of the outer boundary before patching. Each outer 
  corner created during the patching procedure corresponds
  to the apex point of a 3-D cell in the $k$-level cutting of $P$.}
\end{figure}

The running time of their algorithm can be summarized by 
the following equation:
\begin{equation}
T_1(N) = \underbrace{T_{sort}(N)}_{\text{sorting}} + \underbrace{\Theta(N)\cdot T_{pred}(N/k)}_{\text{bottleneck 1}} + \underbrace{O(N/k)\cdot T_{aux}(N)}_{\text{bottleneck 2}}
\end{equation}

There are two bottlenecks in adapting their algorithm in the 
I/O-model. At any point in the sweep-plane, 
let $\List$ be the ordering of the corners in the 
2-D shallow cutting
based on their $x$-coordinate values. 
Each time a point $z_i \in P$
is deleted, a predecessor search is performed on $\List$. Also,
the patching procedure leads to insertion and deletion of corners
into $\List$.  The predecessor search is performed 
to identify all the corners in $\List$ dominating $z_i$ and 
then to update the conflict lists of those corners 
(at no extra asymptotic cost).
This is the first bottleneck, since it is not 
possible to perform $\Theta(N)$ updates and $\Theta(N)$ predecessor queries 
to the list $\List$ in $sort(N)$ I/Os. (In fact, in \cite{bfgmmt14} it was established that 
the easier problem of batched predecessor queries requires
$T_{pred}(N)=\Omega(\log_BN)$ I/Os per query when the preprocessing cost  is bounded 
by a polynomial in $N$.)
It is not clear how to reduce the number of predecessor queries
to $o(N)$, and the lower-bound discards
the standard strategy in the I/O-model 
of performing predecessor search queries in batches.

The {\em second bottleneck} is the 
lack of I/O-efficient 
data structures for performing the patching procedure.
The patching procedure involves constructing data structures
for  3-D dominance $x$-selection and  dominance $y$-selection
queries on $P$, and querying them $O(N/k)$ times. 
For our purpose, the data structures have to be constructed 
in $O(sort(N))$ I/Os. In Section~\ref{sec:support}, we present
such data structures with 
 query I/Os $T_{aux}(N)=O((N/B)^{\gamma})$ for some $\gamma \in (0,1)$ 
 (Theorem~\ref{thm:reporting}). Therefore, the patching will require
$O\left(\frac{N}{k}\left(\frac{N}{B} \right)^{\gamma}\right)$ I/Os, 
which is  expensive. 

%\vspace{0.1 in}
%\noindent\textbf{Nekrich and Rahul's construction (NR-construction).}
\subsection{Nekrich and Rahul's construction (NR-construction)}
At SODA'23, Nekrich and Rahul~\cite{nr23} designed a data structure for 4-D 
dominance reporting query by
constructing several shallow cuttings for 3-D dominance 
ranges. However, for their application, the conflict lists 
of the cells were not needed, and only the cells needed
to be constructed. They modified the
AT-construction to design an algorithm 
whose running time is proportional to the number of cells
constructed (specifically 
$O(N\log_2N)$ cells were constructed on different subsets of $P$ 
in $O(N\log_2^5N)$ time). 

For a cell $C$ in the shallow cutting with apex point 
$(x,y,z)$, let $\expire(C)$ be the largest $z$-coordinate 
such that $(x,y,\expire(C))$ dominates less than $k$ points of $P$.
Then $\expire(C)$ is defined as the {\em expiry} of $C$. 
 For a cell $C$, its expiry $\expire(C)$ can be computed via a 3-D dominance
 $z$-selection query on $P$. 
Similar to AT-construction approach, starting from $z=+\infty$, the algorithm sweeps a plane parallel to the $xy$-plane in the $-z$ direction 
and the same invariant is maintained. 
The key difference is in the {\em event points} visited by 
the sweep-plane algorithm. In the NR-construction, 
the  event points
will be the $\expire(C)$ values of all cells $C$ in the shallow cutting,
which are stored in a simple priority-queue $\cQ$. 

At each time step,
we find the event  with the largest $z$-coordinate $z_{\max}$ in $\cQ$
and set the $z$-coordinate of the sweep-plane to $z_{\max}$. When the 
cell $C$ corresponding to $z_{max}$  expires, 
(inner and outer) corners corresponding to $C$ have to be removed from 
the 2-D shallow cutting at $z_{max}$. Therefore, the patching procedure is performed. After finishing 
the patching procedure, (a) the events corresponding to the new inner
corners in the 2-D shallow cutting are inserted into $\cQ$ and the events 
corresponding to the deleted corners in the 
 2-D shallow cutting are removed from $\cQ$, (b) for each newly created 
 outer corner $(x,y)$ in the patching procedure, a new cell with apex point
 $(x,y,z_{max})$ gets {\em created}. We define $cre(C)=z_{max}$ for 
 each cell $C$ created.
 The algorithm  finishes when  the $z$-coordinate of the sweep-plane reaches
  $z=-\infty$.
We omit the details of the NR-construction
algorithm that are not needed to obtain  our result. 
The following lemma summarizes the properties of the NR-construction that 
are relevant for our algorithm.
%The precise details of NR-construction are not essential. 
%The following lemma summarizes what is needed  for our final algorithm. 

\begin{lemma}\label{lem:summary}
Nekrich and Rahul's construction~\cite{nr23} 
 of $k$-level shallow cuttings for 3-D dominance ranges 
 can be summarized as follows:
 \begin{enumerate}
 \item  It involves 
construction of a priority queue and 3-D dominance $x,y,z$-selection
data structures.\footnote{Nekrich and Rahul~\cite{nr23} originally
used {\em 3-D dominance range counting} data structures as well,
where the goal is to report $|P \cap q|$ for
a 3-D dominance range $q$. However, both queries are almost equivalent: we can answer a 3-D dominance range selection query using $O(\log N)$ range counting queries (binary search). Conversely, we can also answer 
a 3-D dominance range counting query  via $O(\log N)$  dominance $z$-selection queries.}
\item 
The number of updates and queries on the priority-queue 
is $O(N/k)$ and the number of 
3-D dominance $x,y,z$-selection queries is $O(N/k)$.

\item In each selection query, if the query point is $q=(q_x,q_y,q_z)$,
then the sweep-plane is at $q_z$ and $(q_x,q_y)$ is one of the corners
in the 2-D shallow cutting of $P$ at $q_z$ after patching.
The query integer $k' \leq 10k$ will be such that for an $x$-selection query, the reported 
point $(q_x',q_y,q_z)$ will have $q'_x > q_x$. However, 
for the $y$-selection and the $z$-selection queries, the reported 
points $(q_x, q'_y,q_z)$ and $(q_x,q_y, q'_z)$, respectively, 
will have $q'_y< q_y$ and $q'_z < q_z$.
\end{enumerate}
\end{lemma}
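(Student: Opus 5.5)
The plan is to prove Lemma~\ref{lem:summary} by re-reading the NR-construction~\cite{nr23} as the sweep-plane procedure described above and accounting for every operation it performs. Each statement then becomes a claim about one of three things: the structures the sweep touches, how often it touches them, and the form of the arguments it passes. I rely on the correctness of the AT/NR invariant, namely that at each sweep height the 2-D shallow cutting of the $xy$-projections of the points below the plane has all corner depths in $[k,10k]$. I also rely on the fact, proved in~\cite{at14,nr23}, that the resulting collection of created cells is a valid $k$-level shallow cutting. That fact gives $O(N/k)$ cells.

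For item~1, I will list the operations of the sweep. The next event is obtained by extracting the maximum from $\cQ$. Computing $\expire(C)$ for a new cell $C$ is a single $z$-selection query at the apex of $C$ with $k'=k$. Patching needs, starting from a corner $(x,y)$ of the current 2-D staircase at height $z_{\max}$, the farthest $x'$ (resp.\ $y'$) at which the depth stays within the allowed range. These are exactly 3-D dominance $x$-selection and $y$-selection queries on $P$ with the third coordinate fixed to $z_{\max}$. Counting queries, which NR originally used, are replaced by selection queries through the reduction in the footnote. Nothing else is required, which settles item~1.

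For item~2, I will use a charging argument. Every outer corner created by patching spawns one cell. Since the output has $O(N/k)$ cells, at most $O(N/k)$ outer corners are ever created. Inner corners alternate with outer corners on the staircase, so the number of inner corners created is also $O(N/k)$. Each corner is inserted into and deleted from $\cQ$ at most once, so there are $O(N/k)$ priority-queue operations. A single patching step advances along the staircase one corner at a time, and each step issues $O(1)$ selection queries, each of which creates one new corner. The selection queries issued by patching are therefore $O(N/k)$ in total, and we add one $z$-selection per created cell for its expiry.

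For item~3, I will examine the geometry of each query type. Every query is issued while the sweep plane sits at some $q_z$ and starts at a corner $(q_x,q_y)$ of the current, already-patched staircase. Depth is monotone in each coordinate, which gives the directions. The patch moves rightward in $x$ to increase the depth up to the target $k'$, so it returns $q'_x>q_x$. It moves downward in $y$ to reduce the depth, so it returns $q'_y<q_y$. The expiry is the height at which the depth first falls below $k$ as $z$ decreases, so $q'_z<q_z$. All targets $k'$ are thresholds of the invariant, so $k'\le 10k$.

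The main obstacle is item~2, specifically the claim that a single patching step costs $O(1)$ queries per corner it creates. If the NR patching also performs queries that create no new corner, such as probes that fail, I will instead charge those probes to the corner that was deleted and triggered the patch. There are at most as many deletions as creations, so the $O(N/k)$ bound still holds.
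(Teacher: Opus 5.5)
The paper gives no proof of this lemma. It is stated as a summary of the NR-construction and justified only by the citation to \cite{nr23}. Your route is different in that you unpack it. You import from \cite{at14,nr23} only the depth invariant and the $O(N/k)$ bound on created cells. You then derive the operation counts by charging, and the signs in item~3 from monotonicity of depth. This makes clear exactly what is being imported. It also makes explicit the fact the later sections rely on: the depth of $(x,y)$ among the points below the plane at height $z$ equals the 3-D depth of $(x,y,z)$ in $P$. That is why static 3-D selection structures built once on $P$ suffice.

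The cost of unpacking is that the per-patch accounting becomes yours. You assert, rather than show, that a patch issues $O(1)$ selection queries per new corner plus $O(1)$ extra. Your fallback of charging failed probes to the triggering corner needs exactly this $O(1)$ overhead per patch, and that is a property of NR's routine.

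You can make the surrounding counts airtight yourself:
\begin{itemize}
\item No old outer corner covers points just outside the expired inner corner $(x_i,y_{i+1})$. Outer corners to its left have $x\le x_i$, and those to its right have $y\le y_{i+1}$. So every patch creates at least one new outer corner, and the number of patches is at most the number of cells, $O(N/k)$.
\item The new corners of a patch alternate, so a patch creates at most one more inner corner than outer corners. This is what turns your alternation remark into an $O(N/k)$ bound on created inner corners.
\end{itemize}

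Finally, the events that trigger patching are expiries of inner corners; the description above inserts events for new inner corners into $\cQ$. An inner corner is dominated by both neighbouring outer corners, so its depth falls below $k$ no later than theirs. Issuing one $z$-selection per created inner corner, rather than per cell apex, leaves the $O(N/k)$ count and item~3 unchanged.
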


The running time of the NR-construction can be summarized 
as follows: 
\begin{equation}
T_2(N) = \underbrace{T_{\mathcal{Q}}\left(\frac{N}{k}\right)}_{\text{queue}} + \underbrace{O\left(\frac{N}{k}\right)\cdot T_{sel}(N)}_{\text{bottleneck}}
\end{equation}
Similar to the second bottleneck in the
AT-construction, 
the bottleneck in the NR-construction
is the I/Os required to perform the dominance 
selection queries (with $O(sort(N))$ construction I/Os
for the data structures, we have
 $O\left(\frac{N}{k}\right)\cdot T_{sel}(N)= O\left(\frac{N}{k}\left(\frac{N}{B} \right)^{\gamma}\right)$ I/Os
 which is expensive).   

%\vspace{0.1 in}
%\noindent\textbf{Other constructions.} 
\subsection{Other constructions}
At ICALP'14, Afshani, Chan, and Tsakalidis~\cite{act14}
proposed an AT-construction variant (by sweeping upwards) for 
constructing shallow cuttings for 3-D dominance ranges in $O(N\log_2\log_2 N)$ 
time in the word-RAM model. At STOC'96, Vengroff and Vitter~\cite{vv96} 
employed a structure that is similar to
3-D shallow cuttings to answer 3-D
orthogonal range reporting queries. Unfortunately, the number of I/Os required
to construct their data structure was not considered in their work.

Construction of shallow cuttings for
halfspaces in 3-D has been another active area of research.
However, typically the techniques and tools used for these
problems (such as vertical decomposition, $\vare$-cuttings, 
planar graph separators) have been different from the ones used for 
3-D dominance ranges. For example,
at SoCG'15, Chan and Tsakalidis~\cite{ct15} 
presented an optimal deterministic
internal memory algorithm for the construction of shallow cuttings for
halfspaces in 3-D. 
%Their hierarchical construction requires the 
%cells in the cutting to be interior-disjoint throughout the 
%algorithm (unlike the overlapping cells constructed in our work). 
Their hierarchical construction relies on the fact that the cells
of the shallow cutting do not intersect (to be precise, 
they are interior-disjoint). 
While this is true in the case of shallow cuttings for 3-D
halfspaces, this property is not satisfied in our problem: cells 
of a $k$-level shallow cutting for 3-D dominance ranges are intersecting
boxes.
Also, in their hierarchy the levels
go down by a constant factor which is not ideal for an 
I/O-efficient algorithm, whereas in our construction
levels will go down at a faster rate.

%=================
%Optimal construction re-written by Rahul

\section{Our algorithm for constructing $k$-level 
shallow cuttings}\label{sec:our-algo}
We are now ready to prove Theorem~\ref{thm:shallow1}.
Our algorithm follows the general approach used 
in~\cite{at14,nr23} and combines it with several other 
ideas. First, we construct a \emph{hierarchy} of shallow cuttings 
such that the conflict list size of cells decreases exponentially. Second, 
somewhat counterintuitively, we  use data structures with high 
query cost and fast pre-processing for 3-D dominance 
selection queries. The final algorithm is a combination 
of six pieces.

%\vspace{0.1 in}
%\noindent\textbf{1st piece: Hierarchy of shallow cuttings.}
\subsection{1st piece: Hierarchy of shallow cuttings.} 
The main idea is to construct a hierarchy of shallow cuttings.
We want to construct a $k$-level shallow cutting of $P$ where 
$k\ge M/c$ and $c$ is a sufficiently large constant. 
Let $t_i=(N/B)^{{\delta}^i}$ and 
 $k_i=\max(Bt_i,k)$. We select the constant $\delta \in (0,1)$ in such way that $k_i \leq k_{i-1}/10$ (for any $i$ satisfying $k_i\ge M/c$). 
The algorithm works in stages (starting from stage~$0$). 
In the $i$-th stage, we 
construct the $k_i$-level shallow cutting.
For the starting case of $0$-th stage, since $k_0=N$,  
the $k_0$-level shallow cutting consists of one cell 
containing all points of $P$. 
The algorithm terminates when $k_i=k$.

\begin{lemma}\label{lem:number}
The number of shallow cuttings constructed is 
$r=O(\log_{M/B}(N/B))$.
\end{lemma}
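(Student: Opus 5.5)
We bound the number of stages by following how $t_i=(N/B)^{\delta^i}$ decreases. Stage $i$ is the last one with $k_i>k$, and every $k_i$ considered satisfies $k_i\ge k\ge M/c$. So all stages are covered once we count the indices $i$ with $Bt_i\ge M/c$, plus one final stage where $k_i=k$. (At stage~$0$, $Bt_0=N=k_0$, which matches the definition.)

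The condition $Bt_i\ge M/c$ reads $(N/B)^{\delta^i}\ge M/(cB)$. Taking logarithms twice, it becomes $\delta^i \ln(N/B)\ge \ln(M/(cB))$, that is,
\[
 i \le \log_{1/\delta}\frac{\ln(N/B)}{\ln(M/(cB))}.
\]
Hence the number of stages satisfies $r\le 1+\log_{1/\delta}\bigl(\ln(N/B)/\ln(M/(cB))\bigr)+1$. Since $\delta$ is a fixed constant in $(0,1)$, this is $O(\log\log_{M/(cB)}(N/B)+1)$.

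It remains to compare this with the claimed $O(\log_{M/B}(N/B))$.

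1. We need $\log_{M/(cB)}(N/B)=O(\log_{M/B}(N/B))$. This holds whenever $M/B$ is at least a suitable constant power of $c$, for instance $M\ge c^2B$, because then $\ln(M/(cB))\ge \tfrac12\ln(M/B)$. This is the usual tall-cache-type assumption; if $M/B$ is only a constant, both sides are $\Theta(\log(N/B))$ up to constants.
2. We use $\log x\le x$ for $x\ge 1$, so $\log\log_{M/B}(N/B)=O(\log_{M/B}(N/B))$. The additive constant is absorbed since $\log_{M/B}(N/B)\ge 1$ in the nontrivial case $N\ge M$; when $N<M$ the whole problem fits in memory.

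The claimed bound therefore follows, and is in fact loose: the true count is doubly logarithmic.

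The main obstacle is the boundary behaviour of the constants. We must check that $\delta$ can be chosen so that $k_i\le k_{i-1}/10$ while $k_i\ge M/c$. This requires $(N/B)^{\delta^{i-1}(1-\delta)}\ge 10$, which holds when $t_{i-1}\ge (M/(cB))^{1}$ is large enough relative to $10^{1/(1-\delta)}$. We must also handle the degenerate cases where $M/(cB)\le 1$ or $N/B$ is small. I would dispose of these first, taking $r=O(1)$ when $N=O(M)$, and then run the logarithmic computation above.
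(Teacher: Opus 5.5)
Your proof is correct, but it takes a different route from the paper's.

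The paper never solves for the stopping index. It uses only that every non-final stage has $t_i \ge M/(cB)$, so $t_{i+1} = t_i/t_i^{1-\delta} \le t_i/(M/(cB))^{1-\delta}$. Unrolling this, $t$ shrinks by a factor of at least $(M/(cB))^{1-\delta}$ per stage, so it would fall below $1$ after $\frac{c'}{1-\delta}\log_{M/(cB)}(N/B)$ stages. This is a single-exponential decay argument and yields exactly the claimed $O(\log_{M/B}(N/B))$, no better.

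You instead use the closed form $t_i=(N/B)^{\delta^i}$ and take logarithms twice. That gives the sharper count $r = O(1+\log\log_{M/B}(N/B))$, which you then relax to the stated bound via $\log x \le x$. Your count is strictly more informative. It is also the doubly-logarithmic count the paper itself relies on later, in Lemma~\ref{lem:first-term}, where $\ell = O(\log(\log(N/B)/\log(k_\ell/B)))$. The paper's argument is shorter, and its weaker bound suffices for summing the $O(N/B)$ per-stage costs from Lemma~\ref{lem:par-free} into the sorting bound.

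Both arguments need the change of base from $M/(cB)$ to $M/B$. That step is valid only when $M/(cB)$ is bounded away from $1$, and the same condition is needed for a constant $\delta$ with $k_i \le k_{i-1}/10$ to exist. The paper makes this assumption silently; you state it explicitly (e.g.\ $M \ge c^2 B$), which is a small gain in rigor.
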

\begin{proof}
Since $k_i=Bt_i \geq M/c$, we have $t_i \geq M/Bc$. Therefore,
$t_{i+1}=t_i^{\delta}=t_i\cdot \frac{1}{t_i^{1-\delta}}
\leq \frac{t_i}{(M/cB)^{1-\delta}}$. Unfolding the recursion, 
we obtain $t_{i+1} \leq \frac{t_0}{\left(\frac{M}{cB}\right)^{(i+1)(1-\delta)}}$. Setting $i\leftarrow \frac{c'}{1-\delta}\log_{M/cB}(N/B)-1$, 
for $c'\geq 2$,
we observe that $t_{i+1} \leq t_0/(N/B)^{c'}<1$.
This implies 
that $r=O(\frac{1}{1-\delta}\log_{M/B}(N/B))=O(\log_{M/B}(N/B)$.
\end{proof}
\begin{comment}
\begin{proof}
\tcr{polish it.}  Our algorithm constructs $k_i$-shallow cuttings for $1\le i\le r$ where  $r<f+O(1)$ and $f$ is the smallest integer such that $\frac{N}{B}^{\delta^f}\le M$.  Therefore $f$ is the smallest integer satisfying $M^{1/\delta^f}\ge (N/B)$ and \[f<\log_{1/\delta}\log_{M}\frac{N}{B}+1=O(\log_M \frac{N}{B})=O(\log_{M/B}\frac{N}{B})\] 
\end{proof}
\end{comment}

%\vspace{0.1 in}
%\noindent\textbf{2nd piece: Existence of parent cells.}
\subsection{2nd piece: Existence of parent cells.} We will start with some definitions.
For all $1\leq i\leq r$, we will say that the $k_{i-1}$-level shallow cutting is the \emph{parent} shallow cutting of the $k_i$-level shallow cutting. 
%A point $p$ 2-D dominates another point $q$
%if and only if $p$ has a higher coordinate value than $q$ 
%in both dimensions ($x$ and $y$).
When the sweep-plane reaches $z$, then 
a cell $C$ is {\em active} at $z$ if $cre(C) \leq z \leq exp(C)$.

In the construction of the $k_{i}$-level shallow cutting, 
consider an outer corner $(x_p, y_p)$ on the 2-D shallow cutting 
which got created when the sweep-plane is at position $z$. 
Then the {\em parent cell} of  point $p=(x_p,y_p,z)$ is the cell $C$ in the parent 
 shallow cutting which satisfies the following two properties: (a) the apex point of 
 $C$ dominates $p$, and  (b) the apex point of $C$ has the smallest $y$-coordinate among all cells that are active at $z$.

\begin{lemma}
 In the construction of the $k_i$-level shallow cutting (for $i\geq 1$), 
 consider an outer corner $(x_p, y_p)$ on the 2-D shallow cutting 
which got created when the sweep-plane is at position $z$. 
Then a parent cell always exists for the point $(x_p,y_p,z)$.
%consider the sweep-plane at any $z$. For each outer corner $(x_p, y_p)$
 %on the 2-D shallow cutting at  $z$, a parent cell always exists 
% for the point $(x_p,y_p,z)$.
\end{lemma}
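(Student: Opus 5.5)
The plan is to reduce existence of a parent cell to two facts. First, the point $p=(x_p,y_p,z)$ has small depth with respect to $P$. Second, the parent shallow cutting, restricted to its cells active at $z$, covers every point at height $z$ of small depth. Once some active cell of the $k_{i-1}$-level cutting has an apex dominating $p$, the set of candidates is nonempty and finite. So a cell with smallest apex $y$-coordinate exists (ties broken arbitrarily), and that cell is the parent.

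\textbf{Step 1 (depth of $p$).} The outer corner $(x_p,y_p)$ was created by the patching procedure at sweep position $z$. By the invariant of the AT/NR construction, the depth of every corner of the 2-D shallow cutting, taken with respect to the $xy$-projections of the points of $P$ below the sweep-plane, lies in $[k_i,10k_i]$. Every point of $P$ dominated by $p=(x_p,y_p,z)$ has $z$-coordinate at most $z$. Hence the depth of $p$ with respect to $P$ equals the 2-D depth of $(x_p,y_p)$ at the sweep position $z$, which is at most $10k_i \le k_{i-1}$, using the choice of $\delta$ that guarantees $k_i\le k_{i-1}/10$.

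\textbf{Step 2 (coverage by active parent cells).} Consider the sweep that builds the $k_{i-1}$-level cutting (for $i-1=0$ the single cell containing everything is trivially active and dominates $p$). At height $z$, the 2-D cutting maintained by that sweep has inner corners of depth at least $k_{i-1}$; more precisely, every corner's depth is kept in $[k_{i-1},10k_{i-1}]$. Its outer boundary therefore encloses, in the projection, every point of 2-D depth at most $k_{i-1}$. Since the depth of $(x_p,y_p)$ at $z$ is at most $k_{i-1}$, some outer corner $(x_c,y_c)$ of the parent's 2-D cutting at $z$ satisfies $x_c\ge x_p$ and $y_c\ge y_p$. That outer corner was created at some time $cre(C)\ge z$ as the apex projection of a cell $C$ with apex $(x_c,y_c,cre(C))$, and it is still present at $z$, so $C$ has not yet expired: $exp(C)\le z\le cre(C)$. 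With the sweep direction of the paper this is exactly the activity condition. Moreover, the apex of $C$ dominates $p$ in all three coordinates.

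\textbf{Main obstacle.} The delicate point is Step 2. We must argue carefully that, at every sweep position, the parent's 2-D outer boundary dominates all points of projected depth at most $k_{i-1}$. We must also check that a corner's presence at $z$ corresponds exactly to its cell being active, with the $cre/exp$ inequality oriented consistently with the downward sweep. This should follow from the sweep invariant: corners are removed only at their expiry events and are immediately replaced by patching so that inner-corner depths stay at least $k_{i-1}$. I would state this invariant explicitly as a property of the NR-construction (Lemma~\ref{lem:summary}) and argue that a projected point of depth at most $k_{i-1}$ cannot lie strictly outside a staircase whose inner corners all have depth at least $k_{i-1}$. The reason is that a point outside the staircase dominates some inner corner, so its depth is at least that corner's depth. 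Boundary cases with equality need a general-position or tie-breaking assumption.
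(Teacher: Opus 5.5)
Your proposal is correct and follows essentially the same route as the paper: the sweep invariant bounds the depth of $(x_p,y_p,z)$ by $10k_i\le k_{i-1}$, and an outer corner of the parent's 2-D cutting at sweep position $z$ that dominates $(x_p,y_p)$ then gives an active cell whose apex dominates the point. The differences are minor. You derive the covering step from the inner-corner depth invariant, where the paper simply cites the third shallow-cutting property. You also orient activity as $\expire(C)\le z\le cre(C)$, which is the consistent reading for the downward sweep.
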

\begin{proof}
The depth of the point $(x_p, y_p, z)$ will be less than or equal to $10k_i$ (an invariant maintained by the AT-construction), which is
at most $k_{i-1}$. In the $k_{i-1}$-level shallow cutting construction,
consider the 2-D shallow cutting when the sweep-plane reached $z$. 
This 2-D shallow cutting will have at least one outer corner, say
$p'$, which dominates $(x_p, y_p)$ (by third property of $k_{i-1}$-level
shallow cuttings). Let $C$ be the 3-D cell in the $k_{i-1}$-level 
shallow cutting corresponding to the outer corner $p'$. Since $C$ is 
active when the sweep-plane reached position $z$, it implies that 
the apex point of $C$ dominates $(x_p,y_p,z)$. This establishes
property~(a) of a parent cell, and hence, a parent cell exists.
\end{proof}

%\vspace{0.1 in}
%\noindent\textbf{3rd piece: Query expensive structures 
%on small-sized inputs.} 
\subsection{3rd piece: Query-expensive structures 
on small-sized inputs.} 
A data structure built on $N'$ points is query-expensive if the 
the number of I/Os required to answer a query is of the form
$O((N'/B)^{\gamma})$ or $O((N'/B)^{\gamma} + t/B)$.
Our next idea is to construct such
query-expensive data structures, but only on small-sized inputs. 
Specifically, for the $k_i$-level shallow cutting, 
we will ensure that the a 3-D dominance 
$x,y,z$-selection query requires only $O(k_i/B)$ I/Os, and 
since the number of such selection queries 
performed is bounded by $O(N/k_i)$ (point~2 in Lemma~\ref{lem:summary}),
overall the selection queries will require only $O(N/B)$ I/Os. Similarly, we will ensure that
generating the conflict list of a cell requires only 
$O(k_i/B)$ I/Os, and since there are $O(N/k_i)$ cells in the $k_i$-level
cutting, generating all the conflict lists will require only $O(N/B)$ I/Os.
The existence of parent cells and the following lemma will let us achieve the goal
of utilizing query-expensive data structures I/O-efficiently.

\begin{lemma}\label{lem:significance}
 In the construction of the $k_i$-level shallow cutting (for $i\geq 1$), 
consider the sweep-plane at any time $z$. If patching is performed at time $z$,
then let $p=(x_p, y_p)$ be a new outer corner created with parent cell $C$.
Then the 3-D dominance $x,y,z$-selection queries with query point 
$p=(x_p,y_p,z)$ will output a point that lies inside $C$. 
\end{lemma}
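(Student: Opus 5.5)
We argue directly from the depth bounds. Let $(a,b,c)$ be the apex point of the parent cell $C$. By property~(a) of a parent cell, $a \ge x_p$, $b \ge y_p$ and $c \ge z$. Since $C$ is active at $z$, we also have $z \le \expire(C)$. This means $(a,b,z)$ still dominates at least $k_{i-1}$ points of $P$, because $\expire(C)$ is the largest $z$-value at which the corner $(a,b)$ dominates fewer than $k_{i-1}$ points, and the point is alive in the parent 2-D cutting at $z$.

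By point~3 of Lemma~\ref{lem:summary}, every selection query issued at $p$ uses an integer $k' \le 10k_i \le k_{i-1}$. We treat the three query types separately.

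\emph{$y$-selection.} The answer is $(x_p,q'_y,z)$ with $q'_y < y_p \le b$. Together with $x_p \le a$ and $z \le c$, this point lies in $C$.

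\emph{$z$-selection.} The answer is $(x_p,y_p,q'_z)$ with $q'_z < z \le c$, so it lies in $C$ for the same reason.

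\emph{$x$-selection.} This case needs work, because the answer $(q'_x,y_p,z)$ has $q'_x > x_p$, and we must show $q'_x \le a$. Consider the depth of $(x',y_p,z)$ as a function of $x'$. It is monotone nondecreasing. We compare it at $x'=a$ with the depth of $(a,b,z)$.

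\textbf{Main obstacle.} The depth at $(a,y_p,z)$ can be smaller than the depth at $(a,b,z)$, since $y_p \le b$, so domination of $k'$ points at $x'=a$ does not follow directly. Here I would use the minimality of $b$ from property~(b) of the parent cell: $C$ has the smallest $y$-coordinate among active parent cells dominating $p$.

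The argument runs as follows. Consider the parent 2-D shallow cutting at $z$ restricted to the horizontal line $y=y_p$. Its staircase boundary crosses this line at some $x$-value $\hat{x}$. Minimality of $b$ should imply that the corner $(a,b)$ is the first outer corner of the parent staircase above $y_p$. Hence $\hat{x} = a$, up to the structure of the staircase: the boundary runs vertically from $(a,b)$ down to the next inner corner, whose $y$-coordinate is at most $y_p$. That inner corner $(a,y'')$ has depth at least $k_{i-1}$ by the invariant that all corners have depth in $[k_{i-1},10k_{i-1}]$. Since $y'' \le y_p$, monotonicity gives
\[
\mathrm{depth}(a,y_p,z) \ \ge\ \mathrm{depth}(a,y'',z) \ \ge\ k_{i-1} \ \ge\ 10k_i \ \ge\ k'.
\]
So the $x$-selection answer, being the smallest $x'$ with depth $k'$ (or any $x'$ where the depth reaches $k'$), satisfies $q'_x \le a$. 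It therefore lies in $C$.

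I expect the care to go into two points. First, justifying that the inner corner immediately below $(a,b)$ has $y$-coordinate at most $y_p$; this follows because otherwise the next outer corner would be an active parent cell dominating $p$ with smaller $y$-coordinate. Second, handling ties in the definition of the selection answer.
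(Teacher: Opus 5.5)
Your proof is correct and follows the paper's argument: for $x$-selection, both you and the paper look at where the line $y=y_p$ meets the parent staircase. Both use the fact that the inner corner directly below the parent's outer corner has depth at least $k_{i-1}\ge 10k_i\ge k'$ to get $q'_x\le a$. You additionally justify, via minimality of $b$, that this inner corner lies at or below $y_p$, which the paper only asserts. Your opening remark about $\expire(C)$ is never used and its direction is muddled: under the stated definition, $z\le\expire(C)$ would give depth below $k_{i-1}$. You can simply drop it.
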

\begin{proof}
 This is trivial to see for a $y$-selection query since $p$ already 
 lies inside $C$ and hence, the point reported by the query will also lie
 inside $C$ (see Lemma~\ref{lem:summary}). Similarly, it holds 
 for a $z$-selection query as well. The non-trivial case is the $x$-selection 
query. 

Let $p''$ denote the point where the horizontal ray from $p$ hits the 2-D shallow cutting of the parent at time $z$. See Figure~\ref{fig:parentcor}. 
Let $d$ be the inner corner immediately below $p''$ and $c$ be the 
outer corner immediately above $p''$. Since the depth of 
$d$ is at least $k_{i-1}$, the depth of $p''$ will also be at least $k_{i-1}$. 
Consider the 3-D dominance $x$-selection query 
with query point $p$ and parameter $k' \leq 10k_i \leq k_{i-1}$.
This implies that the output, $p'$, will have an $x$-coordinate less than 
or equal to $p''$. Now the cell corresponding to the outer corner $c$
is the parent cell of $p$ and contains $p'$ as well.
\end{proof}

\begin{lemma}\label{lem:par-free}
Assume that the $k_{i-1}$-level shallow cutting of $P$ has been 
constructed in the $(i{-}1)$-th stage. 
Ignoring the I/Os required to find the parent cells, in the $i$-th stage, 
the $k_i$-level shallow cutting can be constructed in  $O\left(\frac{N}{k_i}\log_B\frac{N}{k_i}
+ \frac{N}{B} \log_{M/B}\frac{k_{i-1}}{B} \right)$ I/Os. 
\end{lemma}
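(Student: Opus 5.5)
We run the NR-construction (Lemma~\ref{lem:summary}) for level $k_i$ and charge its cost to three sources: the priority queue, the selection queries, and the data structures those queries use. The term $O\left(\frac{N}{k_i}\log_B\frac{N}{k_i}\right)$ comes from the queue. By point~2 of Lemma~\ref{lem:summary} there are $O(N/k_i)$ queue updates and queries, and the list $\List$ of corners of the current 2-D shallow cutting changes $O(N/k_i)$ times. We keep both in B-trees (or a buffer-tree/external priority queue), so each operation costs $O(\log_B(N/k_i))$ I/Os. The remaining cost comes from answering the $O(N/k_i)$ $x,y,z$-selection queries, and this is where the parent cutting is used.

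\emph{Local structures.} Let $C_1,\dots,C_m$ be the cells of the $k_{i-1}$-level shallow cutting, with $m=O(N/k_{i-1})$ and conflict lists $L(C_j)$ of size at most $10k_{i-1}$. We assume these lists were output by stage $i-1$; if not, we produce them with one pass over a sorted order. For each $C_j$ we build the structure of Theorem~\ref{thm:reporting} on $L(C_j)$. This costs
\[
\sum_j O\left(\frac{|L(C_j)|}{B}\log_{M/B}\frac{|L(C_j)|}{B}\right)=O\left(\frac{N}{k_{i-1}}\cdot\frac{k_{i-1}}{B}\log_{M/B}\frac{k_{i-1}}{B}\right)=O\left(\frac{N}{B}\log_{M/B}\frac{k_{i-1}}{B}\right),
\]
which is the second term.

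\emph{Answering queries.} Each selection query issued in stage $i$ has its query point at a newly created outer corner $p=(x_p,y_p,z)$, or at a corner after patching. Given its parent cell $C$ (whose cost we ignore by hypothesis), Lemma~\ref{lem:significance} shows that the answer lies inside $C$. Since $C$ is a dominance range containing $p$ and the answer, the points of $P$ dominated by any candidate output point are exactly the points of $L(C)$ dominated by it. Hence querying the local structure of $C$ with the same $k'\le 10k_i$ returns the correct answer.

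One subtlety concerns queries issued at inner corners, or at points not covered by Lemma~\ref{lem:significance}, such as the expiry computations ($z$-selection). For these we also need the containing parent cell to be a cell whose conflict list suffices. The argument of Lemma~\ref{lem:significance} applies to any corner of depth at most $10k_i\le k_{i-1}$, so we extend "parent cell" to such corners and absorb their lookup into the ignored cost.

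\emph{Per-query cost.} Each query costs $O((k_{i-1}/B)^{\gamma})$ by Theorem~\ref{thm:reporting}, and we need this to be $O(k_i/B)$. Since $k_i=Bt_i$ with $t_i=t_{i-1}^{\delta}$, we have $k_{i-1}/B=t_{i-1}$ and $k_i/B\ge t_{i-1}^{\delta}$. Choosing $\gamma\le\delta$ therefore gives $(k_{i-1}/B)^{\gamma}\le k_i/B$. At the final stage $k_i=k\ge Bt_i$, so the bound only improves. The $O(N/k_i)$ queries then cost $O(N/B)$ in total.

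\emph{Conflict lists.} For every new cell of the $k_i$-level cutting, we obtain its conflict list by a reporting query in its parent cell's structure. Each such query costs $O((k_{i-1}/B)^{\gamma}+k_i/B)=O(k_i/B)$, for $O(N/B)$ in total.

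Adding the queue term, the construction of the local structures, the selection queries, and the conflict-list reporting gives the claimed bound. The main obstacle is the correctness argument: every query the NR-construction issues must be answerable within a single parent cell's conflict list. The expiry $z$-selection queries, in particular, need the answer to stay inside $C$. This holds because the answer has depth $k\le k_{i-1}$ along a vertical ray from a point inside $C$, and $C$'s apex dominates it down to $-\infty$ in $z$.
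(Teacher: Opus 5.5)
Your proof is correct and follows the paper's argument essentially step for step. Like the paper, you build the Theorem~\ref{thm:reporting} structures on the $k_{i-1}$-level conflict lists for $O(\frac{N}{B}\log_{M/B}\frac{k_{i-1}}{B})$ I/Os, answer each of the $O(N/k_i)$ selection queries inside its parent cell via Lemma~\ref{lem:significance} at cost $O((k_{i-1}/B)^{\gamma})=O(k_i/B)$ by taking $\gamma\le\delta$, charge $O(\frac{N}{k_i}\log_B\frac{N}{k_i})$ to the B-tree, and obtain the new conflict lists by reporting queries in the parent structures. Your extra remarks are sound refinements of points the paper leaves implicit: extending the parent-cell argument to inner corners and to expiry ($z$-selection) queries, and noting that the final stage $k_i=k\ge Bt_i$ only helps.
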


\begin{proof}
For each cell in the $k_{i-1}$-level shallow cutting, 
 based on its conflict list we construct
data structures supporting  3-D dominance 
 reporting and 3-D dominance $x,y,z$-selection queries
(Theorem~\ref{thm:reporting}) that requires 
$O( (N/k_{i-1})\cdot \frac{k_{i-1}}{B}\log_{M/B}(
\frac{k_{i-1}}{B}))=O\left(\frac{N}{B} \log_{M/B}\frac{k_{i-1}}{B}\right)$
I/Os.  By point~2 in Lemma~\ref{lem:summary}, $O(N/k_i)$ $x,y,z$-selection queries
will be performed. For each selection query, if the cost of finding the parent 
cell of the query point is ignored, then by Lemma~\ref{lem:significance} 
the answer to the query can be 
obtained by querying the $x,y,z$-selection data structure built at 
the parent cell. The query I/Os of each selection query will then be 
$O((k_{i-1}/B)^{\gamma})$. Therefore, the number of I/Os performed
to answer the selection queries is 
$O\left(\frac{N}{k_i}\left(\frac{k_{i-1}}{B}\right)^{\gamma} \right)=
O\left(\frac{N}{Bt_i}\cdot t_{i-1}^{\gamma} \right)=
O\left(\frac{N}{Bt_i}\cdot t_{i-1}^{\delta} \right)=O\left(\frac{N}{B} \right)$, 
by choosing $\gamma < \delta$. 

By point~1 in Lemma~\ref{lem:summary}, the number of operations on the 
priority-queue are $O(N/k_i)$ and using a vanilla B-tree as a priority-queue
  $O\left(\frac{N}{k_i}\log_B\frac{N}{k_i}\right)$ I/Os are performed. 

Finally, for each cell $C$ in the $k_i$-level cutting, its conflict list is generated
by querying its parent cell's 3-D dominance reporting data structure.
The number of I/Os performed is $O\left(\left(\frac{k_{i-1}}{B} \right)^{\gamma} + \frac{k_i}{B} \right)=O(k_i/B)$. Since the number of cells in the $k_i$-level 
cutting is $O(N/k_i)$, the total number of I/Os required to generate the 
conflict lists is $O(N/B)$.
\end{proof}
 
\begin{figure}[tb]
  \centering
 \includegraphics[width=.4\textwidth]{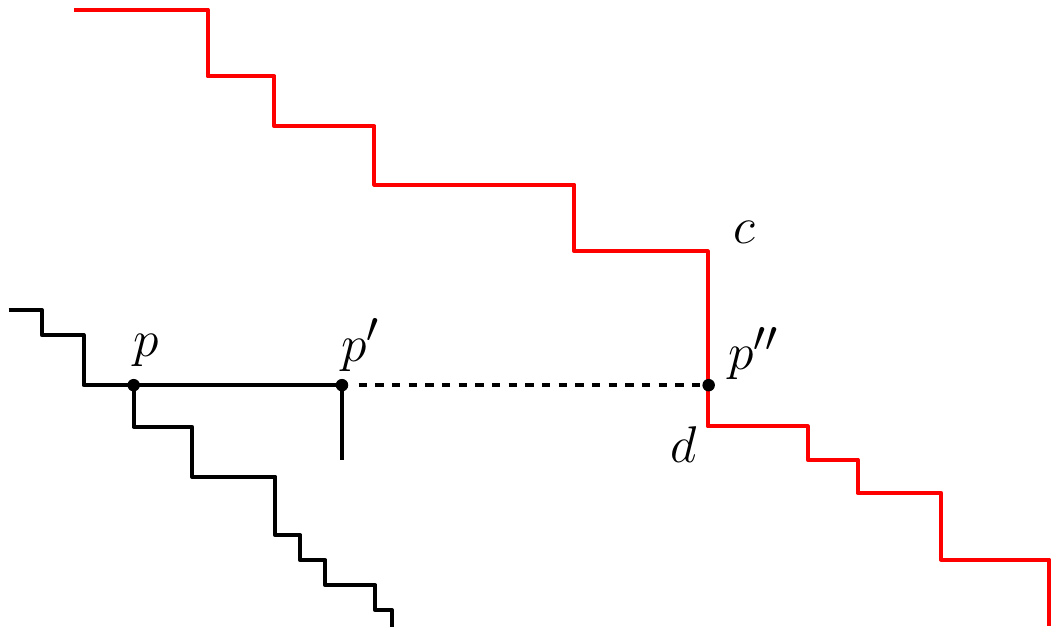}
  \caption{\label{fig:parentcor} An example of $x$-selection query in Lemma~\ref{lem:significance}. The point $p$ is an outer
  corner of a 2-D 
  shallow cutting of the $k_i$-level. The corner $c$ is the outer  corner of the parent staircase corresponding to the parent cell $C$ of $p$,
  since $c$ has the smallest $y$-coordinate among all apex points of the parent staircase dominating $p$.}
\end{figure}

%\vspace{0.1 in}
%\noindent\textbf{4th piece: Finding parent cells efficiently.} 
\subsection{4th piece: Finding parent cells efficiently.} 
By point~2 in Lemma~\ref{lem:summary},  
$O(N/k_i)$ $x,y,z$-selection queries are performed 
in the construction of a $k_i$-level
shallow cutting. To perform each selection query, 
we first need an efficient routine to find the parent cell.
Once the parent cell is found, then the selection query is 
performed on the data structure corresponding to the parent cell.

We will use partially persistent B-trees~\cite{bgo+96} to store the 
cells in the $k_{i-1}$-level shallow cutting.
Versions of the partially persistent data structure  are 
 parameterized by $z$-coordinates: a cell is inserted (resp.\ 
 deleted) at time $t$ if the corresponding outer corner is added to 
 the 2-D shallow cutting (resp.\ removed from the 2-D shallow cutting) 
 when the $z$-coordinate of the sweep-plane is equal to $t$.  This data 
 structure supports successor queries with respect to $y$-
 coordinates of its apex points: for any pair $(z_q,y_q)$ we can 
 find the cell with the smallest $y$-coordinate among all cells of 
 the parent shallow cutting that are  active at time $z_q$ and 
 whose $y$-coordinate is not smaller than $y_q$. 
 
 \begin{lemma}\label{lem:persistence}
 In the construction of $k_i$-level shallow cutting, 
the number of I/Os performed to compute the parent cells
is $\left(\frac{N}{k_{i}}\log_B\frac{N}{k_{i-1}} \right)$.
 \end{lemma}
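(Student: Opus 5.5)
The plan is to build a partially persistent B-tree over the cells of the $k_{i-1}$-level shallow cutting, and then answer each parent-cell lookup with one successor query on the version at the current sweep position. The count is: build cost, plus $O(N/k_i)$ queries of $O(\log_B(N/k_{i-1}))$ I/Os each.

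\textbf{Building the structure.} During stage $i-1$ the NR-construction records, for every cell $C$ of the $k_{i-1}$-level cutting, its creation time $cre(C)$ and expiry $\expire(C)$. Each cell therefore contributes exactly two updates: an insertion at $z=cre(C)$ and a deletion at $z=\expire(C)$. Both are keyed by the $y$-coordinate of the apex of $C$. There are $O(N/k_{i-1})$ cells, so there are $O(N/k_{i-1})$ updates in total. I will sort these update events by decreasing $z$ and apply them to a partially persistent B-tree~\cite{bgo+96}. Each update costs $O(\log_B(N/k_{i-1}))$ I/Os amortized, so the total is $O\bigl(\frac{N}{k_{i-1}}\log_B\frac{N}{k_{i-1}}\bigr)$ I/Os. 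The sort is cheap, since $k_{i-1}\ge M/c$ gives $N/k_{i-1}\le cN/M$. Because $k_i\le k_{i-1}/10$, all of this is within $O\bigl(\frac{N}{k_i}\log_B\frac{N}{k_{i-1}}\bigr)$.

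\textbf{Answering queries.} Each selection query in stage $i$ has a query point $p=(x_p,y_p,z)$, where $(x_p,y_p)$ is a new outer corner created at sweep position $z$. For each such $p$ I perform one successor query with the pair $(z,y_p)$. The query returns the active cell at time $z$ whose apex $y$-coordinate is smallest among those that are at least $y_p$. This is the cell whose corner $c$ lies immediately above $p''$ in the parent staircase, i.e.\ the cell $C$ in the proof of Lemma~\ref{lem:significance}.

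\textbf{Correctness of the returned cell.} This is the main obstacle: I must argue that the successor is actually the parent cell, meaning its apex also dominates $p$ in $x$ and $z$. The argument uses the staircase structure of the parent's 2-D shallow cutting at time $z$.
\begin{itemize}
\item Among the outer corners of that 2-D cutting, $x$ decreases as $y$ increases.
\item By the existence lemma, some active outer corner $p'$ dominates $(x_p,y_p)$.
\item The active corner $c$ with the smallest $y\ge y_p$ has $y_c\le y_{p'}$, and therefore $x_c\ge x_{p'}\ge x_p$.
\item Since $C$ is active at $z$, its apex $z$-coordinate is at least $z$.
\end{itemize}
Hence the apex of $C$ dominates $p$, and by construction it has the smallest $y$ among active cells dominating $p$.

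One caveat concerns degenerate event times: if an insertion and a deletion share the same $z$, the version at $z$ must reflect the state after patching. I would fix this with a consistent tie-breaking order on events.

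\textbf{Total cost.} By point~2 of Lemma~\ref{lem:summary} there are $O(N/k_i)$ selection queries, hence $O(N/k_i)$ successor queries. Each query on a partially persistent B-tree holding $O(N/k_{i-1})$ elements costs $O(\log_B(N/k_{i-1}))$ I/Os, giving $O\bigl(\frac{N}{k_i}\log_B\frac{N}{k_{i-1}}\bigr)$ for the queries. Adding the build cost from the first step gives the bound claimed in Lemma~\ref{lem:persistence}.
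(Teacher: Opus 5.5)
Your proposal is correct and takes the same approach as the paper. Both build a partially persistent B-tree over the $O(N/k_{i-1})$ parent cells, with versions indexed by the sweep coordinate and keys given by apex $y$-coordinates, and both answer each of the $O(N/k_i)$ selection queries with one $y$-successor query costing $O(\log_B(N/k_{i-1}))$ I/Os. Your explicit build-cost accounting and your staircase argument that the successor really is the parent cell (which the paper only implies in the proof of Lemma~\ref{lem:significance}) are sound additions.
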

\begin{proof}
The number of cells in the $k_{i-1}$-level shallow cutting 
is $O(N/k_{i-1})$ and using persistent B-trees, 
a successor query and an update query can be supported
in $O(\log_B (N/k_{i-1}))$ I/Os. By point~2 in 
Lemma~\ref{lem:summary}, the total number of $x,y,z$-selection 
queries will be $O(N/k_i)$, and hence, 
the number of I/Os performed to compute the parent cells
is $\left(\frac{N}{k_{i}}\log_B\frac{N}{k_{i-1}} \right)$.
\end{proof}

%\vspace{0.1 in}
%\noindent\textbf{5th piece: Obtaining log bounds with base $M/B$.}
\subsection{5th piece: Obtaining logarithmic bounds with the base $M/B$.} To obtain $O(sort(N))$ I/Os bound, we need the base in the logarithm
to be $M/B$. To circumvent the fact that finding parent cells and 
operations on the priority queue
require  $O(\log_B (N/k))$ query I/Os, we exploit the fact 
that $k$ is large.

\begin{lemma}\label{lem:b-is-fine}
For $k=\Omega(M)$, we have
$O\left(\frac{N}{k}\log_B\frac{N}{k}\right)
=O\left(\frac{N}{B}\log_{M/B}\frac{N}{B}\right)$.
\end{lemma}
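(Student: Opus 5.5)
We will prove the bound by comparing the two expressions factor by factor. The idea is that the prefactor $N/k$ is smaller than $N/B$ by a factor of $k/B$. This gain is large enough to absorb the conversion from base $B$ to base $M/B$ and to replace $N/k$ by $N/B$ inside the logarithm.

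Write $L=\log_2(N/k)$, and assume $N\ge k$; otherwise the left side is $O(1)$-dominated and the claim is trivial. Converting bases gives
\[
\frac{N}{k}\log_B\frac{N}{k} = \frac{N}{k}\cdot\frac{L}{\log_2 B},
\qquad
\frac{N}{B}\log_{M/B}\frac{N}{B} = \frac{N}{B}\cdot\frac{\log_2(N/B)}{\log_2(M/B)}.
\]
Since $k\ge B$ for $k=\Omega(M)$ and $M=\Omega(B)$ (up to constants), we have $\log_2(N/B)\ge L$. It therefore suffices to show
\[
\frac{1}{k\log_2 B}\le O\!\left(\frac{1}{B\log_2(M/B)}\right),
\qquad\text{i.e.}\qquad
B\log_2(M/B)=O(k\log_2 B).
\]

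Now use $k=\Omega(M)$, so that $k\log_2 B=\Omega(M\log_2 B)$. We need $B\log_2(M/B)=O(M\log_2 B)$. Setting $x=M/B\ge 1$, this reads $\log_2 x=O(x\log_2 B)$. That holds because $\log_2 x\le x$ and $\log_2 B\ge 1$ whenever $B\ge 2$. For $B=1$ we use the usual convention that logarithms are at least $1$; in any case base-$B$ logarithms presume $B\ge 2$.

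The only delicate point is the degenerate ranges, namely $M/B$ close to $1$ and $N/k$ close to $1$, where logarithms could vanish. We handle these by the standard convention $\log x:=\max(1,\log x)$ used in $O(\cdot)$ bounds. Once that convention is fixed, the argument above is a direct chain of inequalities.
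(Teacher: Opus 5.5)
Your proposal is correct and follows the same route as the paper: you bound $\log(N/k)$ by $\log(N/B)$ (up to constants), use $k=\Omega(M)$, and reduce everything to the inequality $B\log(M/B)=O(M\log B)$. Your one-line verification of that inequality via $\log_2(M/B)\le M/B$ and $\log_2 B\ge 1$ is simpler than the paper's case split on $M\le B^c$ versus $M>B^c$, and it shows the constant can be taken to be $1$.
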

\begin{proof}
First, we will establish that $B\log(M/B) \leq cM\log B$,
where $c$ is a sufficiently large constant. When $M \leq B^c$, 
we have $cM\log B = M\log B^c \geq 
M\log M \geq B\log (M/B)$. 
When $M > B^c$, we have 
$cM\log B \geq cB\cdot M^{1-1/c}\log B\geq cB\cdot M^{1-1/c} \geq cB\log M\geq cB\log(M/B)$. Using this fact, we have 
\[\frac{N}{k}\cdot\frac{\log (N/k)}{\log B} \leq \frac{N}{M}\cdot\frac{\log (N/B)}{\log B}
\leq O\left(\frac{N}{B}\frac{\log (N/B)}{\log (M/B)}\right).\qedhere\]
\end{proof}

In summary, the running time of our algorithm is
    \begin{equation}
T_3(N) = O\left(T_{\mathcal{Q}}\left(\frac{N}{k}\right) +
\sum_{i=1}^r \frac{N}{k_i}\cdot T_{sel}(k_{i-1}) + 
\frac{N}{k}\cdot \log_B\frac{N}{k}\right)
\end{equation}

\begin{lemma}  \label{lem:large-k}
For $k=\Omega(M)$, the $k$-level shallow cutting for 
3-D dominance ranges on $N$ points can be 
constructed in $\sort$ I/Os. 
\end{lemma}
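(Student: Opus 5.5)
We would prove Lemma~\ref{lem:large-k} by running the stage-by-stage construction of Section~\ref{sec:our-algo} and summing the per-stage costs. Stage~$0$ is free: $k_0=N$ and the single cell has conflict list $P$. For $i=1,\ldots,r$, stage~$i$ builds the $k_i$-level shallow cutting from the $k_{i-1}$-level one, whose cells and conflict lists are already stored. By Lemma~\ref{lem:number}, $r=O(\log_{M/B}(N/B))$. Every stage with $i\ge 1$ satisfies $k_i\ge k=\Omega(M)$, and this is the hypothesis used throughout. The cost of stage~$i$ is the bound of Lemma~\ref{lem:par-free} plus the parent-finding cost of Lemma~\ref{lem:persistence}, namely
\[
O\!\left(\frac{N}{k_i}\log_B\frac{N}{k_i}+\frac{N}{k_i}\log_B\frac{N}{k_{i-1}}+\frac{N}{B}\log_{M/B}\frac{k_{i-1}}{B}\right).
\]
We also need the I/Os to build the persistent B-tree over the $O(N/k_{i-1})$ parent cells. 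This takes $O(\frac{N}{k_{i-1}}\log_B\frac{N}{k_{i-1}})$ I/Os, which is dominated by the first term.

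\textbf{Bounding the B-tree terms.} Since $k_{i-1}\ge k_i$, the first two terms are both $O(\frac{N}{k_i}\log_B\frac{N}{k_i})$. Lemma~\ref{lem:b-is-fine} applies because $k_i=\Omega(M)$, but it gives $O(\frac{N}{B}\log_{M/B}\frac{N}{B})$ per stage. Summing over $r$ stages would cost an extra $\log_{M/B}$ factor, so we sum more carefully. For every stage except the last, $k_i=Bt_i$ with $t_i$ decreasing doubly exponentially. Hence $\frac{N}{k_i}=\frac{N}{B t_i}$, and the sum $\sum_i \frac{N}{Bt_i}\log_B\frac{N}{Bt_i}$ is geometric, dominated by its last term. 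The ratio between consecutive terms is at least $t_{i-1}/t_i=t_i^{1/\delta-1}\ge (M/(cB))^{1/\delta-1}\ge 2$ for $M/B$ at least a large constant. The last stage, with $k_r=k$, contributes $O(\frac{N}{k}\log_B\frac{N}{k})$. So the total B-tree cost is $O(\frac{N}{k}\log_B\frac{N}{k})$, which by Lemma~\ref{lem:b-is-fine} is $\sort$. The global priority queue $T_{\mathcal Q}$ is handled the same way.

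\textbf{Bounding the structure-building terms.} This is the main obstacle. We must show $\sum_{i=1}^r \frac{N}{B}\log_{M/B}\frac{k_{i-1}}{B}=O(\frac{N}{B}\log_{M/B}\frac{N}{B})$, since naively each term is $\frac{N}{B}\log_{M/B}\frac{N}{B}$ and there are $r$ of them. The plan is to use $k_{i-1}/B\le t_{i-2}$ (or $N/B$ for $i=1$), so that $\log_{M/B}\frac{k_{i-1}}{B}\le \delta^{\,i-2}\log_{M/B}\frac{N}{B}$, a geometric series in $\delta<1$. Its sum is $O(\frac{1}{1-\delta}\log_{M/B}\frac{N}{B})$, with $\delta$ a constant. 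The conflict-list generation adds only $O(N/B)$ per stage by Lemma~\ref{lem:par-free}. Summed naively over $r$ stages this gives $O(\frac{N}{B}\log_{M/B}\frac{N}{B})$, which is within budget.

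\textbf{Bookkeeping.} We confirm that $\gamma<\delta$ can be fixed so that the selection-query cost in Lemma~\ref{lem:par-free} is $O(N/B)$ per stage, which again is fine over $r$ stages. The stages with $k_{i-1}$ near $k$ only shrink the terms. If $M/B$ is a constant, then $\log_{M/B}$ and $\log_2$ agree up to constants, and each per-stage bound above still holds. Adding everything yields $\sort$.
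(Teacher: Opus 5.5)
Your proposal is correct and follows the same route as the paper: add the per-stage costs of Lemma~\ref{lem:par-free} and Lemma~\ref{lem:persistence}, sum over the $r$ stages to get $O\bigl(\frac{N}{k}\log_B\frac{N}{k}+\frac{N}{B}\log_{M/B}\frac{N}{B}\bigr)$, and finish with Lemma~\ref{lem:b-is-fine}. You go further than the paper's one-line summation by making explicit the geometric decay of both sums and the persistent B-tree build cost. For the ratio $k_{i-1}/k_i$, you can cite the construction's choice of $\delta$, which guarantees $k_i\le k_{i-1}/10$, instead of assuming $M/B$ is a large constant.
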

\begin{proof}
By combining Lemma~\ref{lem:par-free} and Lemma~\ref{lem:persistence},
the number of I/Os required to construct a $k_i$-level shallow cutting is 
$O\left(\frac{N}{k_i}\log_B\frac{N}{k_i} + \frac{N}{B} \log_{M/B}\frac{k_{i-1}}{B} \right)$.
Therefore, 
\[ \sum_{i=1}^{r} O\left(\frac{N}{k_i}\log_B\frac{N}{k_i} + \frac{N}{B} \log_{M/B}\frac{k_{i-1}}{B}\right) =O\left(\frac{N}{k}\log_B\frac{N}{k} +  \frac{N}{B} \log_{M/B}
\frac{N}{B}\right)\]
By Lemma~\ref{lem:b-is-fine}, this can be bounded by 
$O\left(\frac{N}{B} \log_{M/B}\frac{N}{B}\right)$.
\end{proof}

%\vspace{0.1 in}
%\noindent\textbf{Final piece: Handling Small Values of $k$.} 
\subsection{Final piece: Handling Small Values of $k$.} We will need
a different procedure to handle $k=O(M)$. We will perform a two-level
cutting. First, we will construct an $(M/c)$-level shallow cutting of $P$
using Lemma~\ref{lem:large-k}. For each cell, say $C$, in the $(M/c)$-level, load its {\em entire} conflict list into the internal memory. This can be done I/O-efficiently since the size of the conflict list is $O(M)$. Next run an internal memory algorithm to compute the $k$-level shallow cutting of the conflict list of $C$. The final 
output is the union of the shallow cuttings obtained from each cell in the 
$(M/c)$-level.

\begin{lemma}
  \label{lem:small-k}
  Suppose that we are given an $(M/c)$-level shallow cutting of $P$ for some constant $c>1$.
  Then for $k< M/c$, the $k$-level shallow cutting can be constructed
  in $O(N/B)$ I/Os.
\end{lemma}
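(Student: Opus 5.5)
Given the $(M/c)$-level shallow cutting $\cC'$ of $P$ together with its conflict lists, I would process the cells of $\cC'$ one at a time. For each cell $C'\in\cC'$, I would read its conflict list $P_{C'}$ into internal memory; it has at most $10M/c\le M/2$ points for $c\ge 20$. Since the lists are stored contiguously, this costs $O(|P_{C'}|/B+1)$ I/Os. In memory, I would run the Afshani--Tsakalidis algorithm on $P_{C'}$ to produce a $k$-level shallow cutting $\cC_{C'}$ of $P_{C'}$, including the conflict lists of its cells. Internal computation is free, so this step costs no I/Os. I would then write $\cC_{C'}$ and its conflict lists to disk. The output is the union $\cC=\bigcup_{C'}\cC_{C'}$.

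\textbf{I/O bound.} The total number of I/Os is
\[
\sum_{C'}O\bigl(|P_{C'}|/B+|\cC_{C'}|+1\bigr).
\]
The first term sums to $O\bigl((N/(M/c))\cdot (M/c)/B\bigr)=O(N/B)$. The additive $+1$ terms sum to $|\cC'|=O(N/M)=O(N/B)$. For the output, cell $C'$ has $|\cC_{C'}|=O(|P_{C'}|/k)$ apex points with conflict lists of size $O(k)$. Writing these costs $O(|P_{C'}|/B+|P_{C'}|/k)$ I/Os, which sums to $O(N/B+N/k)$. Apex points are written as a contiguous stream, so the cost is $O(|\cC|/B)$ plus the conflict lists, and the total is $O(N/B)$.

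\textbf{Correctness.} Property~2 is inherited directly, because the conflict list of a cell of $\cC_{C'}$ with respect to $P$ is the same as with respect to $P_{C'}$. To see this, note that each cell of $\cC_{C'}$ may be clipped so that its apex is dominated by the apex of $C'$. Clipping keeps each cell a dominance range and does not change its conflict list within $P_{C'}$. After clipping, the cell's intersection with $P$ equals its intersection with $P_{C'}$.

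For property~3, take a point $q$ of depth at most $k\le M/c$ in $P$. By property~3 of $\cC'$, some $C'$ contains $q$. The depth of $q$ in $P_{C'}$ equals its depth in $P$, since every point dominated by $q$ lies in $C'$. Hence $q$ lies in some cell of $\cC_{C'}$. That cell also contains $q$ after clipping, because $q\in C'$.

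\textbf{Main obstacle: property~1.} We need $|\cC|=O(N/k)$. The sum $\sum_{C'}O(|P_{C'}|/k+1)$ is $O(N/k)+O(N/M)=O(N/k)$, because $\sum|P_{C'}|=O(|\cC'|\cdot M/c)=O(N)$. The subtle point is that a single cell of $\cC'$ might have a small conflict list. The per-cell cutting then still needs at least one cell, and this is where the $+1$ term comes from. That term is absorbed since $|\cC'|=O(N/M)\le O(N/k)$, using $k<M/c$.
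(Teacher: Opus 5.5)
Your proof is correct and follows the same approach as the paper: load each $(M/c)$-level cell's conflict list, build a $k$-level cutting of it in memory, and output the union. Your clipping of each output cell against its parent cell $C'$ makes rigorous the property-2 step that the paper dismisses as ``easy to verify''. Without it, an unclipped cell of the cutting of $P_{C'}$ may extend beyond $C'$ and contain points of $P\setminus P_{C'}$.

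One cleanup is needed in the I/O accounting. The per-output-cell term $|\cC_{C'}|$ in your displayed sum totals $O(N/k)$, which is not $O(N/B)$ when $k<B$. Drop it and rely only on your streaming argument: the whole output has $O(N)$ words, so writing it sequentially costs $O(N/B+|\cC'|)=O(N/B)$ I/Os.
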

\begin{proof}
We will establish that the three properties of a $k$-level shallow cutting of $P$
are satisfied by the algorithm. The number of cells constructed in the final cutting is $O(N/M) \cdot O(M/k)=O(N/k)$ (first property). It is easy to verify that the number of points
of $P$ lying inside each cell in the final cutting is $O(k)$ (second property).
Consider a point $q$ that dominates less than or equal to $k$ points in $P$. 
 By the third property of $M/c$-level shallow cutting of $P$, there exists a cell 
 $C$ that contains $q$. The number of points in the conflict list of $C$ 
 dominated by $q$ will still be less than or equal to $k$. Therefore, in the $k$-level shallow 
 cutting of the conflict list of $C$, there will exist some cell that
 contains 
 $q$ (third property).

Loading the conflict lists of all the 
cells in the $(M/c)$-level shallow 
cutting requires $O(N/M)\cdot O(M/B)=O(N/B)$ I/Os. For each cell in the $(M/c)$-level 
shallow cutting, writing the $k$-level shallow cutting of its conflict-list 
into external memory 
requires $O(M/B)$ I/Os. Thus, over all the cells in the $(M/c)$-level shallow 
cutting, it will require $O(N/M)\cdot O(M/B)=O(N/B)$ I/Os.
\end{proof}

Combining Lemma~\ref{lem:large-k} and 
Lemma~\ref{lem:small-k} completes the proof of Theorem~\ref{thm:shallow1}.

%===============
%Applications
%===============

%=================
%First Application
%=================

\section{Application-1: Offline 3-D dominance approximate counting}
\label{sec:approx}
Recall that $P$ is a set of $N$ points in 3-D and $Q$ is a set of 
query 3-D dominance ranges. Consider a parameter $\vare$
with a fixed value in $(0,1)$. In the offline 3-D dominance
approximate counting
problem, the goal is to report a value in the range $[(1-\vare)|P \cap q|, |P\cap q|]$, for all $q\in Q$. We will prove 
Theorem~\ref{thm:main-approx-cnt} in this section.

\subsection{Data structure and query algorithm}
%\vspace{0.1 in}
%\noindent\textbf{Data structure.} 
The overall data structure 
is recursive and we start at the root with pointset $P$.
Construct a $(N^{2/3}B^{1/3})$-level cutting $\cC$ based on $P$ (Theorem~\ref{thm:shallow1}). 
%Based on the cells in the cutting, construct 
%an offline-find-any structure (Theorem~\ref{lem:find-any}). 
Also, construct 
an additive error counting structure based on 
$P$ (Theorem~\ref{thm:add-approx}). Finally, for each cell 
$C \in \cC$, recurse on $P_C$, where $P_C$ is the 
conflict list of $C$. The recursion stops when the size 
of pointset becomes less than or equal to $M/c$, where 
$c$ is a sufficiently large constant.

%\vspace{0.1 in}
%\noindent\textbf{Query algorithm.} 
The query algorithm will start from the 
root of the data structure.
A point in $Q$ is {\em deep} if no cell in $\cC$ contains it; otherwise, it is {\em shallow}. Using the 
offline-find-any procedure on $\cC$ and $Q$, we will classify each 
point in $Q$ as shallow or deep. The result for the
deep query points are obtained by querying the 
additive error counting structure. If a point $q\in Q$ is shallow, then it is {\em assigned} to one of the cells containing it. The shallow points are handled recursively. 
Specifically, for each cell $C\in \cC$, {\em recurse} on $Q_c$, where $Q_c \subseteq Q$ is the set of 
shallow query points assigned to $C$.

To handle the base case, where $|P| \leq M/c$, first 
the points in $P$ are loaded into the memory. Then the query
objects are streamed into the 
main memory one block at a time, and
the results for these queries are obtained without 
any additional I/Os and written
to the external memory.

\subsection{Analysis.}
%\vspace{0.1 in}
%\noindent\textbf{Analysis.} 
We will first bound the 
number of I/Os performed by the algorithm.
\begin{lemma}\label{lem:const-veb}
The number of I/Os required to construct the data structure
is $O\left(\frac{|P|}{B}\left(\log_{M/B}\frac{|P|}{B}\right)^{O(1)}\right)$.
\end{lemma}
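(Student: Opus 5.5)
We bound the construction cost with a recurrence over the levels of the recursion. Let $S(n)$ denote the number of I/Os needed to build the structure on a pointset of size $n$. At a node with $n > M/c$ points we do three things:
\begin{enumerate}
\item build the $(n^{2/3}B^{1/3})$-level shallow cutting, which costs $O(sort(n))$ by Theorem~\ref{thm:shallow1};
\item build the additive error counting structure, which costs $O_{\vare}(sort(n))$ by Theorem~\ref{thm:add-approx};
\item write out the conflict lists of all cells and recurse on each of them.
\end{enumerate}
The cutting has $O(n/(n^{2/3}B^{1/3})) = O((n/B)^{1/3})$ cells, and each conflict list has size at most $10\,n^{2/3}B^{1/3}$. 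The total size of the conflict lists is therefore $O(n)$, so writing them costs $O(n/B)$ extra I/Os. This gives
\[ S(n) \le \sum_{C} S(|P_C|) + O_{\vare}\left(\frac{n}{B}\log_{M/B}\frac{n}{B}\right), \qquad \sum_C |P_C| \le c_1 n, \quad |P_C| \le 10\, n^{2/3}B^{1/3}. \]
The base case $n\le M/c$ costs $O(n/B)$, since the points only need to be loaded.

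The first step is to bound the recursion depth. Write $m = n/B$. A child has $m' \le 10\, m^{2/3}$, so $\log m$ shrinks by a factor of about $2/3$ per level, up to an additive constant. Recursion stops once $m \le M/(cB)$, that is, once $\log m \le \log(M/B) - O(1)$. The number of levels is therefore
\[ \log_{3/2}\left(\frac{\log(N/B)}{\log(M/B)}\right) + O(1) = O\left(\log\log_{M/B}\frac{N}{B}\right) \le O\left(\log_{M/B}\frac{N}{B}\right). \]
I will state this with care for the case where $M/B$ is only a constant. There the additive constant in $\log 10$ matters, and the argument needs $c$ chosen large enough that $10\,m^{2/3} < m$ whenever recursion continues.

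The second step, which I expect to be the main obstacle, is to control the blow-up of the total input size. At each level the sum of the child sizes is at most $c_1$ times the parent size, where $c_1$ is the overlap constant of the cutting: number of cells times $10k$, divided by $n$. At depth $j$ the total size is therefore at most $c_1^j N$. With $j = O(\log\log_{M/B}(N/B))$ levels, this is
\[ c_1^{\,O(\log\log_{M/B}(N/B))}\, N = N\left(\log_{M/B}\frac{N}{B}\right)^{O(1)}, \]
a polylogarithmic factor, which is exactly where the exponent $O(1)$ in the statement comes from. Each level contributes at most its total size divided by $B$, times $O_{\vare}(\log_{M/B}(N/B))$, because every node's input is at most $N$. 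Summing over all levels gives
\[ O_{\vare}\left(\frac{N}{B}\left(\log_{M/B}\frac{N}{B}\right)^{O(1)}\right). \]
The leaves, whose total size is also bounded by the depth-$j$ bound, add only $O(\cdot/B)$.

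One technicality remains: small nodes may have inputs smaller than $B$. To keep per-node overhead from dominating, I would charge any $O(1)$-I/O setup at a node to its parent's cutting output. The number of children, $O((n/B)^{1/3})$, is $O(n/B)$, so this charging costs nothing extra asymptotically.
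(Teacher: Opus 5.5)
Your proposal is correct and follows essentially the paper's route. Both reduce to the same recurrence: a node of size $n$ pays $O(\frac{n}{B}\log_{M/B}\frac{n}{B})$ and spawns $O((n/B)^{1/3})$ children of size $O(n^{2/3}B^{1/3})$. The paper verifies the guess $T(N)\le c_3\frac{N}{B}(\log_{M/B}\frac{N}{B})^{c_4}$ by substitution with $c_1(2/3)^{c_4}=\frac{1}{2}$, while you unroll the recurrence into a recursion tree of depth $O(\log\log_{M/B}\frac{N}{B})$ whose per-level overlap factor $c_1$ compounds to the same polylogarithmic factor. You also handle the factor $10$ in the conflict-list size and the per-node overhead more explicitly than the paper does.

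One slip is in your side remark about constant $M/B$. Increasing $c$ lowers the stopping threshold $M/c$, so it makes ``$10m^{2/3}<m$ whenever recursion continues'' harder to guarantee, not easier; that condition needs $M/(cB)$ to exceed roughly $10^3$. The degenerate case should instead be handled by assuming, without loss of generality, that $M/B$ exceeds a sufficiently large constant. The paper's own proof also glosses over this point.
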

\begin{proof}
 The $(N^{2/3}B^{1/3})$-level cutting (Theorem~\ref{thm:shallow1})
and the additive error counting structure (Theorem~\ref{thm:add-approx})
can be constructed in $O(sort(|P|)$ I/Os. 
Therefore, the total number of I/Os 
required to construct the data structure is:
\begin{align*}
T(N) = \begin{cases}
    c_1\left(\frac{N}{B}\right)^{1/3}\cdot T(N^{2/3}B^{1/3}) + \frac{c_2N}{B}\log_{M/B}\frac{N}{B}, \text{if $N\geq M/c$} \\
    O(N/B), \text{ otherwise}
\end{cases}
\end{align*}
where $c_1(N/B)^{1/3}$ is the number of cells in the cutting
and $c_2$ is the constant inside $O(sort(N))$.
We will establish that $T(N) \leq \frac{c_3N}{B}
\left(\log_{M/B}\frac{N}{B} \right)^{c_4}$, where 
$c_3=2c_2$ and $c_1\left(\frac{2}{3}\right)^{c_4} = \frac{1}{2}$.
As such, 
\begin{align*}
T(N) &\leq c_1\left(\frac{N}{B}\right)^{1/3}\cdot c_3\frac{N^{2/3}B^{1/3}}{B}
\left(\log_{M/B}\frac{N^{2/3}B^{1/3}}{B} \right)^{c_4} 
+\frac{c_2N}{B}\log_{M/B}\frac{N}{B}\\
&\leq c_1c_3\frac{N}{B}\cdot 
\left(\frac{2}{3}\right)^{c_4}\left(\log_{M/B}\frac{N}{B} \right)^{c_4} + 
\frac{c_2N}{B}\log_{M/B}\frac{N}{B}\\
&\leq \frac{c_2N}{B} \left(\log_{M/B}\frac{N}{B} \right)^{c_4} + 
\frac{c_2N}{B}\left(\log_{M/B}\frac{N}{B}\right)^{c_4}
=\frac{c_3N}{B}
\left(\log_{M/B}\frac{N}{B} \right)^{c_4}\qedhere
\end{align*}
\end{proof}

\begin{lemma}
The number of I/Os performed by the query algorithm 
is \\
$O_{\vare}\left(\frac{|P|}{B}\left(\log_{M/B}\frac{|P|}{B}\right)^{O(1)}+ 
\frac{|Q|}{B}\log_{M/B}\frac{N}{B}\right)$.
\end{lemma}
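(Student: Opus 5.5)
The plan is to split the query cost into three parts: the offline-find-any work, the additive-error counting work, and the base-case work. Each part is charged against the recursion tree built in Lemma~\ref{lem:const-veb}. I will write $N=|P|$ at the root and $N_j$ for the size of a node's pointset at recursion depth $j$. Since $N_{j+1}=N_j^{2/3}B^{1/3}$, we have $N_{j+1}/B=(N_j/B)^{2/3}$. Hence $\log_{M/B}(N_j/B)=(2/3)^j\log_{M/B}(N/B)$, and the depth of the recursion is $O(\log\log_{M/B}(N/B))$.

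\textbf{Step 1 (find-any cost).} At a node with pointset size $N_j$ and query set $Q_j$, the cutting has $O((N_j/B)^{1/3})$ cells. By Lemma~\ref{lem:find-any}, this step costs $O\bigl(\frac{(N_j/B)^{1/3}+|Q_j|}{B}\log_{M/B}\frac{N_j}{B}\bigr)$. The key point is that every query is assigned to at most one cell, so the query sets at a fixed depth $j$ partition a subset of $Q$. The query-dependent part at depth $j$ therefore sums to at most $\frac{|Q|}{B}(2/3)^j\log_{M/B}\frac{N}{B}$. Summing this geometric series over $j$ gives $O(\frac{|Q|}{B}\log_{M/B}\frac{N}{B})$.

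The cell-dependent part is bounded by the size of the cutting, which is dominated by the construction cost. That cost is already $O(\frac{N}{B}(\log_{M/B}\frac{N}{B})^{O(1)})$ by the same recurrence as in Lemma~\ref{lem:const-veb}. There is one subtlety: the ``$/B$'' for small query sets becomes a $+1$ per node when $|Q_j|<B$. This contributes one I/O per node of the recursion tree. The number of nodes is at most the total number of cells, which is $O(N/B)$ since cells at depth $j$ have size $\ge M/c\ge B$. So this term is absorbed as well.

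\textbf{Step 2 (additive-error queries).} Deep queries at a node are answered using Theorem~\ref{thm:add-approx} in $O_\vare(sort(N_j)+sort(|Q_j|))$ I/Os. The $sort(N_j)$ terms again follow the construction recurrence. The $sort(|Q_j|)$ terms sum over a depth level to $O(\frac{|Q|}{B}(2/3)^j\log_{M/B}\frac{N}{B})$, which is geometric exactly as in Step 1.

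\textbf{Correctness of the approximation.} A query $q$ is deep at a node with $N_j$ points exactly when no cell contains it. By the third property of the cutting, this means $q$ dominates more than $N_j^{2/3}B^{1/3}$ points. The additive error $\vare N_j^{2/3}B^{1/3}$ is then less than $\vare|P_j\cap q|$. We also need $P_j\cap q=P\cap q$. This holds because the assigned cell $C$ contains $q$, and dominance ranges are nested, so every point of $P$ dominated by $q$ lies in $C$. Inductively, every point of $P$ in $q$ survives into the current node's pointset.

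\textbf{Step 3 (base case).} Each leaf loads $O(M)$ points in $O(M/B)$ I/Os and streams its queries. Over all leaves this costs $O(N/B+|Q|/B+\#\text{leaves})=O(N/B+|Q|/B)$.

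I expect the main obstacle to be the bookkeeping in Step 1: making sure the per-node additive $O(1)$ and $(N_j/B)^{1/3}$ terms do not blow up across the exponentially many nodes. This is resolved by charging them to the node count and to the construction recurrence of Lemma~\ref{lem:const-veb}.
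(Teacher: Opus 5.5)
Your proof is correct and follows the paper's argument. The $sort(N_j)$ and cell-dependent terms are charged to the construction recurrence of Lemma~\ref{lem:const-veb}, and the query-dependent terms form a geometric series in $\log_{M/B}(N_j/B)$; you sum depth by depth using that the query sets at a fixed depth are disjoint, while the paper writes the equivalent per-query recurrence $q(N)\le q(N^{2/3}B^{1/3})+O_{\vare}(\frac{1}{B}\log_{M/B}\frac{N}{B})$.

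One slip in your extra bookkeeping is harmless. Conflict lists overlap and can be much smaller than $M/c$, so the number of nodes and the total leaf size need not be $O(N/B)$. However, both are $O\bigl(1+\sum_{v\text{ internal}}N_v/B\bigr)$, which lies within the construction cost of Lemma~\ref{lem:const-veb}. Hence the per-node $O(1)$ terms and the base case are still absorbed into the $\frac{|P|}{B}(\log_{M/B}\frac{|P|}{B})^{O(1)}$ term.
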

\begin{proof}
By Theorem~\ref{thm:add-approx} and Lemma~\ref{lem:find-any}, 
the number of I/Os performed by the query algorithm at the root is
 $O_{\vare}\left(sort(N) + sort(|Q|) +  \left( \frac{|\cC| + |Q|}{B}\right)\log_{M/B}\frac{|\cC|}{B}\right)$, which can be bounded by \\
$O_{\vare}\left(sort(N) + \frac{|Q|}{B}\log_{M/B}\frac{N}{B} \right)$.
The $O_{\vare}(sort(N))$ I/Os can be charged to the construction 
of the data structure, and hence, by the proof of 
Lemma~\ref{lem:const-veb},
it will be bounded by $O_{\vare}\left(\frac{|P|}{B}\left(\log_{M/B}\frac{|P|}{B}\right)^{O(1)}\right)$
in the overall query algorithm.

Therefore, we will focus on the $O_{\vare}\left(\frac{|Q|}{B}\log_{M/B}\frac{N}{B}\right)$ term.
As such, the amortized number of I/Os performed for a {\em single} query point is 
$O_{\vare}\left(\frac{1}{B}\log_{M/B}\frac{N}{B}\right)$.
Including recursion, the number of I/Os performed to answer a 
single query is: 
    
\[
q(N) \leq q(N^{2/3}B^{1/3}) + O_{\vare}\left(\frac{1}{B}\log_{M/B}\frac{N}{B}\right) =
O_{\vare}\left(\frac{1}{B}\log_{M/B}\frac{N}{B}\right), 
\] 
which holds because we end up with a geometrically decreasing series.
\end{proof}
%Now we will prove the correctness of the algorithm.
\begin{lemma}
For each $q\in Q$, the algorithm returns a $(1-\vare)$-approximation of $|P \cap q|$.
\end{lemma}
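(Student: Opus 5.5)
We argue by induction on the recursion. At any node the current point set is $P$ with $|P|=N$, and the query $q$ has reached this node. We must show the returned value $n_q$ satisfies $(1-\vare)|P\cap q|\le n_q\le |P\cap q|$. We assume the additive structure of Theorem~\ref{thm:add-approx} returns $n_q\le |P\cap q|$ with $|P\cap q|-n_q\le \vare N^{2/3}B^{1/3}$. If necessary, we first subtract the error bound so that it always underestimates; this only changes constants.

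\textbf{Base case.} When $|P|\le M/c$, the points are in memory and the count is computed exactly, so the claim is trivial.

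\textbf{Shallow queries.} Suppose $q$ is shallow, assigned to a cell $C\in\cC$ that contains $q$. Since $C$ is a dominance range containing the apex of $q$, every point of $P$ dominated by $q$ lies in $C$. Hence $P\cap q=P_C\cap q$. The recursive answer on $P_C$ is therefore, by induction, a valid $(1-\vare)$-approximation of $|P\cap q|$.

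\textbf{Deep queries.} Suppose $q$ is deep, so no cell of $\cC$ contains it. By the third property of the $(N^{2/3}B^{1/3})$-level shallow cutting (Theorem~\ref{thm:shallow1}), $q$ must dominate more than $N^{2/3}B^{1/3}$ points, that is, $|P\cap q|>N^{2/3}B^{1/3}$. The additive guarantee then gives
\[
|P\cap q|-n_q\le \vare N^{2/3}B^{1/3}<\vare|P\cap q|,
\]
so $n_q\ge(1-\vare)|P\cap q|$. Combined with $n_q\le|P\cap q|$, this is the required relative approximation.

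The only delicate step is the upper bound $n_q\le|P\cap q|$. Theorem~\ref{thm:add-approx} as stated bounds only $|P\cap q|-n_q$. If it does not already guarantee an underestimate, we plan to run the structure with error parameter $\vare/2$ and return $\max(0,\,n_q-\frac{\vare}{2}N^{2/3}B^{1/3})$, which is at most $|P\cap q|$. Its error is then below $\vare N^{2/3}B^{1/3}<\vare|P\cap q|$, so the deep-case argument goes through unchanged.
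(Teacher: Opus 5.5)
Your proof is correct and follows the paper's argument. The deep case is exactly the paper's: the third property of the $N^{2/3}B^{1/3}$-level cutting forces $|P\cap q|>N^{2/3}B^{1/3}$, so the additive error is at most $\vare|P\cap q|$. Your induction using $P\cap q=P_C\cap q$ for shallow queries makes explicit the paper's remark that the same argument applies at every node.

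Your concern about $n_q\le|P\cap q|$ is already settled by the construction in Section~\ref{sec:support}. It returns $n(C_q)$, a count over grid cells lying entirely inside $q$, which is always an underestimate. Your fallback of subtracting the error bound would only work under a two-sided error guarantee, and Theorem~\ref{thm:add-approx} does not state one.
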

\begin{proof}
For any deep query point $q$ at the root, the value $n_q$ 
reported by the additive error counting structure will be such that 
$n_q \geq |P \cap q| -\vare N^{2/3}B^{1/3} \geq (1-\vare)|P\cap q|$,
since $|P\cap q| \geq N^{2/3}B^{1/3}$ for a deep point (third
property of shallow cuttings). The same argument applies
at each node in the data structure (w.r.t. deep queries
at that node).
\end{proof}

%%%%%%%%

%=================
%Second Application
%=================
\section{Application-2: Offline 3-D dominance reporting}
\label{sec:reporting}
Let $P$ be a set of $N$ points in 3-D and let $Q$ be a set of 
3-D dominance ranges. In the offline 3-D dominance reporting
problem, the goal is to report $P \cap q$, for all $q\in Q$.
We will prove Theorem~\ref{thm:main-reporting} in this section.

%\vspace{0.1 in}
%\noindent\textbf{Data structure.} 
\subsection{Data structure}
We will use 
two data structures as black-box to design our final 
data structure.
\begin{enumerate}
    \item For all $0\leq i\leq \ell$,
    we will construct $k_i$-level shallow cutting based on $P$ 
    (Theorem~\ref{thm:shallow1}) 
    and carefully choose $\ell$
    such that:
    \[  k_i=Bt_i, \quad t_i=(N/B)^{\delta^i} \text{ for some 
    $\delta \in (0,1)$} \text{ and }  
    k_{\ell}= \max\{M,\log_{M/B}(N/B)\}.\]

    \item For all $0\leq i \leq \ell$, based
    on the conflict list of each cell in 
    the $k_i$-level, construct a 3-D dominance 
    reporting structure (Theorem~\ref{thm:reporting}).

    %\item For all $0 \leq i \leq \ell$, based on 
    %the cells in the $k_i$-level shallow cutting, 
    %build an instance of the 
    %offline-find-any structure (Theorem~\ref{lem:find-any}). 
\end{enumerate}

%\vspace{0.1 in}
%\noindent\textbf{Query algorithm.}
\subsection{Query algorithm}
For each dominance range $q \in Q$, the goal is to
find the integer
$j$ such that the apex point of $q$ does not lie 
inside any cell in the $k_{j+1}$-level cutting but 
lies inside at least one cell in the $k_{j}$-level cutting.
Initially, let $Q_{\ell} \leftarrow Q$. 
The algorithm will work in iterations and in each iteration
we will process the queries in $Q$ in a batched manner 
as follows:\\

%\vspace{0.1 in}
\noindent
(1)  In the $i$-th 
iteration ($0\leq i\leq \ell$), 
partition the dominance ranges in $Q_{\ell-i}$ into a 
{\em shallow set}  and a {\em deep set}. 
 A dominance range in $Q_{\ell -i}$ is classified 
as shallow if the apex point of the dominance range 
lies inside some cell in the $(k_{\ell-i})$-level cutting;
otherwise, it is classified as deep.\\ 
(2) Run the offline-find-any procedure
on $Q_{\ell-i}$ and the cells in the 
$(k_{\ell-i})$-level cutting (Theorem~\ref{lem:find-any}).
This procedure partitions the 
set $Q_{\ell-i}$
into shallow and deep, and 
for each shallow dominance query 
$q \in Q_{\ell-i}$, reports a cell 
$C_q$ in the cutting that
contains $q$. \\
(3) Let $Q^s_{\ell-i} \subseteq Q_{\ell-i}$ 
be the set of shallow ranges. Invoke 
\textsc{Report}($P,Q^s_{\ell-i}$), that reports
$P \cap q$, for all $q\in Q^{s}_{\ell-i}$.
Assign $Q_{\ell-i-1}$ to be the deep set in 
$Q_{\ell-i}$ and continue to the 
next iteration. \\
%(3) For each shallow dominance query $q \in Q_{\ell-i}$, 
%invoke \textsc{Report}($P,q,C_q$), which reports 
%$P \cap q$. 
%Assign $Q_{\ell-i-1}$ to be the deep set in 
%$Q_{\ell-i}$ and continue to the next iteration. \\
(4) At the end of the $\ell$-th iteration, $Q_{-1}$ is 
the set of unanswered queries. 
For each $q \in Q_{-1}$, do a brute-force scan of $P$ 
to report $P\cap q$ (since $|P \cap q|=\Omega(N)$). \\

%\vspace{0.1 in}
\noindent
\textsc{Report}($P, Q^s_{\ell-i}$): \\
1. For all $q\in Q^s_{\ell-i}$, report $P \cap q$ by the querying the 
3-D dominance reporting structure built for $C_q$.\\
2. As an exception, we will handle the case of
$k_{\ell}=M$ and $i=0$ differently. We will process 
each cell in the $(k_{\ell})$-level one-by-one.
Let $C$ be the current cell being processed.
Load the entire conflict 
list of $C$ is into the main-memory. The queries 
in $Q^s_{\ell-i}$ assigned to cell $C$ are read one
block at a time into the main-memory. Using
any internal memory algorithm, the 3-D dominance
reporting query is answered for the queries 
in the block, and the output is
written back to external memory.

%\vspace{0.1 in}

%%%%%%%%%%%%%
%\noindent\textbf{Analysis.} 
\subsection{Analysis}
The easy part of the analysis is bounding
the number of I/Os required to construct the data structure.
\begin{lemma}
The number of I/Os required to construct the 
data-structure is $O(sort(|P|)$.
\end{lemma}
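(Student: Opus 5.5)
We bound the two ingredients of the data structure separately and sum over the levels $i=0,\ldots,\ell$. Throughout, write $N=|P|$ and $t_i=(N/B)^{\delta^i}$, so that $k_i=Bt_i$ decreases doubly exponentially from $k_0=N$ down to $k_\ell$. By construction, $k_\ell \ge M$ and $k_\ell\ge \log_{M/B}(N/B)$.

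\textbf{Cuttings.} By Theorem~\ref{thm:shallow1}, the $k_i$-level shallow cutting costs $O(sort(N))$ I/Os. Summing naively over $\ell+1$ levels would give an extra factor $\ell=O(\log_{1/\delta}\log_{M}(N/B))$, so the plan is to avoid paying $sort(N)$ at each level. Lemma~\ref{lem:large-k} builds the whole hierarchy in a single run. For $k_\ell=\Omega(M)$, its proof already produces all intermediate $k_i$-level cuttings, together with their conflict lists, with $t_i=(N/B)^{\delta^i}$. The total cost of that run is
\[
\sum_i O\left(\frac{N}{k_i}\log_B\frac{N}{k_i}+\frac{N}{B}\log_{M/B}\frac{k_{i-1}}{B}\right).
\]
The second term telescopes, since $\sum_i \log_{M/B}(k_{i-1}/B)=\sum_i \delta^{i-1}\log_{M/B}(N/B)=O(\log_{M/B}(N/B))$ as a geometric series. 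The first term is dominated by its last index and is handled by Lemma~\ref{lem:b-is-fine}. So the plan is to run one instance of that construction with target $k=k_\ell$, using the same $\delta$, and keep every intermediate level. This costs $O(sort(N))$.

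\textbf{Reporting structures.} For level $i$ there are $O(N/k_i)$ cells, each with a conflict list of size $O(k_i)$. By Theorem~\ref{thm:reporting}, building the reporting structures for all cells of level $i$ costs
\[
O\left(\frac{N}{k_i}\cdot\frac{k_i}{B}\log_{M/B}\frac{k_i}{B}\right)=O\left(\frac{N}{B}\,\delta^i\log_{M/B}\frac{N}{B}\right),
\]
using $\log(k_i/B)=\log t_i=\delta^i\log(N/B)$. Summing this geometric series over $i\ge 0$ gives $O\left(\frac{N}{B}\log_{M/B}\frac{N}{B}\right)$. The case $k_\ell=M$ is no problem: the conflict lists fit in memory there, and the same bound holds anyway.

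\textbf{Expected obstacle.} The main subtlety is the first step, namely justifying that all the levels come out of one execution of the hierarchical construction rather than $\ell$ independent calls. The stages of Lemma~\ref{lem:large-k} use $k_i=\max(Bt_i,k)$ with exactly these $t_i$, so the intermediate cuttings are valid $k_i$-level cuttings. If instead each level were built separately, I would fall back on bounding $\ell$: we have $\ell=O(\log\log_{M/B}(N/B))$, and I would try to absorb this factor. That seems not to work directly, which is why the single-run argument is the one to use.
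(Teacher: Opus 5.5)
Your proposal is correct and follows the paper's proof: the cuttings cost $O(sort(N))$, and the reporting structures cost $\sum_{i} \frac{N}{k_i}\, sort(k_i)$, which is a geometric series because $\log(k_i/B)=\delta^i\log(N/B)$. You also make explicit that all $k_i$-level cuttings come out of a single run of the hierarchical construction of Lemma~\ref{lem:large-k}, rather than $\ell+1$ separate invocations of Theorem~\ref{thm:shallow1}; this fills in what the paper's one-line appeal to Theorem~\ref{thm:shallow1} leaves implicit.
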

\begin{proof}
Using Theorem~\ref{thm:shallow1}, in $O(sort(|P|))$ I/Os all the $k_i$-level 
shallow cuttings can be constructed. The number of 
 I/Os required to construct the 3-D dominance reporting 
 structure at a cell in the $k_i$-level shallow cutting 
 is $(sort(k_i))$ I/Os. Therefore, the number of I/Os 
 required to construct the 3-D dominance reporting
 structures across all the $\ell+1$ levels will be
 \[ \sum_{i=0}^{\ell} O\left(\frac{N}{k_i}\cdot sort(k_i) \right)=
 \sum_{i=0}^{\ell} O\left(\frac{N}{k_i}\cdot 
 \frac{k_i}{B}\log_{M/B}(k_i/B)\right)=O(sort(N)). \qedhere\]
 \end{proof}
 
 The non-trivial part of the analysis is bounding the number of I/Os
 to perform $\ell+1$ iterations of offline-find-any procedure. The
 $j$-th iteration of offline-find-any procedure has two terms: 
 (a) the {\em first term} 
 is $O\left(\frac{|\cC_{\ell-j}|}{B}\cdot\log_{M/B}\frac{|\cC_{\ell-j}|}{B}\right)$, 
 and (b) the {\em second term} is 
 $O\left(\frac{|Q_{\ell-j}|}{B}\cdot\log_{M/B}\frac{|\cC_{\ell-j}|}{B}\right)$. We will 
 analyze both the terms separately.

\begin{lemma}\label{lem:first-term}
Adding up the first term over $\ell+1$ iterations is equal to $O(sort(|P|))$.
\end{lemma}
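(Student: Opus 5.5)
Since $|\cC_{\ell-j}| = O(N/k_{\ell-j})$ by the first property of shallow cuttings, the first term of the $j$-th iteration is
\[
O\left(\frac{N}{B\,k_{\ell-j}}\log_{M/B}\frac{N}{B}\right).
\]
Re-indexing by $i=\ell-j$, the sum over all $\ell+1$ iterations becomes
\[
O\left(\frac{N}{B}\log_{M/B}\frac{N}{B}\cdot \sum_{i=0}^{\ell}\frac{1}{k_i}\right).
\]
This is at most $O(\mathit{sort}(N))$ as long as $\sum_{i=0}^{\ell} 1/k_i = O(1)$.

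To bound that sum I use the choice of the levels. By construction $k_i = Bt_i$ with $t_i=(N/B)^{\delta^i}$, and the $t_i$ decrease as $i$ grows, so $k_\ell$ is the smallest level. Adjacent levels satisfy
\[
\frac{k_{i-1}}{k_i}=\frac{t_{i-1}}{t_i}=t_i^{1/\delta-1}\ge t_\ell^{1/\delta-1},
\]
where $t_\ell = k_\ell/B \ge M/B \ge 1$. If I also have $k_{i-1}\ge 2k_i$ (as in Section~\ref{sec:our-algo}, where $\delta$ is chosen so that $k_i\le k_{i-1}/10$), then $1/k_i$ is a geometric series dominated by its largest term $1/k_\ell$. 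This gives $\sum_i 1/k_i = O(1/k_\ell) = O(1)$, since $k_\ell\ge 1$.

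In fact a cruder bound already suffices: each level contributes at most $O(N/k_i)$ cells, and I only need
\[
\sum_i \frac{N}{k_i} = O\left(\frac{N}{k_\ell}\right)=O\left(\frac{N}{M}\right)\le O\left(\frac{N}{B}\right),
\]
using $k_\ell\ge M\ge B$. This makes the total first term $O\bigl(\tfrac{N}{B}\log_{M/B}\tfrac{N}{B}\bigr)=O(\mathit{sort}(|P|))$.

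The one step that needs care is making sure the levels really decrease geometrically near the bottom, where $t_i$ is close to $M/B$ and the ratio $t_i^{1/\delta-1}$ could be near $1$ if $M/B$ is a small constant. I would handle this as in Lemma~\ref{lem:number}, by fixing $\delta$ small enough that $k_{i-1}\ge 10k_i$ whenever $k_i\ge M/c$. Alternatively, there are only $\ell=O(\log_{M/B}(N/B))$ levels and each term is $N/k_i\le N/M$. That gives a total of at most $O\bigl(\tfrac{N}{M}\log_{M/B}\tfrac NB\bigr)\cdot \tfrac{1}{B}\log_{M/B}\tfrac{N}{B}$. Lemma~\ref{lem:b-is-fine}-style reasoning ($N/M$ versus $N/B$ absorbing a logarithmic factor) should then close the argument.
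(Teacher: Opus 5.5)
Your main argument is correct, but it takes a different route from the paper.

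\textbf{The paper's route.} The paper never uses geometric decay of the levels. It bounds $\sum_{i=0}^{\ell}1/k_i$ crudely by $(\ell+1)/k_\ell$. It then relies on the specific bottom level $k_\ell=\max\{M,\log_{M/B}(N/B)\}$. Since the $t_i$ fall doubly exponentially, $\ell=O(\log_{M/B}(N/B))$. So in either case $k_\ell\ge\log_{M/B}(N/B)$ pays for the number of levels, and $\ell/k_\ell=O(1)$.

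\textbf{Your route.} You use $k_{i-1}/k_i=t_i^{1/\delta-1}\ge (M/B)^{1/\delta-1}$, which is at least $2$ once, e.g., $M\ge 2B$ and $\delta\le 1/2$. This gives $\sum_i 1/k_i\le 2/k_\ell$. If the bottom level is rounded to $\max\{M,\log_{M/B}(N/B)\}$ rather than being exactly $Bt_\ell$, the bound is at most $3/k_\ell$, since only the last ratio can be small.

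\textbf{Comparison.} Your route gives a sharper bound. It also does not use the $\log_{M/B}(N/B)$ part of $k_\ell$ for this lemma; that part is really needed in Lemma~\ref{lem:charging}. Its price is requiring $M/B$ to be at least a constant greater than $1$ and $\delta$ to be small enough. The paper makes this assumption anyway: it proves exactly this decay, $t_j\ge 2t_j^{\delta}$, in Lemma~\ref{lem:charging}. The paper's route for this lemma needs only the level count, with no relation between $\delta$ and $M/B$.

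\textbf{The fallback does not close.} Your closing ``alternatively'' argument fails as written. Using only $k_i\ge M$, you get $O\bigl(\frac{N}{BM}(\log_{M/B}\frac{N}{B})^2\bigr)$. Absorbing the extra factor requires $\log_{M/B}(N/B)=O(M)$, which fails for large enough $N$. Lemma~\ref{lem:b-is-fine} cannot help: it only trades a factor depending on $M$ and $B$ ($B\log(M/B)\le cM\log B$), never one that grows with $N$. The repair is to use $k_i\ge k_\ell\ge\log_{M/B}(N/B)$ instead of $k_i\ge M$. With that change the fallback becomes exactly the paper's proof.
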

\begin{proof}
There are two cases to consider.
First, assume $k_{\ell}=\log_{M/B}(N/B)\geq M$. Then, we 
have 
\[\sum_{i=0}^{\ell}O\left(\frac{|P|}{Bk_i}\log_{M/B}\frac{|P|}{Bk_i}\right)
=O\left(\frac{|P|}{B}\log_{M/B}\frac{|P|}{B}\right)
\cdot \sum_{i=0}^{\ell}\frac{1}{k_i}=
O\left(\frac{|P|}{B}\log_{M/B}\frac{|P|}{B}\right),\]
since $\sum_{i=0}^{\ell}\frac{1}{k_i} 
\leq \frac{\ell}{k_{\ell}}
= \frac{O\left(\log\left(\frac{\log(N/B)}{\log(k_{\ell}/B)} \right)\right)}{\log_{M/B}(N/B)} 
=O\left( \frac{\log_{k_{\ell}/B}(N/B)}{\log_{M/B}(N/B)}\right)
=O\left(\frac{\log(M/B)}{\log(k_{\ell}/B)}\right)
=O(1)$.

Next, assume that $k_{\ell}=M \geq \log_{M/B}(N/B)$. 
Then, we have $\ell=\log \left(\frac{\log(N/B)}{\log(M/B)} \right)=\log_{M/B}(N/B)$, and hence,
\[\sum_{i=0}^{\ell}O\left(\frac{|P|}{Bk_i}\log_{M/B}\frac{|P|}{Bk_i}\right)= \left(\frac{\ell}{M}\right)\cdot O\left(\frac{|P|}{B} \log_{M/B}\frac{|P|}{B}\right)=O\left(\frac{|P|}{B}\log_{M/B}\frac{|P|}{B}\right).\qedhere \]
\end{proof}

Now we will analyze the second term.
Consider a query point $q$ that was deep in the $i$-th 
iteration, but became shallow in the $(i+1)$-th iteration.
Then, $\frac{|P\cap q|}{B}=\Omega\left(\frac{k_{\ell-i}}{B}\right)$.
The number of I/Os performed by the 
offline-find-any procedure in the $j$-th iteration, for 
any $j\leq i$, will be $O\left(\frac{|Q_{\ell-j}|}{B}\log_{M/B}\frac{|P|}{B} \right)$.
As such, {\em amortized cost} associated with $q$ 
for the offline-find-any
procedure is $\frac{1}{|Q_{\ell-j}|}\cdot O\left(\frac{|Q_{\ell-j}|}{B}\log_{M/B}\frac{|P|}{B} \right)=O\left(\frac{1}{B}\log_{M/B}\frac{|P|}{B} \right)$.
For $i$ iterations, the total amortized cost 
associated with $q$ will be 
$O\left(\frac{i}{B}\log_{M/B}\frac{|P|}{B} \right)$.
The following lemma establishes that the total amortized 
cost associated with $q$ for participating in  $i$ iterations 
of the offline-find-any 
procedure can be charged to the size of the output reported
for query $q$. 
%(the filtering search idea of Chazelle~\cite{c86}).

\begin{lemma}\label{lem:charging}
Consider a query point $q$ that is deep in the $i$-th iteration, but shallow 
in the $(i+1)$-th iteration. Then,  
$\frac{i}{B}\log_{M/B}(N/B) =O\left( \frac{k_{\ell-i}}{B}\right)
=O\left(\frac{|P\cap q|}{B}\right)$.
\end{lemma}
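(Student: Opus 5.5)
The plan is to prove the two equalities separately; the second is essentially the shallow-cutting property and the first is a numerical comparison between $i$ and $k_{\ell-i}$.

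\emph{The second equality.} Since $q$ is deep in the $i$-th iteration, its apex lies in no cell of the $k_{\ell-i}$-level cutting. By the third property of shallow cuttings, $q$ therefore dominates more than $k_{\ell-i}$ points, so $|P\cap q| > k_{\ell-i}$. This gives $k_{\ell-i}/B = O(|P\cap q|/B)$ directly.

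\emph{Reduction of the first equality.} It suffices to show $i\log_{M/B}(N/B) = O(k_{\ell-i})$. Write $L=\log_{M/B}(N/B)$. The choice $k_\ell=\max\{M,L\}$ ensures $k_{\ell-i}\ge k_\ell\ge L$, which covers every bounded $i$. So the content of the claim is that as $i$ grows, $k_{\ell-i}$ grows faster than $iL$.

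\emph{Growth of $k_{\ell-i}$.} Going from level $\ell-i+1$ to level $\ell-i$ raises $t$ to the power $1/\delta$. Hence $t_{\ell-i} = t_\ell^{(1/\delta)^i}$, and $k_{\ell-i} = B\,t_{\ell-i}$ grows doubly exponentially in $i$. I would split into two cases according to which term of the maximum defines $k_\ell$.
\begin{itemize}
\item If $k_\ell=L\ge M$, I would use the crude bound $k_{\ell-i}\ge 10^{i}k_\ell$, which follows from the stage-by-stage decrease by a factor of $10$ guaranteed in the hierarchy. Then $k_{\ell-i}\ge 10^i L \ge iL$, which finishes this case.
\item If $k_\ell=M\ge L$, then again $k_{\ell-i}\ge 10^i M\ge 10^i L\ge iL$.
\end{itemize}

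\emph{Main obstacle.} The only real subtlety is justifying the factor-$10$ decrease $k_{j}\le k_{j-1}/10$ for these $k_j$, which are defined purely as $Bt_j$ without the $\max$ with $k$. This needs $t_{j-1}^{1-\delta}\ge 10$, that is, $t_{j-1}\ge 10^{1/(1-\delta)}$. That holds because $t_{j-1}\ge t_\ell = k_\ell/B \ge M/B$, and $M/B$ is at least a suitable constant under the usual tall-cache-type assumption, or alternatively by the choice of $\delta$ made in the first piece. If that is not available, I would instead use the doubly exponential form: $t_{\ell-i}\ge t_\ell^{(1/\delta)^i}$ gives $k_{\ell-i}\ge k_\ell\cdot t_\ell^{(1/\delta)^i-1}\ge k_\ell\cdot 2^{\Omega(i)}$ as soon as $t_\ell\ge 2$. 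This again dominates $i\,k_\ell\ge iL$ up to constants. Combining the first and second equalities gives the lemma.
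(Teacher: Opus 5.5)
Your proof is correct, but for the first equality it takes a different route from the paper's. The second equality is handled exactly as in the paper: a point that is deep in the $i$-th iteration lies in no cell of the $k_{\ell-i}$-level cutting, so by the third property $|P\cap q|>k_{\ell-i}$.

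The paper argues additively. From $t_j^{1-\delta}\ge 2$ it shows that the gaps $k_j-k_{j+1}$ are non-increasing in $j$ and that $k_{\ell-1}-k_\ell\ge k_\ell$. It then telescopes $k_{\ell-i}-k_\ell=\sum_{j=\ell-i}^{\ell-1}(k_j-k_{j+1})\ge i\,k_\ell$. This gives exactly the linear growth $k_{\ell-i}\ge (i+1)k_\ell$ that is needed, with no constant depending on $\delta$.

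You argue multiplicatively, through the ratio $k_{j-1}/k_j=t_{j-1}^{1-\delta}$, and obtain exponential growth, which dominates $i\,k_\ell$ with room to spare. Your factor-$10$ version needs $t_\ell^{1-\delta}\ge 10$, a little more than the paper's $t_\ell^{1-\delta}\ge 2$. It also borrows the choice of $\delta$ from the first piece of the construction, which the second application does not restate.

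Your fallback, however, is more economical than the paper's argument. By Bernoulli's inequality, $(1/\delta)^i-1\ge i(1/\delta-1)$. So $t_\ell\ge 2$ (i.e., $M\ge 2B$) already gives $k_{\ell-i}\ge 2^{i(1/\delta-1)}k_\ell\ge (1/\delta-1)\ln 2\cdot i\,k_\ell$. You pay only a $\delta$-dependent constant, instead of requiring $M/B\ge 2^{1/(1-\delta)}$ as the paper does.
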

\begin{proof}
For all $0\leq j\leq \ell-2$, we first establish that
%For all $0\leq i \leq \ell-2$, we first establish that
\begin{equation}\label{eqn:diff}
k_j - k_{j+1} \geq k_{j+1} - k_{j+2}.
\end{equation}
Since $\frac{M}{B}=\Omega(1)$, we claim that
$t_{\ell}=\frac{k_{\ell}}{B}=\frac{1}{B}\max\{M,\log_{M/B}(N/B)\} \geq 
2^{1/1-\delta}$. This in turn implies
$t_j \geq t_{\ell} \geq 2^{1/1-\delta}$, then
$t_j^{1-\delta} \geq 2$ and finally
$t_j \geq 2t_j^{\delta}$. Therefore,
\[t_j + t_j^{\delta^2} \geq 2t_j^{\delta} \implies
t_j - t_{j+1} \geq t_{j+1} - t_{j+2} \implies 
k_j - k_{j+1} \geq k_{j+1} - k_{j+2}.\]
Via similar calculations it can be established that
\begin{equation}\label{eqn:base}
k_{\ell-1}-k_{\ell} \geq k_{\ell}.
\end{equation}

Finally, by combining Equations~\ref{eqn:diff} and \ref{eqn:base}, we observe that 
$k_{\ell-i} -k_{\ell} = \sum_{j=\ell-i}^{\ell-1}(k_j-k_{j+1})
=\Omega(ik_{\ell})$, which implies $k_{\ell-i}=\Omega(ik_{\ell})=\Omega\left(i\log_{M/B}(N/B) \right)$.
\end{proof}

\begin{lemma}\label{lem:sec-term}
Adding up the second term over $\ell+1$ iterations is equal to 
$O(K/B)$.
\end{lemma}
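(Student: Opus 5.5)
The plan is a charging argument: every query's share of the second term is charged to that query's output size, using the amortized accounting set up just before Lemma~\ref{lem:charging}. The second term in iteration $j$ is $O\left(\frac{|Q_{\ell-j}|}{B}\log_{M/B}\frac{|\cC_{\ell-j}|}{B}\right)$, and $|\cC_{\ell-j}|\le |P|$. So summing over iterations, each query $q$ contributes $O\left(\frac{1}{B}\log_{M/B}\frac{|P|}{B}\right)$ for every iteration in which $q\in Q_{\ell-j}$. Exchanging the order of summation gives
\[\sum_{j=0}^{\ell} \frac{|Q_{\ell-j}|}{B}\log_{M/B}\frac{|P|}{B} = \sum_{q\in Q} \frac{m_q}{B}\log_{M/B}\frac{|P|}{B},\]
where $m_q$ is the number of iterations in which $q$ participates.

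Next we split $Q$ into three classes according to when $q$ leaves the iteration loop.

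\textbf{(a)} $q$ is shallow already in iteration $0$. Then $m_q=1$, and its cost is $O\left(\frac{1}{B}\log_{M/B}\frac{N}{B}\right)$. This is not charged to the output. It is absorbed by the term $\frac{|Q|}{B}\log_{M/B}\frac{|P|}{B}$ allowed in Theorem~\ref{thm:main-reporting}. Alternatively, if the statement is read strictly as $O(K/B)$, we note that $|P\cap q|$ may be $0$ here, so this first-iteration cost must be accounted separately.

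\textbf{(b)} $q$ is deep in iterations $0,\dots,i$ and shallow in iteration $i+1$, for some $i\ge 0$. Then $m_q=i+2=O(i+1)$. Being deep with respect to the $k_{\ell-i}$-level cutting means $q$ lies in no cell of that cutting, so by the third property of shallow cuttings $|P\cap q|> k_{\ell-i}$. Lemma~\ref{lem:charging} then gives $\frac{i}{B}\log_{M/B}(N/B)=O(|P\cap q|/B)$. The extra constant term is $\frac{1}{B}\log_{M/B}(N/B)$, and it is bounded by $k_\ell/B\le k_{\ell-i}/B$ because $k_\ell\ge\log_{M/B}(N/B)$. Hence the cost of $q$ is $O(|P\cap q|/B)$.

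\textbf{(c)} $q$ remains deep after all $\ell+1$ iterations, so $q\in Q_{-1}$. Then $|P\cap q|\ge k_0=\Omega(N)$. Its total cost is $O\left(\frac{\ell}{B}\log_{M/B}\frac{N}{B}\right)$, which is $O(N/B)$ by the same estimate, since $k_0\ge i k_\ell$ up to constants.

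Summing over $q$ gives $\sum_q O(|P\cap q|/B)=O(K/B)$, plus the first-iteration term from class (a). The main obstacle is the boundary accounting: we must verify $k_\ell\ge\log_{M/B}(N/B)$, so that the additive $+1$ in $m_q$ is covered; we must handle the off-by-one between deep at $i$ and shallow at $i+1$; and we must treat class (a) queries with empty output. I would handle the last point by charging the first iteration to the $|Q|$ term, which is already present in the theorem's bound.
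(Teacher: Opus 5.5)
This is essentially the paper's proof: swap the order of summation so each query pays $O\bigl(\frac{1}{B}\log_{M/B}\frac{N}{B}\bigr)$ per iteration it takes part in, then charge its participation count to $|P\cap q|/B$ via Lemma~\ref{lem:charging}. Your extra bookkeeping is justified rather than pedantic. The paper applies Lemma~\ref{lem:charging} uniformly, glossing over the $+O(1)$ in the participation count (which you correctly absorb using $k_\ell\ge\log_{M/B}(N/B)$) and over queries already shallow in iteration~$0$, whose output can be empty, so the literal $O(K/B)$ claim fails and the true bound is $O\bigl(\frac{K}{B}+\frac{|Q|}{B}\log_{M/B}\frac{N}{B}\bigr)$; this is harmless, since that $|Q|$ term already appears in Lemma~\ref{lem:rep-qt} and Theorem~\ref{thm:main-reporting}.
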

\begin{proof}
For each query $q\in Q$, let $q(i)$ be the number of iterations of 
offline-find-any procedures $q$ participates. Then, by the 
amortization argument, we have 
$\sum_{j=0}^{\ell}O\left(\frac{|Q_{\ell-j}|}{B}\log_{M/B}\frac{N}{B} \right)
=\sum_{q\in Q}O\left(\frac{q(i)}{B}\log_{M/B}\frac{N}{B}\right)$.
Via Lemma~\ref{lem:charging}, we have 
$\sum_{q\in Q}O\left(\frac{q(i)}{B}\log_{M/B}\frac{N}{B}\right)
=\sum_{q\in Q}O\left(\frac{|P\cap q|}{B} \right)=O(K/B)$.
\end{proof}

\begin{lemma}\label{lem:rep-qt}
The number of I/Os performed by the query algorithm is 
$O\left(sort(N)+ \frac{|Q|}{B}\log_{M/B}\frac{N}{B} + \frac{K}{B}\right)$.
\end{lemma}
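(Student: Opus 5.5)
The total cost of the query algorithm splits into four parts, and I would bound each separately and add them: (i) the $\ell+1$ offline-find-any invocations of step (2); (ii) the \textsc{Report} calls of step (3), both the regular case (querying the reporting structure of $C_q$) and the exceptional in-memory case $k_\ell=M$, $i=0$; (iii) the brute-force scans of step (4); and (iv) the bookkeeping of splitting $Q_{\ell-i}$ into shallow and deep sets and routing queries to their cells.

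For (i), Lemma~\ref{lem:find-any} bounds the $j$-th invocation by the first term plus the second term. By Lemma~\ref{lem:first-term}, the first terms sum to $O(sort(N))$. By Lemma~\ref{lem:sec-term}, the second terms sum to $O(K/B)$ for every query that eventually becomes shallow. Two kinds of query need separate care. A query that is already shallow in iteration $0$ takes part only once, so it contributes $O(\frac{1}{B}\log_{M/B}\frac{N}{B})$, which is where the $\frac{|Q|}{B}\log_{M/B}\frac{N}{B}$ term comes from. A query that reaches $Q_{-1}$ takes part in all $\ell+1$ iterations, so Lemma~\ref{lem:charging} applied with $i=\ell$ gives a cost of $O(k_0/B)=O(N/B)=O(|P\cap q|/B)$, which is charged to its output.

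For (ii), in the regular case, when $q$ is shallow at level $k_{\ell-i}$ with $i\ge 1$, I would query the structure of Theorem~\ref{thm:reporting} built on the conflict list of $C_q$. That list has size $O(k_{\ell-i})$, so the cost is $O((k_{\ell-i}/B)^{\gamma}+|P\cap q|/B)$. Since $q$ was deep at level $k_{\ell-i+1}$, we have $|P\cap q|>k_{\ell-i+1}$. Also $(k_{\ell-i}/B)^{\gamma}=t_{\ell-i}^{\gamma}\le t_{\ell-i}^{\delta}=t_{\ell-i+1}\le k_{\ell-i+1}/B$ when $\gamma\le\delta$. So this cost is $O(|P\cap q|/B)$.

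The iteration-$0$ queries are the main obstacle, because there is no lower bound on their output. When $k_\ell=M$, the exceptional branch handles them: each cell's conflict list is loaded in $O(M/B)$ I/Os, for $O(N/B)$ over all $O(N/M)$ cells, and then each cell's assigned queries and outputs are streamed, costing $O(|Q|/B+K/B)$ beyond $O(N/M)$ extra block-boundary I/Os. When instead $k_\ell=\log_{M/B}(N/B)>M$, I would pay $O((k_\ell/B)^{\gamma})\le O(k_\ell/B)=O(\frac1B\log_{M/B}\frac NB)$ per query, plus output. This stays within the $|Q|$ term, although I would double-check the case $k_\ell<B$, where one I/O per query might exceed it. For such small cells I would instead fall back to the exceptional in-memory scheme.

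For (iii), each query in $Q_{-1}$ costs $O(N/B)$, and its output is $|P\cap q|=\Omega(k_0)=\Omega(N)$, so this is $O(K/B)$. For (iv), the partitioning and routing are done by sorting, giving $O(sort(|Q_{\ell-i}|))$ per iteration. These sums are absorbed into the second terms, because each iteration's sort costs the same as its find-any second term. Adding (i)--(iv) gives the bound stated in Lemma~\ref{lem:rep-qt}.
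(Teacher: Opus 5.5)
Your proposal is correct and follows the paper's proof. You use the same three-way split of the \textsc{Report} cost: $i=0$ with $k_\ell=M$ via the in-memory branch, $i=0$ with $k_\ell=\log_{M/B}(N/B)$ via $(k_\ell/B)^{\gamma}\le k_\ell/B$, and $i\ge 1$ via $t_{\ell-i}^{\gamma}\le t_{\ell-i+1}\le |P\cap q|/B$. You handle the find-any cost through Lemmas~\ref{lem:first-term}, \ref{lem:charging} and \ref{lem:sec-term}, as the paper does. You are actually more careful than the paper in attributing the $\frac{|Q|}{B}\log_{M/B}\frac{N}{B}$ term to the iteration-$0$ find-any cost and in charging the step-(4) scans to the output.

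Two small points. First, the case $k_\ell<B$ cannot occur, since $k_\ell\ge M\ge B$. Second, in (iv) a full sort of $Q_{\ell-i}$ costs $\frac{|Q_{\ell-i}|}{B}\log_{M/B}\frac{|Q_{\ell-i}|}{B}$. That can exceed the find-any second term, which has $\log_{M/B}\frac{|\cC_{\ell-i}|}{B}$, when $|Q|$ is much larger than $N$. So do not sort: take the shallow/deep labels and cell assignments directly from the find-any output. Only the in-memory branch needs queries grouped by cell, and that is a distribution into $|\cC_{\ell-i}|$ buckets.
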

\begin{proof}
The 
 first case we consider is $i=0$ and $k_{\ell}=M$.
Then the number of I/Os performed is dominated by 
the I/Os needed to (a) read the conflict list of all 
the cells in $k_{\ell}$-level, (b) read the queries 
in $Q_{\ell}$, and (c) write the output back to the 
external memory. In total, this requires only 
$O\left(\frac{|P| + |Q|}{B} + \frac{K}{B}\right)$ I/Os.

 For all $0\leq i \leq \ell$, recall that $Q_{\ell-i}^s \subseteq Q_{\ell-i}$ 
 is the shallow set in the $i$-th iteration.
 The second case we consider is $i=0$ and 
 $k_{\ell}=\log_{M/B}(N/B)$.
 For each $q\in Q_{\ell}^s$ , 
 querying the 3-D dominance reporting structure 
built on $P \cap C_q$ requires
\[O\left(|Q_{\ell}^s|\left(\frac{\log_{M/B}(N/B)}{B} \right)^{\gamma} + \sum_{q\in Q_{\ell}^s}\frac{|P\cap q|}{B} \right)
\leq O\left(\frac{|Q|}{B}\log_{M/B}\frac{N}{B} + \frac{K}{B}\right) \text{ I/Os}. \]

The final case we consider is $i\in [1,\ell]$. 
Consider any $i\in [1,\ell]$. Then for any 
$q\in Q_{\ell-i}^s$,
\begin{align*}
\left(\frac{k_{\ell-i}}{B} \right)^{\gamma}
+ \frac{|P\cap q|}{B} \leq t_{\ell-i}^{\gamma}
+ \frac{|P\cap q|}{B} \leq 
(t_{\ell-i+1})^{\gamma/\delta} + \frac{|P\cap q|}{B} \leq
t_{\ell-i+1} + \frac{|P\cap q|}{B} 
\leq 2\cdot\frac{|P\cap q|}{B},
\end{align*}
where we use the fact that $\delta \geq \gamma$
and $|P\cap q|\geq k_{\ell-i+1}$.
Therefore, for all $1\leq i\leq \ell$ and 
for all $q\in Q_{\ell-i}^s$, 
querying the 3-D dominance reporting structure 
built on $P \cap C_q$ requires
\[\sum_{i=1}^{\ell}\sum_{q\in Q_{\ell-i}^s} 
O\left( \left(\frac{k_{\ell-i}}{B} \right)^{\gamma}+\frac{|P\cap q|}{B}\right)
=\sum_{i=1}^{\ell}\sum_{q\in Q_{\ell-i}^s} O\left(\frac{|P\cap q|}{B}\right)=O(K/B) \text{ I/Os}.\]

Finally, from Lemma~\ref{lem:first-term} and 
Lemma~\ref{lem:sec-term}, the number of 
I/Os performed by all the iterations of the 
offline-find-any query procedure is bounded by 
$O\left( sort(N) + \frac{K}{B}\right)$.
\end{proof}

%=====================
%Supporting Structures
%=====================
\section{A common framework for supporting structures}\label{sec:support}
In this section we will construct data structures that
are needed in the algorithms for constructing the $k$-level 
shallow cutting in Section~\ref{sec:our-algo} 
and the two applications 
(in Section~\ref{sec:approx} and \ref{sec:reporting}). 
We will present a common framework for constructing these
data structures, and in the process 
prove Theorem~\ref{thm:reporting} and Theorem~\ref{thm:add-approx}.

\begin{figure}[h]
    \centering
    \includegraphics{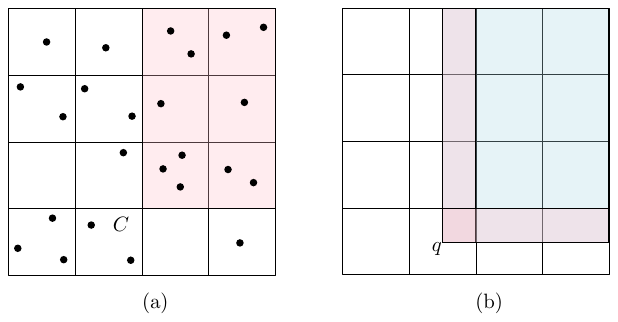}
    \caption{For simplicity, the root node is shown in 2-D.
    (a) An example with $24$ points and $\alpha=5$. If the apex point of a query dominance range lies 
    in cell $C$, then the cells shaded pink lie completely inside
    the query region. Therefore, for cell $C$ we have $n(C)=10$. (b) The blue portion of the query region is handled at the root itself. We will recurse on the pink portion.}
    \label{fig:root-node}
\end{figure}

\subsection{Proof of Theorem~\ref{thm:reporting}}
3-D dominance $x,y,z$-selection queries can be trivially
reduced to $O(\log_2N)$ 3-D dominance counting queries 
(via binary search). Therefore, we will focus our discussion 
on the reporting and the counting queries.
Our final data structure will be a tree structure.
We will first describe the root node of the structure. 

\vspace{0.1 in}
\noindent\textbf{The root node.}
%\paragraph{The root node.} 
Let $P$ be a set of $N$ points in 3-D. Let $\cX=\{X_1, X_2,\ldots,X_{\alpha}\}$ be 
a set of $\alpha$ planes such that 
(a) they are perpendicular to the $x$-axis, 
(b) for all $i< j$, plane $X_i$ has a smaller $x$-coordinate
than $X_j$ , and 
(c) the number of points of $P$ lying between two consecutive planes is $N/(\alpha-1)$.
The value of $\alpha$ will be determined later.
Analogously define  a  set $\cY=\{Y_1,\ldots, Y_{\alpha}\}$ 
consisting of $\alpha$ planes w.r.t. the $y$-axis 
and a set $\cZ=\{Z_1,\ldots,Z_{\alpha}\}$ consisting 
of $\alpha$ planes w.r.t. the $z$-axis. 
For all $1\leq i \leq \alpha$, let $X_i^{+}$ and $X_i^{-}$ be the locus of points in 3-D 
that have a higher and lower $x$-coordinate, respectively, than the $x$-coordinate 
of the plane $X_i$. Analogously, for all $1\leq i\leq \alpha$, define 
$Y_i^+, Y_i^{-}, Z_i^{+}, \text{ and } Z_i^{-}$. Then, for all 
$1 \leq i,j,k \leq \alpha{-}1$, a {\em cell} $(i,j,k)$ 
is the cuboid $(X_i^+, X_{i+1}^-) \times (Y_j^+, Y_{j+1}^-) \times 
(Z_k^+, Z_{k+1}^-)$. Sets 
$\cX,\cY, \text{ and } \cZ$ can be constructed 
trivially in $sort(N)$ I/Os by sorting $P$ in the 
respective dimension. See Figure~\ref{fig:root-node}(a).

Define {\em conflict list} of a cell to be the points of $P$ lying 
inside the cell. The next task is to compute the conflict list of 
each cell. 
\begin{itemize}
\item For all $1\leq i \leq \alpha{-}1$, compute the conflict list, say $P_i$, of 
$(X_i^+, X_{i+1}^-) \times (-\infty, +\infty) \times 
(-\infty, +\infty)$.  This can done in $O(N/B)$ I/Os by scanning the 
sorted list of $P$ along the $x$-axis.

\item Fix an $i$ in the range  $[1,\alpha{-}1]$. 
Sort the points in $P_i$ 
along the $y$-axis. For all $1\leq j \leq \alpha{-}1$, 
compute the conflict list, say $P_{ij}$, of $(X_i^+, X_{i+1}^-) \times (Y_j^+, Y_{j+1}^-) \times 
(-\infty, +\infty)$. Perform this step for all values of $i$ in the range 
$[1,\alpha{-}1]$. The number of I/Os performed will be 
$\sum_{i=1}^{\alpha} O\left(\frac{|P_i|}{B}\log_{M/B}\frac{|P_i|}{B} + \alpha \right)
=O\left(\frac{N}{B}\log_{M/B}\frac{N}{B} + \alpha^2\right)$.

\item Finally, fix two integers $i$ and $j$ in the range $[1,\alpha{-}1]$. 
Sort the points in $P_{ij}$ along the $z$-axis. 
For all $1\leq k \leq \alpha{-}1$, 
compute the conflict list, say $P_{ij}$, of $(X_i^+, X_{i+1}^-) \times (Y_j^+, Y_{j+1}^-) \times 
(Z_k^+, Z_{k+1}^-)$. Perform this step for all values of $i$ and $j$ in the range 
$[1,\alpha{-}1]$. The number of I/Os performed will be 
$\sum_{i=1}^{\alpha} \sum_{j=1}^\alpha O\left(\frac{|P_{ij}|}{B}\log_{M/B}\frac{|P_{ij}|}{B} + \alpha \right)
=O\left(\frac{N}{B}\log_{M/B}\frac{N}{B} + \alpha^3\right)$.
\end{itemize}

\vspace{0.1 in}
\noindent\textbf{Recursive step.}
%\paragraph{Recursive step.} 
At the root node, the sets 
$\cX, \cY, \text{ and } \cZ$, and the conflict list (along
with its size) of each cell is stored. 
Next, recursively construct the data structure based on the following 
$3(\alpha-1)$ subsets of $P$: for all $1\leq i \leq \alpha{-}1$,
(a)  the conflict list of 
$(X_i^+, X_{i+1}^-) \times (-\infty, +\infty) \times 
(-\infty, +\infty)$, (b)  the conflict list of 
$(-\infty, +\infty) \times (Y_j^+, Y_{j+1}^-) \times (-\infty, +\infty)$, 
and (c) the conflict list of 
$(-\infty, +\infty) \times  (-\infty, +\infty) \times (Z_k^+, Z_{k+1}^-)$. The recursion stops when the number of points fall below $B\alpha^3$.

\begin{lemma}\label{lem:rep-cnt}
By setting $\alpha=(N/B)^{\gamma/c}$ (for a sufficiently large constant $c$),  
the number of I/Os needed to build the 
data structure is $O_{\gamma}(sort(N))$.
\end{lemma}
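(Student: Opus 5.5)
We prove Lemma~\ref{lem:rep-cnt} by writing a recurrence for the construction cost and unrolling it over the recursion tree. Let $T(N)$ be the cost of building the structure on $N$ points with $\alpha=(N/B)^{\gamma/c}$. At a node with $n$ points, the root-node step costs $O\left(\frac{n}{B}\log_{M/B}\frac{n}{B}+\alpha^3\right)$ I/Os: three sorts plus the scanning and partitioning described above, with $\alpha^3$ accounting for cells that may each waste one partial block. The node then recurses on $3(\alpha-1)$ subsets, each of size $n/(\alpha-1)$. This gives
\[
T(n)\le 3(\alpha-1)\,T\!\left(\frac{n}{\alpha-1}\right)+O\!\left(\frac{n}{B}\log_{M/B}\frac{n}{B}+\alpha^3\right),
\]
with $T(n)=O(n/B)$ once $n<B\alpha^3$. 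Here $\alpha$ should be fixed from the top-level $N$, or recomputed per node as $(n/B)^{\gamma/c}$; I would fix it per node and check that the argument still goes through.

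\textbf{The $\alpha^3$ term.} Because the recursion stops at $n<B\alpha^3$, every internal node has $n\ge B\alpha^3$. Hence $\alpha^3\le n/B$, and the additive term is dominated by the sorting term.

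\textbf{Summing over levels.} At depth $d$ the total number of points is $3^d N$, since each point is copied into three children per level. I will bound the recursion depth $D$.
\begin{itemize}
\item With $\alpha$ recomputed per node, sizes evolve as $n\mapsto n^{1-\gamma/c}B^{\gamma/c}$ in units of $B$, i.e.\ $\log(n/B)\mapsto(1-\gamma/c)\log(n/B)$. This decreases only geometrically in the exponent, so the depth is $\Theta(\log\log)$, and $3^D$ is then polylogarithmic rather than constant. That is bad.
\item Instead I fix $\alpha=(N/B)^{\gamma/c}$ globally for the whole structure. Each level divides sizes by $\alpha-1\approx (N/B)^{\gamma/c}$, so after $D=O(c/\gamma)$ levels sizes drop below $B\alpha^3$. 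The depth is therefore a constant depending on $\gamma$.
\end{itemize}

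\textbf{Total cost.} With $D=O(c/\gamma)$ levels, the total work is
\[
\sum_{d\le D} 3^d\,\frac{N}{B}\log_{M/B}\frac{N}{B} \;=\; 3^{O(1/\gamma)}\,\mathrm{sort}(N) \;=\; O_\gamma(\mathrm{sort}(N)).
\]
The leaves add $3^D\cdot O(N/B)$, which is within the same bound. Storing the cell sizes and conflict lists costs nothing beyond this, since the conflict lists are exactly the partitioned outputs.

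\textbf{Main obstacle.} The delicate step is choosing between a global and a per-node $\alpha$, because the query bound presumably wants a per-node $\alpha$ that shrinks. I would verify that the global choice still suffices for the later query bound of $O((N/B)^\gamma)$.
\begin{itemize}
\item At a node of size $n$, a query touches $O(\alpha^3)$ cells, or $O(\alpha^2)$ cells plus recursion into 3 children.
\item The query recursion also has depth $D=O(1)$, so the query cost is $3^D\alpha^{3}$ plus output, which is $(N/B)^{3\gamma/c}\cdot O(1)\le (N/B)^\gamma$ for large $c$.
\end{itemize}
So the global choice works for both bounds. The remaining care is handling $N$ too small relative to $B$ (when $\alpha<2$, build the structure trivially), which is routine.
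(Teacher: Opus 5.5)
Your proposal is correct and follows essentially the same route as the paper. With a single global $\alpha=(N/B)^{\gamma/c}$ (which the paper uses implicitly through $N/\alpha^{h}=B\alpha^3$), the recursion has constant depth $h=c/\gamma-3$. Each point lies in $3^h=O_\gamma(1)$ nodes, and each node's cost $O(sort(n)+\alpha^3)$ is $O(sort(n))$ because $n\ge B\alpha^3$ at internal nodes, so your level-by-level sum matches the paper's per-point amortization and gives $O_\gamma(sort(N))$.
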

\begin{proof}
Let $h$ be the height of the data structure. The number of points associated 
with a node at depth $\ell$ will be $N/\alpha^{\ell}$. Therefore,
\[N/\alpha^{h} = B\alpha^3 \implies \frac{N}{B}=\alpha^{h+3} \implies h=\frac{c}{\gamma} -3. \]
The number of I/Os needed to build the root node is 
$O(sort(N) + \alpha^3)=O(sort(N))$. Therefore, 
the amortized I/Os needed per point in $P$ will be $O(sort(N)/N)$.
Since a point belongs to $3^h=O_{\gamma}(1)$ nodes in the tree, 
the amortized I/Os needed per point in the entire tree will be 
$O_{\gamma}(sort(N)/N)$. Hence, the overall number of I/Os
is $O_{\gamma}(sort(N))$.
\end{proof}

\begin{comment}
\begin{proof}
We will recurse only twice and stop. Then the number of points
at a node at the leaf will be $\frac{N}{(N/B)^{2/3}}=N^{1/3}B^{2/3} \leq \frac{\vare}{c}\sqrt{NB}$ (by assuming $N/B=\Omega_{\vare}(1)$, because otherwise the query can be answered trivially via
a linear-scan). The number of I/Os needed to build the 
root node is $O(sort(N) + \alpha^3)=O(sort(N))$. The remaining proof
follows from the
same amortization argument as in the proof of Lemma~\ref{lem:rep-cnt}.
\end{proof}
\end{comment}

\vspace{0.1 in}
\noindent\textbf{Query algorithm.} 
%\paragraph{Query algorithm.} 
Assume that the query region of the 
form $q=[q_x,\infty) \times [q_y,\infty) \times [q_z,\infty)$. 
At the root node of the data structure, let $C_q$ be the cells 
that lie completely inside $q$. For the reporting query, 
we report the conflict list of all the cells in $C_q$.
%For the counting query and the additive error query, 
Let $n(C_q)$ be the total size of
the conflict list of the cells in $C_q$ (this value can 
be precomputed and stored at each cell). 

Let $(i,j,k)$ be the 
cell containing the apex point $(q_x,q_y,q_z)$. Define $R_i=(X_i^+, X_{i+1}^-) \times (-\infty, +\infty) \times 
(-\infty, +\infty)$,  
$R_j=(-\infty, +\infty) \times (Y_j^+, Y_{j+1}^-) \times (-\infty, +\infty)$, 
and  
$R_k=(-\infty, +\infty) \times  (-\infty, +\infty) \times (Z_k^+, Z_{k+1}^-)$.
Then we recursively query the children corresponding to $R_i, R_j, \text{ and } R_k$
with query regions $q \cap R_1, (q\setminus R_1) \cap R_2,  \text{ and } (q\setminus R_1 \setminus R_2) \cap R_3$, respectively. (This is done to avoid reporting duplicates or counting a point multiple times.) See Figure~\ref{fig:root-node}(b). When a leaf node is visited, 
then a linear scan 
of the pointset is performed.
The output of the counting query
will be the sum of (a) $n(C_q)$'s over all the nodes visited, 
and (b) the number of points in the leaf nodes that lie
inside $q$.

\begin{lemma}
The number of I/Os to perform  a 
3-D dominance reporting query and a 
3-D dominance counting query on $N$ points 
is $O((N/B)^{\gamma} + t/B)$ and $O((N/B)^{\gamma})$, respectively, 
where $t$ is the number of points reported.
\end{lemma}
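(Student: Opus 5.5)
The plan is to bound the query cost by a recurrence over the tree, using $\alpha=(N/B)^{\gamma/c}$ and the height bound $h=c/\gamma-3=O_\gamma(1)$ from Lemma~\ref{lem:rep-cnt}.

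\textbf{Cost at one node.} Fix a node storing $n$ points. At this node the query algorithm does three things.
\begin{enumerate}
\item It locates the cell $(i,j,k)$ that contains the apex point. This needs a scan of the stored sets $\cX,\cY,\cZ$, costing $O(\alpha/B+1)$ I/Os.
\item It identifies the cells in $C_q$. Since $q$ is a dominance range, these are exactly the cells $(i',j',k')$ with $i'>i$, $j'>j$, $k'>k$. There are at most $\alpha^3$ of them.
\item For counting, it sums the stored values $n(C)$ over $C_q$, costing $O(\alpha^3)$ I/Os (or $O(\alpha^3/B+1)$ if the sizes are laid out contiguously). For reporting, it additionally outputs each conflict list. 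A list of size $s_C$ costs $O(1+s_C/B)$ I/Os, so the total is $O(\alpha^3+t_v/B)$, where $t_v$ is the number of points reported at this node.
\end{enumerate}
Each of the three points is therefore $O(\alpha^3+t_v/B)$ I/Os per node. The three recursive query regions are disjoint by construction, so the $t_v$ summed over all visited nodes is at most $t$.

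\textbf{Number of visited nodes.} Each internal node spawns exactly three recursive calls, one per slab $R_i,R_j,R_k$. Hence at most $3^{h}=O_\gamma(1)$ nodes are visited in total. Each has $\alpha\le(N/B)^{\gamma/c}$, so the non-output work is $O_\gamma(3^h\alpha^3)=O((N/B)^{3\gamma/c})$.

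\textbf{Leaves.} A leaf holds fewer than $B\alpha^3$ points, so its linear scan costs $O(\alpha^3)$ I/Os. There are at most $3^h$ such leaves.

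Summing these contributions gives $O_\gamma\bigl((N/B)^{3\gamma/c}\bigr)+O(t/B)$. Choosing $c\ge 3$ bounds this by $O((N/B)^{\gamma}+t/B)$ for reporting and $O((N/B)^{\gamma})$ for counting. Counting needs no output term, so its bound is just the first summand.

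\textbf{Main obstacle: the $t/B$ term.} The delicate point is to charge output correctly, so that reporting from many small cells does not add $\Theta(\alpha^3)$ extra I/Os beyond $O((N/B)^\gamma)$ at each node. This is absorbed as above, because $\alpha^3=(N/B)^{3\gamma/c}$ is dominated by $(N/B)^\gamma$ at every one of the $O_\gamma(1)$ visited nodes.

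One must also check that restricting the query to $q\cap R_i$ in a child keeps it a dominance-type range in the child's point set. The restricted region is an orthant intersected with slabs, and the child stores exactly the points in that slab, so the slab constraint is vacuous there and the child receives a genuine dominance query. The same holds for the set differences: $(q\setminus R_i)\cap R_j$ within the $R_j$ child just raises the $x$-lower bound to $X_{i+1}$, which is still of dominance form.
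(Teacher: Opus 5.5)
Your proposal is correct and follows the paper's argument: $O(\alpha^3)$ non-output I/Os at each visited node (leaves included, since they hold fewer than $B\alpha^3$ points), and $O_\gamma(1)$ visited nodes because each node spawns three recursive calls and the height is $c/\gamma-3$, giving $O_\gamma(\alpha^3)=O((N/B)^\gamma)$ plus $O(t/B)$ for output. You also check that the three recursive regions are disjoint, so output is charged once, and that each restricted query is still a dominance query in its child; the paper leaves both implicit, and your arguments for them are sound.
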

\begin{proof}
Ignoring the I/Os needed to report the points,
the I/Os performed at any non-leaf or a leaf node is $O(\alpha^3)$.
Since the number of nodes visited is $3^h=O_{\gamma}(1)$, the 
total  query cost is $O_{\gamma}(\alpha^3)=O((N/B)^{\gamma})$.
\end{proof}

\subsection{Proof of Theorem~\ref{thm:add-approx}}
For the offline additive error counting query, data
structure will simply be the 
root node as constructed above. Importantly, we 
do not recurse. By setting $\alpha=\frac{3}{\vare}\left(\frac{N}{B}\right)^{1/3}$,
the number of I/Os needed to build the 
data structure is $O(sort(N) + \alpha^3)=O_{\vare}(sort(N))$.

Let $Q$ be the set of 3-D dominance queries. 
The approach used to compute the conflict list of each cell 
(w.r.t. $P$) can be adapted to compute in $O_{\vare}(sort(N) + sort(|Q|)$
I/Os the conflict list of each cell (w.r.t. $Q$, i.e., locating
the cell containing the apex point of each query in $Q$).
Then for each query $q\in Q$, we will report $n(C_q)$ as the estimate.
It is easy to verify that $|P \cap q| - n_q \leq 
\frac{3N}{\alpha}=\vare N^{2/3}B^{1/3}$.

\vspace{0.1 in}
\noindent\textbf{Remark.} 
%\paragraph{Remark.} 
This data structure 
can be extended to handle smaller values of error, such as $\sqrt{NB}$, if we recurse for a constant number of 
levels.

%%%%%%%%%%%%%%%%%%%%
%%CONCLUSION
%%%%%%%%%%%%%%%%%%%%

\section{Conclusion and future work}\label{app:future}
We hope that these results will lead to further 
    work on 3-D orthogonal range searching and the related 
    problems in the I/O-model. We finish by mentioning some 
    future directions:
    \begin{itemize}
    	\item An immediate direction is to design an optimal
	algorithm for the offline 3-D dominance approximate counting problem.
	\item Studying 3-D dominance reporting and  3-D dominance approximate 	counting in the data structure setting where the queries arrive one after the other. Currently, efficient constructions of these data structures are not known.
        \item For 3-D orthogonal range reporting, 
        the existing state-of-the art
    data structure in the I/O-model~\cite{aal09}  is sophisticated
    and requires construction of ``colored'' version of shallow cuttings.
    Can their data structure be 
        constructed in  $O\left(\frac{|P|}{B}\left(\log_{M/B}\frac{|P|}
    {B}\right)^{O(1)} \right)$ I/Os? 
    \item For the offline version of 3-D orthogonal range 
        reporting, what are the right I/O-bounds to aim for? 
        Unlike 2-D, it might not be possible to match the sorting I/O-bound.
        
        \item Computing skyline or maximal points of $P$ in 3-D in the I/O model 
        has received attention~\cite{st11a, hst+13}. Skyline points and the $k$-level 
        shallow cuttings for dominance ranges seem related in their structures.
        Is there a common framework to design algorithms for both 
        the problems? The existing I/O-algorithms for maximal points
        use a divide-and-conquer approach.
        \item Shallow cuttings for 3-D halfspace ranges is another
        important problem in computational geometry. As mentioned
        in the Introduction, Chan and Tsakalidis~\cite{ct15} designed
        an optimal-cost internal memory algorithm for the construction
        of shallow cuttings for halfspaces in 3-D. Some 
        aspects of their algorithm are well suited to adapt to the I/O-model (such as the hierarchical construction where there
        is flexibility to choose the factor by which the levels fall).
        However, it looks
        non-trivial to I/O-efficiently adapt some of the other steps.

\end{itemize}

\paragraph{Acknowledgments.} The authors thank the reviewers whose detailed feedback for the conference version 
of the paper helped in improving the presentation of the paper.

%%
%% Bibliography
%%

%% Please use bibtex, 

%\bibliography{ref}

%Used for arxiv
\bibliographystyle{plain}
\bibliography{ref}

@string{cg="{Computational Geometry}"}

@string{esa="{ESA}"}

@string{focs = "{FOCS}"}

@string{icalp="{ICALP}"}

@string{icdt="{ICDT}"}

@string{pods="{PODS}"}

@string{sigmod="{SIGMOD}"}

@string{socg="{S}o{CG}"}

@string{soda="{SODA}"}

@string{stoc="{STOC}"}

@string{vldbj="{VLDB J.}"}

@string{esa="Proceedings of European Symposium on Algorithms ({ESA})"}

@string{focs="Proceedings of Annual {IEEE} Symposium on Foundations of Computer Science ({FOCS})"}

@string{icalp="Proceedings of International Colloquium on Automata, Languages and Programming ({ICALP})"}

@string{icdt="Proceedings of International Conference on Database Theory ({ICDT})"}

@string{pods="Proceedings of ACM Symposium on Principles of Database Systems ({PODS})"}

@string{sigmod="Proceedings of ACM Management of Data ({SIGMOD})"}

@string{socg="Proceedings of Symposium on Computational Geometry ({S}o{CG})"}

@string{soda="Proceedings of the Annual {ACM-SIAM} Symposium on Discrete Algorithms ({SODA})"}

@string{stoc="Proceedings of {ACM} Symposium on Theory of Computing ({STOC})"}

@string{vldbj="The {VLDB} Journal"}

@inproceedings{aaefv98,
  author       = {Pankaj K. Agarwal and
                  Lars Arge and
                  Jeff Erickson and
                  Paolo Giulio Franciosa and
                  Jeffrey Scott Vitter},
  title        = {Efficient Searching with Linear Constraints},
  booktitle    = pods,
  year         = {1998},
  pages        = {169--178}
}

@inproceedings{aal09,
  author    = {Peyman Afshani and
               Lars Arge and
               Kasper Dalgaard Larsen},
  title     = {Orthogonal Range Reporting in Three and Higher Dimensions},
  booktitle = focs,
  year      = {2009},
  pages     = {149-158}
}

@article{ahz10,
  author    = {Peyman Afshani and
               Chris H. Hamilton and
               Norbert Zeh},
  title     = {A general approach for cache-oblivious range reporting and approximate
               range counting},
  journal   = cg,
  volume    = {43},
  number    = {8},
  pages     = {700--712},
  year      = {2010}
}

@article{bgo+96,
  author    = {Bruno Becker and
               Stephan Gschwind and
               Thomas Ohler and
               Bernhard Seeger and
               Peter Widmayer},
  title     = {An Asymptotically Optimal Multiversion {B}-Tree},
  journal   = vldbj,
  volume    = {5},
  number    = {4},
  year      = {1996},
  pages     = {264-275}
}

@inproceedings{hst+13,
  author    = {Xiaocheng Hu and
               Cheng Sheng and
               Yufei Tao and
               Yi Yang and
               Shuigeng Zhou},
  title     = {Output-sensitive Skyline Algorithms in External Memory},
  booktitle = soda,
  year      = {2013},
  pages     = {887--900}
}

@inproceedings{ks99,
  author    = {Kothuri Venkata Ravi Kanth and
               Ambuj K. Singh},
  title     = {Optimal Dynamic Range Searching in Non-replicating Index
               Structures},
  booktitle = icdt,
  year      = {1999},
  pages     = {257-276}
}

@inproceedings{p08,
  author    = {Mihai Patrascu},
  title     = {Succincter},
  booktitle = focs,
  year      = {2008},
  pages     = {305-313}
}

@inproceedings{p08b,
  author    = {Peyman Afshani},
  title     = {On Dominance Reporting in {3D}},
  booktitle = esa,
  year      = {2008},
  pages     = {41-51}
}

@inproceedings{r81,
  author    = {John T. Robinson},
  title     = {The {K-D-B}-Tree: A Search Structure For Large Multidimensional
               Dynamic Indexes},
  booktitle = sigmod,
  year      = {1981},
  pages     = {10-18}
}

@inproceedings{r15,
  author    = {Saladi Rahul},
  title     = {Improved Bounds for Orthogonal Point Enclosure Query and Point Location
               in Orthogonal Subdivisions in $\mathbb{R}^3$},
  booktitle = soda,
  pages     = {200--211},
  year      = {2015}
}

@inproceedings{rt15,
  author    = {Saladi Rahul and
               Yufei Tao},
  title     = {On Top-k Range Reporting in 2D Space},
  booktitle = pods,
  pages     = {265--275},
  year      = {2015}
}

@inproceedings{st11a,
  author    = {Cheng Sheng and
               Yufei Tao},
  title     = {Finding Skylines in External Memory},
  booktitle = pods,
  year      = {2011},
  pages			= {107-116}
}

@inproceedings{t12b,
  author    = {Yufei Tao},
  title     = {Indexability of 2D range search revisited: constant redundancy
               and weak indivisibility},
  booktitle = pods,
  year      = {2012},
  pages     = {131-142}
}

@article{v06,
  author    = {Jeffrey Scott Vitter},
  title     = {Algorithms and data structures for external memory},
  journal   = {{F}oundation and {T}rends in {T}heoretical {C}omputer {S}cience},
  volume    = {2},
  number    = {4},
  pages     = {305-474},
  year      = {2006}
}

@inproceedings{rt16,
  author       = {Saladi Rahul and
                  Yufei Tao},
  title        = {Efficient Top-k Indexing via General Reductions},
  booktitle    = pods,
  pages        = {277--288},
  year         = {2016}
}

@inproceedings{nr23,
  author       = {Yakov Nekrich and
                  Saladi Rahul},
  title        = {4D Range Reporting in the Pointer Machine Model in Almost-Optimal
                  Time},
  booktitle    = soda,
  pages        = {1862--1876},
  year         = {2023}
}

@inproceedings{bfgmmt14,
  author       = {Michael A. Bender and
                  Martin Farach{-}Colton and
                  Mayank Goswami and
                  Dzejla Medjedovic and
                  Pablo Montes and
                  Meng{-}Tsung Tsai},
  title        = {The Batched Predecessor Problem in External Memory},
  booktitle    = esa,
  pages        = {112--124},
  year         = {2014}
}

@inproceedings{act14,
  author       = {Peyman Afshani and
                  Timothy M. Chan and
                  Konstantinos Tsakalidis},
  title        = {Deterministic Rectangle Enclosure and Offline Dominance Reporting
                  on the {RAM}},
  booktitle    = icalp,
  pages        = {77--88},
  year         = {2014}
}

@inproceedings{vv96,
  author       = {Darren Erik Vengroff and
                  Jeffrey Scott Vitter},
  title        = {Efficient 3-D Range Searching in External Memory},
  booktitle    = stoc,
   pages        = {192--201},
   year         = {1996}
}

@inproceedings{at14,
  author       = {Peyman Afshani and
                  Konstantinos Tsakalidis},
  title        = {Optimal Deterministic Shallow Cuttings for 3D Dominance Ranges},
  booktitle    = soda,
  pages        = {1389--1398},
  year         = {2014}
}

@inproceedings{cnrt18,
  author       = {Timothy M. Chan and
                  Yakov Nekrich and
                  Saladi Rahul and
                  Konstantinos Tsakalidis},
  title        = {Orthogonal Point Location and Rectangle Stabbing Queries in 3-d},
  booktitle    = icalp,
   pages        = {31:1--31:14},
  year         = {2018}
}

@inproceedings{ct15,
  author       = {Timothy M. Chan and
                  Konstantinos Tsakalidis},
  title        = {Optimal Deterministic Algorithms for 2-d and 3-d Shallow Cuttings},
  booktitle    = {International Symposium on Computational Geometry (SoCG)},
  pages        = {719--732},
  year = {2015}
}

@inproceedings{gi98,
  author       = {Roberto Grossi and
                  Giuseppe F. Italiano},
  title        = {Efficient cross-trees for external memory},
  booktitle    = {External Memory Algorithms},
  volume       = {50},
  pages        = {87--106},
  publisher    = {{DIMACS/AMS}},
  year         = {1998}
}
%Used for arxiv

\newpage 

\appendix
%%%%%APPENDIX%%%%%%%%

\end{document}